\documentclass[11pt]{amsart}
\usepackage{fullpage}
\usepackage{graphicx} 
\usepackage{aliascnt}
\usepackage{amsmath,amssymb,amsfonts}

\newcommand{\SU}{{\mathsf{SU}}}
\newcommand{\PU}{{\mathsf{PU}}}
\newcommand{\SO}{{\mathsf{SO}}}
\newcommand{\Sp}{{\mathsf{Sp}}}
\newcommand{\Cl}{{\mathsf{Cl}}}

\newcommand{\su}{{\mathfrak{su}}}
\newcommand{\so}{{\mathfrak{so}}}
\newcommand{\symgp}{{\mathsf{S}}}

\usepackage{xcolor}
\definecolor{darkblue}{rgb}{0.,0.,0.4}
\definecolor{darkred}{rgb}{0.5,0.,0.}
\definecolor{darkorange}{rgb}{1.0, 0.55, 0.0}
\usepackage[pdftex,colorlinks=true,linkcolor=blue,citecolor=darkred,urlcolor=darkblue]{hyperref}

\usepackage{amsmath,amsthm,amsfonts,amssymb,braket}
\usepackage{tikz-cd}
\usepackage{fullpage}
\usepackage{mathtools}
\usepackage[numbers]{natbib}
\usepackage[normalem]{ulem}

\numberwithin{equation}{section}

\newtheorem{lemma}[equation]{Lemma}
\newtheorem{theorem}[equation]{Theorem}
\newtheorem{conjecture}[equation]{Conjecture}
\newtheorem{corollary}[equation]{Corollary}

\newtheorem{proposition}[equation]{Proposition}
\newtheorem{proposition-definition}[equation]{Proposition-Definition}
\theoremstyle{definition}
\newtheorem{definition}[equation]{Definition}
\newtheorem{remark}[equation]{Remark}

\newcommand{\CC}{{\mathbb{C}}}
\newcommand{\RR}{{\mathbb{R}}}
\newcommand{\ZZ}{{\mathbb{Z}}}
\newcommand{\FF}{{\mathbb{F}}}

\newcommand{\calG}{{\mathcal{G}}}
\newcommand{\calH}{{\mathcal{H}}}

\newcommand{\calP}{{\mathcal{P}}}
\newcommand{\calQ}{{\mathcal{Q}}}

\newcommand{\calS}{{\mathcal{S}}}

\newcommand{\calZ}{{\mathcal{Z}}}

\newcommand{\one}{{\mathbf{1}}}
\newcommand{\rd}{{\mathrm{d}}}

\newcommand{\Asterisk}{{\scalebox{1.7}{\raisebox{-0.2ex}{$\ast$}}}}
\DeclareMathOperator{\convo}{\Asterisk}

\newcommand{\ii}{{\mathrm{i}}}

\DeclareMathOperator{\Tr}{{Tr}}

\DeclareMathOperator*{\diag}{{diag}}

\DeclareMathOperator{\End}{{End}}

\DeclareMathOperator{\Ex}{\mathbb{E}}
\DeclareMathOperator*{\avg}{\Ex}

\DeclarePairedDelimiter{\norm}{\lVert}{\rVert}
\DeclarePairedDelimiter{\abs}{|}{|}

\title{Random unitary circuits with constant spectral gap}
\author{Tim Baer}
\author{Jeongwan Haah}
\address{Stanford University, Stanford, California, USA}

\begin{document}
\begin{abstract}
    We prove constant lower bounds for the spectral gap of the following random walks 
    on unitary groups~$\mathsf{SU}(2^n)$ on $n$ qubits.
    (i) Random Pauli Rotation:
    choose an $n$-qubit Pauli operator~$P$ 
    and an angle~$\theta \in \mathbb R / 2\pi \mathbb Z$, both uniformly at random,
    and apply $e^{\mathrm i \theta P}$.
    (ii) Brickwork Random Unitary Circuit:
    choose $n-1$ unitaries $U_{i}$ uniformly at random from~$\mathsf{SU}(4)$ independently,
    and apply $U_{2j-1}$ on two qubits~$2j-1, 2j$ and then 
    $U_{2j}$ on two qubits~$2j, 2j+1$.
    Importantly, the spectral gaps are independent of~$n$ and apply 
    for all finite dimensional unitary representations of~$\mathsf{SU}(2^n)$ uniformly,
    including those that appear in unitary $t$-designs.
    We also prove analogous constant gap results for Clifford unitaries,
    which are indispensable for our result on Brickwork Random Unitary Circuit.
\end{abstract}

\maketitle
\tableofcontents

\section{Introduction and results}

Random walks on groups are studied in quantum information theory, computer science, and probability
as they provide efficient ways to generate (approximately) Haar random distributions.
For example, 
brickwork random unitary circuits~\cite{brandao2016local,brandao2021models,brown2018second,Fisher_2023} 
are among the simplest ensembles of random walks
that lie at the intersection of quantum many-body physics, quantum chaos, 
and quantum computing.
We are concerned with the rate of convergence of random walks on unitary groups
measured by the following.
\begin{definition}\label{defn:spectralgap}
    Let $G$ be a finite or compact Lie group 
    with the Haar probability measure~$\mu(G)$, 
    $\nu$ be a probability measure on~$G$,
    and $\rho$ be a finite dimensional unitary representation of $G$.
    We define
    \begin{align*}
        &\text{moment operator} &M(\nu,\rho,G) &= \avg_{g\sim \nu} \rho(g),\\
        &\text{essential norm} &g(\nu,\rho, G) &= \norm{M(\nu,\rho,G) - M(\mu(G), \rho, G)}_\infty,\\
        &\text{spectral gap of $\nu$ in $\rho$} & \Delta(\nu,\rho,G) &= 1 - g(v,\rho, G),\\
        &\text{spectral gap of $\nu$} & \Delta(\nu,G) &= \inf_\rho \Delta(\nu,\rho, G),
    \end{align*}
where the norm is the operator norm and $\rho$ in the infimum 
ranges over all finite dimensional unitary representations of~$G$.
When the group $G$ is evident, we may omit it from the moment operator notation.
\end{definition}

In this paper we consider~$\SU(2^n)$ and its subgroups.
This Lie group is identified with the unitary group 
acting on $n$ qubits~$(\CC^{2})^{\otimes n}$.
The random walks we study in this paper are the following.
\begin{definition}
    Let $\calP_n = \{\one_2, \sigma^x, \sigma^y, \sigma^z\}^{\otimes n} \setminus \{\one_2^{\otimes n}\}$
    be a set of $n$-qubit non-identity Pauli operators.
    The (special) Clifford group denoted by~$\Cl(n)$ 
    is the normalizer in~$\SU(2^n)$ 
    of the (special) Pauli group~$\langle \calP_n, \ii \one_{2^n} \rangle \cap \SU(2^n)$.%
    \footnote{A conventional Pauli group contains elements with determinant different from~$1$.
    We take only those with determinant~$1$.}
    A step in our random walk is as follows,
    which specifies a probability distribution~$\nu$ on a designated compact group.
    \begin{itemize}
        
        \vspace{1ex}
        \item Random Pauli Rotation $\nu_{\mathsf{RPR}}$ on $\SU(2^n)$~\cite{haah2025efficient}.
        \vspace{1ex}
            
            Choose a Pauli operator $P \in \calP_n$ uniformly at random
            and choose an angle $\theta \in \RR/2\pi \ZZ$ uniformly at random.
            Apply $e^{\ii\theta P} \in \SU(2^n)$.
            This walk is equivalent to a single-qubit rotation $e^{\ii \theta \sigma^z_1}$
            with $\theta \sim \mu(\RR/2\pi\ZZ)$
            conjugated by a uniformly random Clifford.
        
        \vspace{1ex}
        \item Random Pauli Clifford Rotation $\nu_{\mathsf{RPCR}}$ on $\Cl(n)$.
        \vspace{1ex}
            
            Choose a Pauli operator $P \in \calP_n$ uniformly at random
            and choose an angle $\theta \in \{ m \frac \pi 4 ~:~ m = 0,1,2,\ldots, 7\}$ uniformly at random.
            Apply $e^{\ii\theta P} \in \Cl(2^n)$.
            This walk is equivalent to a single-qubit rotation $e^{\ii \theta \sigma^z_1}$
            with $\theta$ a multiple of~$\pi/4$,
            conjugated by a uniformly random Clifford.

        \vspace{1ex}
        \item Brickwork Random Unitary Circuit $\nu_{\mathsf{BRUC}}$ on $\SU(2^n)$.
        \vspace{1ex}

            Choose $n-1$ unitaries $U_1, U_2, \ldots, U_{n-1}$ uniformly at random from~$\SU(4)$ independently.
            Apply $U_{2j-1}$ on two qubits~$2j-1, 2j$ and then 
            $U_{2j}$ on two qubits~$2j, 2j+1$ for all~$j$.
            This is a unitary circuit of depth~$2$.
            
        \vspace{1ex}
        \item Brickwork Random Clifford Circuit $\nu_{\mathsf{BRCC}}$ on $\Cl(n)$.
        \vspace{1ex}

            Choose $n-1$ Clifford unitaries $U_{i}$ uniformly at random from~$\Cl(2)$ independently.
            Apply $U_{2j-1}$ on two qubits~$2j-1, 2j$ and then 
            $U_{2j}$ on two qubits~$2j, 2j+1$ for all~$j$.
            This is a Clifford circuit of depth~$2$.
        
    \end{itemize}
\end{definition}

\begin{theorem}\label{thm:main}
    The spectral gaps are bounded as the following.
    \begin{enumerate}
        \item $\Delta(\nu_{\mathsf{RPR}}, \SU(2^n)) > 2^{-4}$ (see~\ref{thm:RPR}.)
        \item $\Delta(\nu_{\mathsf{RPCR}}, \Cl(n)) > 2^{-3}$ (see~\ref{thm:RPCR}.)
        \item $\Delta(\nu_{\mathsf{BRUC}}, \SU(2^n)) > 2^{-53}$ (see~\ref{thm:BRUC}.)
        \item $\Delta(\nu_{\mathsf{BRCC}}, \Cl(n)) > 2^{-7}$ (see~\ref{thm:BRCC}.)
    \end{enumerate}
\end{theorem}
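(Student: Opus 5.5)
The plan is to handle the two ``Pauli rotation'' walks first, since they are conjugation-invariant under the Clifford group. The brickwork results then follow by comparison arguments.

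\textbf{Random Pauli Rotation.} Since $\nu_{\mathsf{RPR}}$ is the law of $C e^{\ii\theta\sigma^z_1}C^\dagger$ with $C$ uniform in $\Cl(n)$, its moment operator in any irreducible representation $\rho$ is
\[
M = \avg_{P}\, \Pi_P ,
\]
where $\Pi_P=\avg_\theta \rho(e^{\ii\theta P})$ is the orthogonal projector onto the vectors fixed by the one-parameter subgroup generated by $P$. Hence $M$ is positive semidefinite, and we must bound the top eigenvalue of $M$ on the orthogonal complement of the $\SU(2^n)$-invariants.

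Let $v$ be a unit vector with $\langle v, Mv\rangle \ge 1-\epsilon$. Then for most $P$, $v$ is almost fixed by $e^{\ii\theta P}$. Since the Paulis form a group up to phase and commute or anticommute, pairs $P,Q$ chosen at random anticommute with probability about $1/2$. The anticommuting pairs $P,Q$ generate an $\SU(2)$ containing $e^{\ii\theta PQ}$-type rotations. I would exploit a Cayley-graph structure on $\calP_n$: the ``anticommute'' relation makes $\calP_n$ a strongly regular graph with constant expansion. A Lie-algebra commutator argument, $[P,Q]\propto PQ$, transfers near-invariance along edges.

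Concretely, I would compare $M$ with $M^2$ or with the averaged product $\Pi_P\Pi_Q\Pi_P$ over anticommuting pairs. Within the $\SU(2)$ generated by $P,Q$, the two projectors onto $e^{\ii\theta P}$- and $e^{\ii\theta Q}$-invariants intersect exactly in the $\SU(2)$-invariants, at a fixed angle. Every nontrivial $\SU(2)$ irrep has $\|\Pi_P\Pi_Q\|\le 1/2$ or so, uniformly in the spin, since the zero-weight vectors for two orthogonal axes overlap at most like Legendre values $|P_\ell(0)|\le 1/2$. Combining this with the expansion of the anticommutation graph yields a constant gap such as $2^{-4}$.

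\textbf{Random Pauli Clifford Rotation.} The same scheme applies to $\nu_{\mathsf{RPCR}}$, with $\SU(2)$ replaced by the finite group generated by $e^{\ii\pi P/4},e^{\ii\pi Q/4}$ (a binary octahedral-type group). The angle bound now comes from a finite computation over its irreps, which I expect gives a slightly better constant, $2^{-3}$.

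\textbf{Brickwork circuits.} For $\nu_{\mathsf{BRCC}}$, I would first show that a constant number of brickwork layers, together with averaging over local Cliffords, dominates a constant multiple of $\nu_{\mathsf{RPCR}}$ in the sense of positive operators. A uniformly random two-qubit Clifford layer already implements a local random Pauli conjugation. Since spectral gaps are monotone under such comparisons (Dirichlet-form comparison), the gap transfers with the loss $2^{-7}$.

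For $\nu_{\mathsf{BRUC}}$, I would combine three ingredients:
\begin{itemize}
\item the brickwork Clifford result;
\item the Clifford invariance of $\nu_{\mathsf{RPR}}$, via a decomposition of the $\SU(2^n)$-moment operator into a $\Cl(n)$-average followed by a rotation;
\item the fact that $\SU(4)$-Haar averages contain both $\Cl(2)$-averages and local $e^{\ii\theta\sigma^z}$ rotations.
\end{itemize}
The gap of the brickwork walk on $\Cl(n)$ lets the walk approximate the Clifford average in a constant number of steps. The gap is then assembled through a ``detectability lemma''/product-of-projectors bound. The main obstacle is this step: the product of projectors must be bounded uniformly in every representation, including those with no Clifford invariants beyond the unitary ones. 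It is there that the constant degrades to $2^{-53}$.
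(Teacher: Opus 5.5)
Your plan has two genuine gaps, one in the Pauli-rotation walks and one in the brickwork Clifford walk; the brickwork unitary outline is fine but depends on the second.

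\textbf{Pauli-rotation walks.} Your local input is the pairwise overlap $\norm{\Pi_P\Pi_Q-\Pi(\rho|\SU(2)_{P,Q})}\le 1/2$. The value is correct: it is $\max_\ell\abs{P_\ell(0)}$, attained at spin $2$. But no pairwise bound of this kind can give a constant gap here. Write $F_P=\one-\Pi_P$ and $\calH=\sum_{P\in\calP_n}F_P$. A constant gap for $\avg_P\Pi_P$ requires $\calH^2\succeq c\,4^n\calH$, so the cross terms must contribute a \emph{positive} multiple of $\calH$ of order $4^n$.

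For a noncommuting pair, $F_PF_Q+F_QF_P$ has eigenvalues $s(1+s)$ and $-s(1-s)$ on each $2\times 2$ block of overlap $s$. Your bound therefore yields only $F_PF_Q+F_QF_P\succeq-\tfrac12(F_P+F_Q)$. Since each $P$ has $2\cdot 4^{n-1}$ anticommuting partners, this gives $\calH^2\succeq(1-4^{n-1})\calH$, which is vacuous. Inside a single triple it gives only $\calH_T^2\succeq 0$. In fact pairwise overlaps cannot even decide a single triple: three rank-one projectors with all pairwise overlaps $1/2$ can have a sum whose smallest nonzero eigenvalue is $1/2$ or $3/2$, depending on signs. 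Expansion of the anticommutation graph does not repair this.

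The missing idea is the three-body (Caputo/Knabe) step. The anticommuting pairs are partitioned into edge-disjoint $\su(2)$-triangles $\{P,Q,-\ii PQ\}$, and each Pauli lies in $4^{n-1}$ of them. Each triangle satisfies the genuinely three-generator bound $\calH_T^2\succeq\Delta_\Theta\calH_T$ with $\Delta_\Theta$ strictly above the threshold $1$. Here $\Delta_\Theta=5/4$ comes from Kac's walk on $\SO(3)$, and $\Delta_\Theta=3/2$ from the computation on $\Cl(1)/\{\pm\one\}\cong\symgp_4$. This gives $\calH^2\succeq(1+4^{n-1}(\Delta_\Theta-1))\calH$, which yields the constants $\frac{4^n+16}{16(4^n-1)}$ and $\frac{4^n+8}{8(4^n-1)}$.

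\textbf{Brickwork Clifford walk.} The comparison with $\nu_{\mathsf{RPCR}}$ cannot give a constant. A uniformly random $P\in\calP_n$ has weight about $3n/4$. The rotation $e^{\ii\pi P/4}$ sends a single-qubit Pauli anticommuting with $P$ to a Pauli of weight $\Theta(n)$, while $O(1)$ brickwork layers, even with local Clifford averaging, have a light cone of size $O(1)$. So no constant convolution power of $\nu_{\mathsf{BRCC}}$ dominates a multiple of $\nu_{\mathsf{RPCR}}$.

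A Dirichlet-form comparison would have to route each $e^{\ii\theta P}$ through words of $\Omega(n)$ two-qubit gates. Cauchy--Schwarz along such words gives at best $\avg_P\norm{F_Pv}^2\lesssim n^2\avg_i\norm{F_iv}^2$. That is a gap of order $n^{-2}$ for the averaged walk and of order $n^{-1}$ after the detectability lemma, not $2^{-7}$.

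The paper does not use $\nu_{\mathsf{RPCR}}$ here. It proves the chain bound directly by Knabe's one-dimensional argument, whose threshold is $1/2$ because each bond fails to commute with only two others. The local input is that $\Cl(2)_{ab}$ and $\Cl(2)_{bc}$ inside $\Cl(3)_{abc}$ have local gap $3/5>1/2$. That number comes from factoring out the Pauli normal subgroup and an exact character-theoretic computation in $\Sp(6;\FF_2)$. It gives $\Delta(\Sigma)\ge 1/(5(n-1))$ for the averaged walk, and then $1/80$ by the detectability lemma.

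\textbf{Brickwork unitary walk.} Your outline for $\nu_{\mathsf{BRUC}}$ matches the paper's: compare $CBC$ with $ABA$ where $C$ is a high power of the brickwork Clifford walk, enlarge the gates to $\SU(4)$, and apply the detectability lemma repeatedly. It is sound as a reduction, but it inherits the missing brickwork Clifford bound.
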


\subsection{Conventions and remarks}

All representations are finite dimensional unitary.
For a representation~$\rho$ of a group~$G$ 
with the uniform Haar probability measure~$\mu(G)$,
we define
\begin{equation}
    \Pi(\rho|G) = M(\mu(G), \rho, G) = \int \rho(g) \,\rd \mu(g)
\end{equation}
which is an orthogonal projector onto the trivial subrepresentation of~$G$;
it may be zero.
We often write simply $x \sim G$ to mean $x \sim \mu(G)$
for a compact group~$G$.
All norms are the operator norm, the largest singular value of a matrix.

\begin{remark}
    In the context of unitary $t$-designs 
    (see e.g.~\cite{chen2025incompressibility} and references therein),
    only balanced tensor representations 
    $\tau_{t,t} : \SU(2^n) \ni U \mapsto U^{\otimes t} \otimes \bar U^{\otimes t}$ 
    are important.
    This representation factors through 
    the projective unitary group $\PU(2^n) = 
    \SU(2^n)/\{ \phi \one : \phi \in \CC, \abs{\phi} = 1\}$,
    and not all irreps of~$\SU(2^n)$ appear in these tensor representations.
    However, when it comes to spectral gap questions,
    it suffices to consider tensor representations.
    The following characterizations of spectral gap are in fact equivalent.
    \begin{enumerate}
        \item The spectral gap of $\nu$, $\Delta(\nu,\SU(2^n))$.
        
        \item The spectral gap of $\nu$ in the regular representation, $\Delta(\nu,\rho_{\text{reg}},\SU(2^n))$.
        The regular representation of a possibly infinite compact group~$G$ is on~$L^2(G)$
        and is
        given by $(\rho_\text{reg}(V)f)(U) = f(V^{-1}U)$ for all $U,V \in G$ and $f \in L^2(G)$.
        The Peter--Weyl theorem states that the regular representation 
        decomposes into a direct sum of all finite dimensional irreps.
        
        \item The spectral gap of $\nu$ over all of the tensor representations, $\inf_{t+s>0}\, \Delta(\nu,\tau_{t,s},\SU(2^n))$. Every finite dimensional irrep of $\SU(2^n)$ appears in a tensor representation for some $t,s$~\cite{koike1989decomposition,alon2026aldous}.
        \item The spectral gap of the Hecke operator on $L^2(\SU(2^n))$ defined by $\nu$.
    \end{enumerate}
    In our arguments, tensor power representations play no role except for calculation in~\ref{lem:upper}.
\end{remark}

\begin{remark}[{\cite[Remark $2.3$]{chen2025incompressibility}}]
\label{remark:RandomWalks}
    A random walk on a group $G$ induces a probability measure,
    \begin{equation}
        \nu = \nu_1 * \cdots * \nu_n,
    \end{equation}
    where step $i$ is drawn from a probability measure $\nu_i$.
    The corresponding moment operator follows a similar rule,
    \begin{equation}
        M(\nu_1 * \nu_2,\rho) = \Ex_{g\sim \nu_1, h \sim \nu_2} \rho(gh) = \left(\Ex_{g\sim \nu_1} \rho(g) \right) \left( \Ex_{h\sim \nu_2} \rho(h)\right) = M(\nu_1, \rho) M(\nu_2, \rho).
    \end{equation}
    The essential norm of this moment operator contracts exponentially with the number of steps,
    \begin{equation}\begin{split}
        \left(M(\nu,\rho) - M(\mu(G),\rho)\right)^k &= M(v^{*k}, \rho) - M(\mu(G),\rho),\\
        g(\nu,\rho)^k &\ge g(v^{*k}, \rho),
    \end{split}\end{equation}
    where the equality follows from the fact that $M(\nu, \rho) M(\mu(G),\rho) = M(\mu(G), \rho) = M(\mu(G), \rho) M(\nu, \rho)$ by Haar measure left and right invariance.
\end{remark}

\begin{remark}
    Let $H_1, \ldots, H_{L}$ be subgroups of unitaries, each supported on some pair of qubits.
    Consider two ways to combine their Haar probability measures, by convolution and by average:
    \begin{equation}
        \convo = \mu(H_1) * \cdots * \mu(H_{L}) \quad \text{ and } 
        \quad \Sigma = \frac {\mu(H_1) + \cdots + \mu(H_{L})} {L} \, .
    \end{equation}
    The operational meaning is that, in $\convo$, 
    we choose $L$ elements $U_j \in H_j$ uniformly at random independently
    and apply them in sequence,
    while, in $\Sigma$, we choose an index~$j$ among $L$ choices uniformly at random
    and then choose an element $U_j \in H_j$ uniformly at random.
    Of course, it is not special here that $H_j$ is a unitary group of two-qubit unitaries.
    \begin{itemize}
    
        \item Brickwork Random Unitary Circuit of depth~$2$ on $\SU(2^n)$ corresponds to a convolution
        with subgroups $H_j = \SU(4)_{j,j+1}$ 
        that are the full two-qubit unitary group acting on qubits $j,j+1$.
            \begin{equation}
                \nu_{\mathsf{BRUC}} = \underbrace{\left(\mu(H_1) * \mu(H_3) * \cdots \right)}_{\text{layer 1}} * \underbrace{\left(\mu(H_2) * \mu(H_4) * \cdots \right)}_{\text{layer 2}}.
            \end{equation}
            Using the same collection of subgroups~$H_j$,
            we also consider the average version:
            $\avg_j \mu(H_j)$.

        \vspace{1ex}
        \item All-to-all Random Unitary Circuit corresponds to
        the average 
        \begin{equation}
            \nu_{\mathsf{All}} = \frac{2}{n(n-1)} \sum_{i < j} \mu(\SU(4)_{i,j})\, .
        \end{equation}        
        We consider this in~\ref{thm:additiveClifford} and \ref{thm:AdditiveRandomUnitary},
        but they are simple applications of other results.
    
        \vspace{1ex}
        \item Random Pauli Rotation on $\SU(2^n)$ 
        is the average of Haar probability measures on $4^n -1$
        groups~$H_P = \{ e^{\ii \theta P} : \theta \in \RR/2\pi \ZZ\}$,
        each of which is isomorphic to~$U(1)$,
        \begin{equation}
            \nu_{\mathsf{RPR}} = \frac 1 {4^n -1} \sum_{P \in \calP_n} \mu(H_P) \, .
        \end{equation}
        It is closely related to but different from
        \begin{equation}
            \mu(\Cl(n)) * \mu(H_{\sigma^z_1}) * \mu(\Cl(n)) \, .
        \end{equation}
        In the latter, the two Clifford elements are independent,
        while in the Random Pauli Rotation they need to be inverse of each other.
        We use the latter ensemble in the proof of~\ref{thm:AdditiveRandomUnitary}.
    \end{itemize}

    The Clifford versions of the above circuits can be categorized similarly.
\end{remark}

Given a collection of subgroups,
the averaged ensemble and convolution ensemble
are interchangeable by the detectability lemma~\cite{aharonov2009detectability} 
and its converse, Gao's quantum union bound~\cite{gao2015quantum},
whose proofs are simplified in~\cite{anshu2016simple,o2022quantum}.

\begin{lemma}[{\cite{anshu2016simple,o2022quantum}; transcribed in \cite[Lemma~2.20]{chen2025incompressibility}}]
    \label{lem:detectability}
    Let $H_1, \ldots, H_L$ be compact subgroups of a compact group~$G$,
    where each $H_i$ commutes element-wise with all but $\ell$ others.
    Suppose~$\ell \ge 1$.
    Consider the convolution and average of the subgroup Haar probability measures:
    \begin{equation}
        \convo = \mu(H_1) * \cdots * \mu(H_L) \quad \text{ and } 
        \quad \Sigma = \frac {\mu(H_1) + \cdots + \mu(H_L)} L \, .
    \end{equation}
    Then, for any representation~$\rho$ of~$G$,
    the spectral gaps in~$\rho$ (see \ref{defn:spectralgap} for the definition)
    are related as
    \begin{equation}
        \Delta(\Sigma, \rho) \ge \frac 1 {4L} \Delta(\convo, \rho) 
        \quad \text{ and }\quad 
        \Delta(\convo, \rho) \ge \frac 1 4 \min(1, L \ell^{-2}\Delta(\Sigma, \rho)) \, .
    \end{equation}
\end{lemma}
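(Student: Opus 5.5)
Let $P_i = \rho(\mu(H_i)) = \Pi(\rho|H_i)$ denote the orthogonal projector onto the $H_i$-invariant vectors, and let $P_0 = \Pi(\rho|G)$. The plan is to translate everything into statements about projectors and a frustration-free ``Hamiltonian.''

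\textbf{Reduction to the complement of the common invariant space.} Each $P_i \ge P_0$, and $P_0$ commutes with every $P_i$. I will pass to the orthogonal complement of the image of $P_0$. Strictly, the common invariant space of the $H_i$ is the one relevant here. If the $H_i$ do not generate $G$, the gap of $\Sigma$ is $0$ on the extra common invariants, and so is the gap of $\convo$ (a projector product fixes them), so both inequalities hold trivially. I will therefore treat $\bigcap_i \mathrm{im}\,P_i$ as the ``ground space.''

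On the complement:
\begin{itemize}
\item $M(\convo,\rho) = P_1 P_2 \cdots P_L$, whose norm is $g(\convo)$.
\item $M(\Sigma,\rho) = \frac1L \sum_i P_i = \one - \frac1L H$, where $H = \sum_i Q_i$ and $Q_i = \one - P_i$.
\end{itemize}
Since $M(\Sigma,\rho)$ is positive semidefinite with norm at most $1$, we get $\Delta(\Sigma) = \gamma/L$, where $\gamma$ is the smallest eigenvalue of $H$ on the complement of the ground space. $H$ is a frustration-free sum of projectors. Each $Q_i$ fails to commute with at most $\ell$ others, because element-wise commuting subgroups give commuting projectors.

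\textbf{First inequality (quantum union bound).} For a unit vector $\psi$ in the complement, Gao's bound in the simplified form of O'Donnell--Venkateswaran gives
$$\|P_L \cdots P_1 \psi\|^2 \ge 1 - 4\sum_i \langle \psi, Q_i \psi\rangle.$$
Choosing $\psi$ to be the ground eigenvector of $H$ on the complement yields $(1-\Delta(\convo))^2 \ge 1 - 4\gamma$. Since $(1-x)^2 \le 1-x$ for $x\in[0,1]$, this gives $\Delta(\convo) \le 4\gamma = 4L\,\Delta(\Sigma)$, which is the first claim.

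\textbf{Second inequality (detectability lemma).} Here I use the Anshu--Arad--Vidick form: for frustration-free commuting-structure projectors,
$$\|P_1\cdots P_L\,\phi\|^2 \le \frac{1}{1+\gamma/g^2},$$
where $g$ is the maximal number of non-commuting neighbours; here $g=\ell$. This gives
$$\|M(\convo)\|^2 \le \frac{1}{1+\gamma/\ell^2}.$$

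Write $x = \gamma/\ell^2 = L\ell^{-2}\Delta(\Sigma)$. Then $1 - (1+x)^{-1/2} \ge \frac14 \min(1,x)$:
\begin{itemize}
\item for $x \le 1$, $(1+x)^{-1/2} \le 1 - x/4$;
\item for $x \ge 1$, $1 - 2^{-1/2} > 1/4$.
\end{itemize}
This gives the second claim.

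\textbf{Main obstacle.} The expected difficulty is checking that the cited projector lemmas apply in this representation-theoretic setting. Specifically, I need to confirm:
\begin{itemize}
\item the frustration-free structure, i.e. that the ground space of $H$ equals the common invariant space, which is exactly $\bigcap_i \mathrm{im}\,P_i$;
\item that the ordering in the product (the convolution order versus the reversed product) does not matter, which follows from taking adjoints and the fact that $\|A\| = \|A^*\|$.
\end{itemize}
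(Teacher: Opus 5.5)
Your proposal is correct and follows essentially the same route as the paper. The paper gives no independent proof: it invokes Gao's quantum union bound (in the O'Donnell--Venkateswaran form) for $\Delta(\Sigma,\rho)\ge\frac{1}{4L}\Delta(\convo,\rho)$ and the Anshu--Arad--Vidick detectability lemma for the reverse bound, exactly as you do. The remaining steps you add correctly supply the details the paper leaves to the references: the reduction to the complement of the common invariant space (the relevant case split is per representation, namely whether $\bigcap_i \mathrm{im}\,P_i$ strictly contains $\mathrm{im}\,\Pi(\rho|G)$), the identity $\Delta(\Sigma,\rho)=\gamma/L$, the use of $(1-x)^2\le 1-x$, and the estimate $1-(1+x)^{-1/2}\ge\frac14\min(1,x)$.
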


A nonzero spectral gap is synonymous to group generation:

\begin{lemma}[Spectral gap and generation]\label{lem:SpectralGapAndGeneration}
    Let $G$ be a finite or compact Lie group,
    and $A,B$ closed subgroups of~$G$.
    Then, $A$ and $B$ densely generate $G$
    if and only if
    \begin{equation}
        \norm*{ \frac{\Pi(\rho|A) + \Pi(\rho|B)}{2} - \Pi(\rho|G) } < 1
    \end{equation}
    for every representation~$\rho$ of~$G$.
\end{lemma}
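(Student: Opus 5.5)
Write $P_A = \Pi(\rho|A)$, $P_B = \Pi(\rho|B)$, $P_G = \Pi(\rho|G)$ and $T = \frac{P_A+P_B}{2}$, all acting on the representation space $V$ of $\rho$. The plan is to reduce the norm condition to a statement about common fixed vectors. That statement is then equivalent to generation by standard facts about closed subgroups.

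First, $P_A$ and $P_B$ are orthogonal projectors, so $T$ is positive semidefinite with $\norm{T} \le 1$. Since $G$ contains $A$ and $B$, every $G$-invariant vector is fixed by both. Hence $P_A P_G = P_B P_G = P_G$, and likewise $P_G P_A = P_G P_B = P_G$ by taking adjoints. Therefore $T$ commutes with $P_G$, acts as the identity on $\operatorname{range} P_G$, and preserves the orthogonal complement $W = (\operatorname{range} P_G)^\perp$. It follows that $\norm{T - P_G} = \norm{T|_W}$.

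Because $V$ is finite dimensional, $\norm{T|_W} = 1$ if and only if some unit vector $w \in W$ satisfies $\langle w, Tw\rangle = 1$. This means $\frac{1}{2}(\norm{P_A w}^2 + \norm{P_B w}^2) = 1$, which forces $P_A w = w = P_B w$. So $w$ is fixed by $A$ and by $B$, hence by the closed subgroup $\overline{\langle A, B\rangle}$, since the stabilizer of a vector is a closed subgroup. Conversely, any nonzero vector fixed by $\overline{\langle A,B\rangle}$ but not by $G$ yields, after projection onto $W$ (which commutes with $P_A$ and $P_B$), such a unit vector $w$. Thus the norm condition holds for $\rho$ exactly when $V^{\overline{\langle A,B\rangle}} = V^G$.

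For the forward direction, suppose $\overline{\langle A,B\rangle} = G$. Then the fixed spaces agree for every $\rho$, so the norm is strictly less than $1$.

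For the converse, suppose $K = \overline{\langle A,B\rangle}$ is a proper closed subgroup of $G$. This is the only substantive step: we need a finite dimensional unitary representation of $G$ with a $K$-fixed vector that is not $G$-fixed. Consider $L^2(G/K)$, on which $G$ acts by left translation. Because $K \neq G$, the space $G/K$ contains at least two points, so there are continuous functions on $G/K$ that are not constant. By the Peter--Weyl theorem applied to $L^2(G/K) \subset L^2(G)$, the $G$-finite vectors are dense, so some finite dimensional $G$-subrepresentation $V \subset L^2(G/K)$ contains a nonconstant function $f$. The function $f$ is a right-$K$-invariant function on $G$. Then $\rho$ acting on $V$, or on its dual, has a $K$-fixed vector: concretely, evaluation at the coset $eK$ is a $K$-invariant linear functional on $V$. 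This functional is nonzero on $V$ because $V$ is translation invariant and contains a function that is not identically zero. It is not $G$-invariant, since $G$-invariance would force every element of $V$ to be constant, contradicting the choice of $f$. Passing to the dual representation $\bar\rho$ gives a $K$-fixed, non-$G$-fixed vector, and by the previous paragraphs $\norm{T - P_G} = 1$ for $\bar\rho$.

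In the finite group case this converse is simpler: one can take the permutation representation on $\CC[G/K]$ directly.
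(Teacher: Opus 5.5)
Your proof is correct and follows essentially the same route as the paper: norm $1$ is equivalent to a common $A$- and $B$-fixed vector orthogonal to the $G$-fixed space, and for a proper closed $K=\overline{\langle A,B\rangle}$ you extract a finite-dimensional subrepresentation of $L^2(G/K)$ carrying a $K$-fixed but not $G$-fixed vector, with your evaluation-at-$eK$ functional being the concrete content of the Frobenius reciprocity the paper cites. One detail is worth stating explicitly: elements of a finite-dimensional $G$-invariant subspace of $L^2(G)$ are continuous (they are representative functions), and this is what makes evaluation at $eK$ meaningful.
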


\begin{proof}
    Suppose that $A$ and $B$ densely generate~$G$.
    Suppose on the contrary to the claim 
    that there is an irrep~$\rho$ in which the norm is~$1$.
    Then, this irrep must be nontrivial so $\Pi(\rho|G) = 0$,
    since on the trivial irrep the norm is automatically zero.
    The norm~$\norm{(\Pi(\rho|A)+\Pi(\rho|B))/2} = \frac 1 2 \sup_{\psi} \bra\psi (\Pi(\rho|A)+\Pi(\rho|B)) \ket \psi = 1$ is attained by a normalized vector~$\ket \psi$ in the representation space,
    which must be a common eigenvector of eigenvalue~$+1$.
    That is, $\ket \psi$ is $A$-invariant and $B$-invariant.
    By the dense generation and continuity of~$\rho$,
    for every element~$g \in G$ and $\epsilon >0$,
    there is a finite product $a_1 b_1 \cdots a_k b_k$ with $a_j \in A$ and $b_j \in B$
    such that $\norm{\rho(a_1)\rho(b_1) \cdots \rho(a_k)\rho(b_k) - \rho(g)} < \epsilon$.
    Acting on~$\ket \psi$, we see that $\norm*{\ket \psi - \rho(g) \ket \psi}_{2} < \epsilon$.
    Since $\epsilon$ was arbitrary, $\ket \psi$ must also be $G$-invariant,
    which is a contradiction to the assumption that~$\rho$ is a nontrivial irrep.

    Conversely, suppose that $A, B$ generate a proper closed subgroup~$H$ of~$G$.
    Then, $G$ acts nontrivially on~$L^2(G/H)$.
    which contains a trivial $G$-representation
    with multiplicity~$1$ since $G$ acts transitively.
    Since $H$ is a proper subgroup,
    the orthogonal complement~$W$ of the the trivial $G$-representation
    is nonzero and consists of only nontrivial $G$-representations.
    The representation~$L^2(G/H)$ is the induced representation of~$G$
    from the trivial representation of~$H$,
    and 
    the Frobenius reciprocity~\cite[Example~6.3]{BrockertomDieck}
    (a finite group case handled in~\cite[Corollary~3.20]{fulton2013representation})
    implies that any $G$-irrep $V$ (which is necessarily finite dimensional by the Peter--Weyl theorem)
    in~$W$ must contain a trivial $H$-representation.
    In~$V$ we have $\Pi(\rho|G) = 0$ for $V$ being a nontrivial $G$-irrep,
    but $\Pi(\rho|A),\Pi(\rho|B) \succeq \Pi(\rho|H)$ is nonzero.
\end{proof}

The following tool becomes useful in calculation.

\begin{lemma}[Factoring out normal subgroups]
    \label{lem:factoringNormal}
    Let $G$ be a finite or compact Lie group,
    and $A$ and $B$ be closed subgroups of~$G$, densely generating~$G$.
    Let $N \triangleleft G$ be a normal subgroup
    such that
    \begin{itemize}
        \item $N = N_1 N_2 N_3$ for some subgroups $N_1 \le A$, $N_2 \le A \cap B$, and $N_3 \le B$
        \item with element-wise commutativity:
            $[N_1,N_2]=[N_2,N_3]=[N_3,N_1]=[N_1,B]=[A,N_3] = \one$.
    \end{itemize}
    Let $\rho$ denote any representation of~$G$,
    and $\sigma$ denote any representation of~$G/N$.
    Then, either
    \begin{equation}
        \sup_\rho ~\norm*{
            \frac{\Pi(\rho|A) + \Pi(\rho|B)}{2} 
            - \Pi(\rho|G)
        } \le \frac 1 2
    \end{equation}
    or
    \begin{equation}
        \begin{split}
        &\sup_\rho ~\norm*{
            \frac{\Pi(\rho|A) + \Pi(\rho|B)}{2} 
            - \Pi(\rho|G)
        }\\
        &=
        \sup_\sigma ~\norm*{
            \frac {\Pi(\sigma\,|\, (A N)/N) + \Pi( \sigma\,|\,(B N)/N) }{2}
            - \Pi(\sigma\,|\,G/N)
        }
        \end{split}
    \end{equation}
\end{lemma}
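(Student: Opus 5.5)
We decompose every representation~$\rho$ of~$G$ into irreps and treat each irrep $V$ separately, splitting according to whether $N$ acts trivially on~$V$. Since $N$ is normal, the subspace $V^N$ of $N$-invariant vectors is $G$-stable, so for an irrep it is either all of~$V$ or zero. If $N$ acts trivially, then $\rho$ factors through a representation~$\sigma$ of $G/N$. For such $\rho$ we have $\Pi(\rho|A) = \Pi(\sigma|(AN)/N)$, because averaging over $A$ and averaging over $AN$ give the same operator once $N$ acts trivially (the image of Haar measure on $A$ in $AN/N \cong A/(A\cap N)$ is Haar). The same holds for $B$, and $\Pi(\rho|G)=\Pi(\sigma|G/N)$. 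Hence the supremum over this class of irreps equals the supremum over~$\sigma$ exactly. Conversely every $\sigma$ lifts to such a~$\rho$. So the quantity over $\rho$ is the maximum of the $\sigma$-quantity and the supremum over irreps on which $N$ acts with no invariant vectors. The dichotomy in the statement then follows once we show this second supremum is at most~$1/2$. Indeed, if the $\sigma$-supremum exceeds $1/2$, equality holds; otherwise the whole supremum is at most $1/2$.

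So the main step is: for an irrep $V$ of $G$ with $V^N = 0$, show $\norm{(\Pi_A+\Pi_B)/2} \le 1/2$. Here $\Pi(\rho|G)=0$, and we abbreviate $\Pi_A = \Pi(\rho|A)$, etc. This is equivalent to $\Pi_A\Pi_B = 0$, i.e.\ the $A$-invariant and $B$-invariant subspaces are orthogonal. Indeed, for two projectors $P,Q$ the norm of $(P+Q)/2$ is $(1+\norm{PQ})/2$, which is at most $1/2$ iff $PQ=0$.

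To prove $\Pi_A\Pi_B=0$ we use the structure of~$N$. Every $A$-invariant vector is invariant under $N_1 N_2 \le A$, and every $B$-invariant vector is invariant under $N_2N_3 \le B$. Because $N_3$ commutes elementwise with~$A$, the operator $\Pi_{N_3} := \Pi(\rho|N_3)$ commutes with $\rho(A)$ and therefore with~$\Pi_A$. Similarly $\Pi_{N_1}$ commutes with~$\Pi_B$. Since $\Pi_A \le \Pi_{N_1}$ and $\Pi_B \le \Pi_{N_3}$, we obtain
\[
\Pi_A \Pi_B = \Pi_A \Pi_{N_1}\Pi_{N_3}\Pi_B .
\]
Take $\psi = \Pi_A\Pi_B\phi$. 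Because $\Pi_A$ commutes with~$\Pi_{N_3}$ and $\Pi_B\phi$ is $N_3$-invariant, the vector $\psi$ is $N_3$-invariant. It is also $A$-invariant, hence $N_1$- and $N_2$-invariant. Since $N = N_1N_2N_3$, the vector $\psi$ is $N$-invariant, so $\psi \in V^N = 0$. Thus $\Pi_A\Pi_B = 0$.

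The obstacle I expect is the bookkeeping in this last step. It requires checking that $\Pi_A$ commutes with $\Pi_{N_3}$ (which uses only $[A,N_3]=\one$), and that invariance under the three subgroups separately implies invariance under the product set $N_1N_2N_3 = N$. The pairwise commutation of the $N_i$ guarantees that this product set is the group~$N$. The dense generation hypothesis is not really needed beyond identifying $\Pi(\rho|G)$.
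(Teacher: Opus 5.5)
Your proof is correct, and it takes a more direct and more general route than the paper's.

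\textbf{The paper's route.} It argues by contraposition on a single irrep.
\begin{itemize}
\item It first disposes of the case where $\Pi(\rho|A)$ and $\Pi(\rho|B)$ commute. This is the only place dense generation enters: it identifies the common $+1$-eigenspace with the $G$-invariants.
\item In a noncommuting $2\times 2$ block it picks a maximizing vector $\ket\psi$ with expectation $>1/2$.
\item It uses that the $\Pi(\rho|N_j)$ commute with each other and with $\Pi(\rho|A),\Pi(\rho|B)$ to take $\ket\psi$ to be a common eigenvector. This step needs all of the listed commutation hypotheses.
\item It shows that $\Pi(\rho|N_j)\ket\psi=0$ for any $j$ would push the expectation down to at most $1/2$. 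Hence $\ket\psi$ is $N$-invariant, and normality plus irreducibility give $N\subseteq\ker\rho$.
\end{itemize}

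\textbf{Your route.} You prove the global identity $\Pi(\rho|A)\Pi(\rho|B)=0$ on every irrep with $V^N=0$, by showing directly that the range of $\Pi(\rho|A)\Pi(\rho|B)$ consists of $N$-invariant vectors. This buys several things.
\begin{itemize}
\item It uses only $N_1,N_2\le A$, $N_3\le B$ and the single commutation $[A,N_3]=\one$, or symmetrically $[N_1,B]=\one$.
\item It needs neither the commuting/noncommuting case split nor the Jordan-block picture.
\item It never uses dense generation. $\Pi(\rho|G)$ vanishes on any nontrivial irrep anyway, so your closing hedge can be dropped: the dichotomy holds without that hypothesis. The example in the remark after the lemma does not actually satisfy $N_1\le A$, $N_2\le A\cap B$, $N_3\le B$.
\item It makes explicit the reduction the paper leaves implicit. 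The supremum over all $\rho$ is a supremum over irreps. Irreps on which $N$ acts trivially give exactly the $\sigma$-quantity, and every $\sigma$ lifts. The cases $\sup_\sigma>1/2$ and $\sup_\sigma\le 1/2$ then yield the two alternatives.
\end{itemize}

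What the paper's vector-level argument offers in exchange is mainly transparency about where the threshold $1/2$ comes from: a vector annihilated by some $\Pi(\rho|N_j)$ sees at most one of the two projectors. Yours gives the stronger structural fact that the $A$- and $B$-invariant subspaces are orthogonal on every irrep not factoring through $G/N$.
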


\begin{proof}
    Suppose $\Pi(\rho|A)$ and $\Pi(\rho|B)$ commute.
    The eigenspectrum of $\frac 1 2 (\Pi(\rho|A) + \Pi(\rho|B))$ is
    contained in $\{0,\frac 1 2, 1\}$.
    The eigenspace of the eigenvalue~$1$, if any, 
    forms the trivial representation of~$\langle A, B \rangle$,
    which is assumed to be~$G$.
    Hence, the norm is at most~$1/2$.
    Therefore, we may assume that $\Pi(\rho|A)$ does not commute with $\Pi(\rho|B)$ for some irrep~$\rho$ of~$G$.
    
    Then, there must be a nontrivial ($2\times 2$) Jordan block 
    (a pair of possibly noncommuting projectors
    can be simultaneously block diagonalized
    with block size either $1\times 1$ or $2 \times 2$),
    on which $\Pi(\rho|A)$ and $\Pi(\rho|B)$
    do not commute and $\Pi(\rho|G)$ is zero.
    Then, $(\Pi(\rho|A) + \Pi(\rho|B))/2$
    is unitary conjugate to $\diag(\cos^2 \theta, \sin^2 \theta)$ for some~$\theta \in \RR$
    on the Jordan block;
    since $\Pi(\rho|A)$ does not commute with~$\Pi(\rho|B)$, 
    both components $\cos^2 \theta$ and $\sin^2 \theta$ must be nonzero.
    In particular, we have the essential norm
    \begin{equation}
        \norm*{\frac{\Pi(\rho|A) + \Pi(\rho|B)}{2} - \Pi(\rho|G)}
        =\bra \psi \left( \frac{\Pi(\rho|A) + \Pi(\rho|B)}{2} - \Pi(\rho|G) \right) \ket \psi > \frac 1 2
    \end{equation}
    for some normalized nonzero vector~$\ket \psi$ such that $\Pi(\rho|G) \ket \psi = 0$.
    
    Each projector $P_j = \Pi(\rho|N_j)$ for any~$j=1,2,3$ 
    commutes with both~$\Pi(\rho|A)$ and $\Pi(\rho|B)$.
    This is either because $N_j$ is a subgroup or because of the commutativity assumption.
    Therefore, we may assume that $\ket \psi$ is a common eigenvector of $P_j$ for all~$j=1,2,3$.
    If $P_1 \ket \psi = 0$, then we use $P_1 \Pi(\rho|A) = \Pi(\rho|A)$ to see that
    $\bra \psi ((\Pi(\rho|A) + \Pi(\rho|B))/2 - \Pi(\rho|G)) \ket \psi \le \frac 1 2$;
    if $P_2 \ket \psi = 0$, then we use $P_2 \Pi(\rho|A) = \Pi(\rho|A)$ to see that the essential norm is~$\le \frac 1 2$;
    if $P_3 \ket \psi = 0$, then we use $P_3 \Pi(\rho|B) = \Pi(\rho|B)$ to have the same contradiction.
    
    Therefore, we must have $P_1 \ket \psi = P_2 \ket \psi = P_3 \ket \psi = \ket \psi$.
    Since~$N$ is normal,
    this means that the common $+1$-eigenspace of~$P_j$ is a subrepresentation of~$\rho$.%
    \footnote{This step fails if $N$ is not normal.}
    Since $\rho$ is irreducible,
    this subrepresentation is the full~$\rho$,
    in which all~$N_j$ are trivially represented.
    This implies that $\ker \rho \supseteq N$,
    and $\rho$ defines a representation of~$G/N$.
    The lemma is proved.
\end{proof}

\begin{remark}
    The condition that $A,B$ densely generate~$G$ may not be omitted from~\ref{lem:factoringNormal}.
    For example, if $A = B = \{\one\}$ and $G = N_1 = N_2 = N_3 = \ZZ_2$,
    then the quotient groups are all trivial.
\end{remark}

\subsection{Previous work}

To the best of our knowledge,
all previous spectral gap bounds~\cite{brown2010convergence,bourgain2012spectral,brandao2016local,haferkamp2021improved,
haferkamp2022random,metger2024simple,chen2024efficient,haah2025efficient,chen2025incompressibility}
for any walk with \emph{efficiently implementable} steps on the projective unitary 
group~$\PU(2^n)$
were vanishing in the limit of large representations or in the group dimension.
Here, the efficiency means that each step can be written 
as a product of~$\mathrm{poly}(n)$ two-qubit unitaries.
For balanced tensor representations $U \mapsto (U \otimes \bar U)^{\otimes t}$,
the slowest decaying previous bound was given in~\cite{chen2025incompressibility},
which reads~$1/\mathrm{polylog}(t)$ provided that~$t \lesssim 2^{0.4\,n}$.
Our result is the first that gives a constant spectral gap,
independent of $n$ and representations,
and each step is manifestly efficient.
Some arguments~\cite{hunter2019unitary,Jian2023} appealing to statistical physics intuition
have indicated that the spectral gap of the Brickwork Random Unitary Circuit is a constant,
which we prove here.

\begin{remark}
    Our results immediately improve implications for 
    the complexity growth of random unitary circuits as a function of the depth.
    The argument reviewed in~\cite{chen2025incompressibility} applies verbatim
    only without restriction of depth being~$\mathcal{O}(2^{n/6.1})$.
\end{remark}

Our results are inspired by the study of Kac's random walk.
A step of Kac's random walk~\cite{kac1956foundations} on~$\SO(m)$ 
uniformly chooses two indices then applies a rotation uniformly from $\SO(2)$.
\cite{maslen2003eigenvalues,carlen2003determination} proved that the spectral gap is realized by the $t=2$ case, 
which was noted by the best known bound for random unitary circuits \cite{chen2025incompressibility}.
A few years after these results,
Caputo~\cite{caputo2008spectral} provided an elementary proof using a local to global principle.
Recently, \cite{alon2026aldous} proved a similar spectral gap result 
for an analogous random walk on $\SU(2^n)$.
However, this gap is exponentially small in the number of qubits 
and the walk does not use the tensor product structure.

Caputo's elementary argument~\cite[Theorem~1.1]{caputo2008spectral} for Kac's random walk
is a variant of Knabe bounds~\cite{knabe1988energy}.
For one-dimensional translation-invariant frustration-free Hamiltonians,
Knabe bounds boost the spectral gap of a local subsystem to that of the entire system, 
as long as the local gap is above the so-called Knabe threshold.
Knabe~\cite{knabe1988energy} introduced this technique 
to generalize the seminal result that the AKLT chain is gapped~\cite{affleck1988valence} 
to other spin chains.
The technique enjoyed subsequent refinements~\cite{gosset2016local,lemm2019spectral}.

Haferkamp and Hunter-Jones~\cite{haferkamp2021improved} 
applied Knabe bounds to unitary designs to prove 
that local random unitary circuits have a $\Omega(n^{-1})$ gap 
when the design order $t$ is small compared to the local dimension.
Without this assumption, they provide numerical evidence that the local gap (subsystem size three) is exactly at the Knabe threshold $1/2$, so this technique alone gives vacuous results.
Our approach is different in that we apply the Knabe method to two random walks,
one on Clifford groups via brickwork circuits and 
the other on~$\SU(2^n)$ via random Pauli rotations~\cite{haah2025efficient},
and combine them using certain boosting methods~\cite{brandao2016local,chen2025incompressibility}.

Recently, Knabe bounds have been generalized to Hamiltonians in multiple dimensions \cite{lemm2019finite,lemm2022quantitatively, anshu2020improved} and on arbitrary graphs \cite{mittal2023local,hunter2025two}.
For our results, the original Knabe bound \cite{knabe1988energy} and the triangle method of \cite{caputo2008spectral} are sufficient, 
the latter of which implicitly identifies each local subsystem with a single qubit, 
which has a gap above the threshold.

For unitary $t$-designs,
it takes dimensional factors (that are exponential in~$n$ and $t$)
to convert the spectral gap of a random walk on~$\PU(2^n)$
to an operationally meaningful approximation error guarantees.
Indeed, our spectral gap lower bound for Brickwork Random Unitary Circuit in~\ref{thm:BRUC}
implies that a brickwork random unitary circuit of depth $\mathcal O(nt + \log \frac 1 \epsilon)$
is an approximate unitary $t$-design with relative error~$\epsilon$~\cite{brandao2016local}.
``Shallow unitary designs''~\cite{schuster2025random,laracuente2026approximate}
does not analyze spectral gaps
but give $\epsilon$-approximate unitary $t$-designs
as some random unitary circuit of
depth~$t\, \mathrm{polylog}(nt/\epsilon)$.

\section{Clifford circuits}

Define the symplectic group:
\begin{equation}
    \Sp(2n;\FF_2) = 
    \left\{
        \begin{bmatrix}
            A & B \\ C & D
        \end{bmatrix} \in \FF_2^{2n \times 2n} 
        ~\middle|~
        \begin{bmatrix}
            A^T & C^T \\ B^T & D^T
        \end{bmatrix}
        \begin{bmatrix}
            0_{n \times n} & \one_{n} \\ \one_n & 0_{n \times n}
        \end{bmatrix}
        \begin{bmatrix}
            A & B \\ C & D
        \end{bmatrix}
        = 
        \begin{bmatrix}
            0_{n \times n} & \one_{n} \\ \one_n & 0_{n \times n}
        \end{bmatrix}
    \right\}.
\end{equation}

\begin{proposition}\label{prop:sp6}
    Let $H_{1,2} \cong H_{2,3} \cong \Sp(4;\FF_2)$ be the images in~$\Sp(6;\FF_2)$ by
    \begin{equation}
        \begin{bmatrix}
            A & 0 & B & 0 \\
            0 & 1 & 0 & 0 \\
            C & 0 & D & 0 \\
            0 & 0 & 0 & 1
        \end{bmatrix}
        \xleftarrow{~~\text{for }H_{1,2}~~}
        \begin{bmatrix}
            A_{2\times 2} & B_{2\times 2} \\ C_{2\times 2} & D_{2\times 2} 
        \end{bmatrix}
        \xrightarrow{~~\text{for }H_{2,3}~~}
        \begin{bmatrix}
            1 & 0 & 0 & 0 \\
            0 & A & 0 & B \\
            1 & 0 & 1 & 0 \\
            0 & C & 0 & D
        \end{bmatrix}
        \, .
    \end{equation}
    Then, for any representation~$\rho$ of~$\Sp(6;\FF_2)$
    we have
    \begin{equation}
        \norm*{
            \frac {\Pi(\rho|H_{1,2}) + \Pi(\rho|H_{2,3})}{2} - \Pi(\rho|\Sp(6;\FF_2))
        }
        \le
        1 - \frac{3}{10}
    \end{equation}
    where the equality holds for the permutation representation acting on~$\FF_2^6$.
    In fact, the eigenspectrum of 
    $\Pi(\rho_{reg}|H_{1,2}) + \Pi(\rho_{reg}|H_{2,3})$
    for the regular representation~$\rho_{reg}$ 
    is $\{0,\frac 6 {10}, \frac 8 {10}, \frac 9 {10}, 1, \frac{11}{10}, \frac{12}{10}, \frac{14}{10}, 2\}$.
\end{proposition}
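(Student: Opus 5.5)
This is a statement about a fixed finite group, $|\Sp(6;\FF_2)| = 1451520$, with two explicit subgroups of order $|\Sp(4;\FF_2)| = 720$. The plan is to reduce it to a finite computation over the irreducible representations and then carry that computation out, by computer if necessary.

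\textbf{Reduction to irreducibles and double cosets.} Since $\rho$ is a direct sum of irreps and every projector respects that decomposition, the norm for general $\rho$ is the maximum over irreps. The regular representation contains every irrep, so it suffices to compute the spectrum of $\Pi(\rho_{reg}|H_{1,2}) + \Pi(\rho_{reg}|H_{2,3})$ and remove the trivial eigenvalue $2$ coming from the invariant vector.

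For two projectors $P,Q$, the nonzero spectrum of $P+Q$ is determined by the spectrum of $PQP$. Eigenvalue $1 \pm \sqrt{\lambda}$ arises for each eigenvalue $\lambda$ of $PQP$ on $\operatorname{ran}P$, together with $1$'s and $0$'s from the trivially intersecting parts (Jordan's lemma, as already used in the proof of \ref{lem:factoringNormal}). On the regular representation, $PQP$ restricted to $H_{1,2}$-invariant functions is the Hecke operator on $L^2(G/H_{1,2})$. Its matrix entries are governed by the double cosets $H_{1,2}\backslash G/H_{2,3}$ and $H_{1,2}\backslash G/H_{1,2}$.

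\textbf{Identifying the relevant structure.} I would identify $G/H_{1,2}$ with a geometric set. $H_{1,2}$ is the stabilizer of the pair of basis vectors $(e_3,f_3)$ for qubit $3$, equivalently of a nondegenerate $2$-plane together with a symplectic basis of it. Similarly, $H_{2,3}$ stabilizes the analogous data for qubit $1$. The Hecke operator then becomes a small matrix indexed by $H_{1,2}$-orbits on such frames. The number of these orbits is the number of double cosets, which is at most a few dozen.

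I would compute this matrix either by hand, using the orbit decomposition by symplectic form values and span relations, or by enumerating with a short computer calculation. Its eigenvalues $\lambda$ give the spectrum $1 \pm \sqrt{\lambda}$. The listed values $\frac{6}{10},\frac{8}{10},\ldots,\frac{14}{10}$ pair up around $1$ in exactly this way: $1 \pm \frac{4}{10}$, $1 \pm \frac{2}{10}$, $1 \pm \frac{1}{10}$. This yields the stated spectrum, and the largest nontrivial eigenvalue $\frac{14}{10}$ gives $\frac{7}{10} = 1 - \frac{3}{10}$.

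\textbf{Tightness in the permutation representation.} For the equality case, take the permutation action on $\FF_2^6$. I would exhibit a vector in the nontrivial part of $L^2(\FF_2^6)$ attaining $\frac{7}{10}$. The natural candidate is the function determined by the third-qubit coordinates that is orthogonal to the orbit-constant functions, where the orbits are $\{0\}$ and the nonzero vectors.

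The invariants of $H_{1,2}$ are the functions of $(v_3$-part, and whether the qubit-$1,2$ part is zero$)$; those of $H_{2,3}$ are described similarly. A $2\times 2$ computation then gives $\sqrt{\lambda} = \frac{4}{10}$, that is, an overlap of $0.16$.

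\textbf{Main obstacle.} The main obstacle is the bookkeeping of double cosets and orbit sizes, which must be exact. If a clean hand count fails, I would fall back to an explicit computer enumeration of the $H_{1,2}$-orbits on $G/H_{1,2}$, which is small enough to do exactly with rational arithmetic.
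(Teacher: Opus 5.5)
Your plan is sound, and it takes a genuinely different route from the paper.

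\textbf{The paper's route.} The paper works irrep by irrep. It enumerates all of $\Sp(6;\FF_2)$ and certifies the GAP character table through class-algebra constants. It then uses characters to locate the ten irreps containing $H$-invariants and realizes them inside the $27$- and $35$-dimensional quadratic-refinement representations and their $\wedge^2$ and tensor products. Finally, it extracts principal angles exactly from products of Gram matrices.

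\textbf{Your route.} You bypass character theory by working in the single permutation module on the $2016$ hyperbolic pairs. By Frobenius reciprocity this module contains every irrep on which $\Pi(\rho|H_{1,2})\neq 0$, and every other irrep contributes only the eigenvalue $0$.

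\textbf{A simplification.} The reduction gets cleaner if you use $H_{2,3}=sH_{1,2}s$, where $s$ is the qubit-swap involution. (The displayed $H_{2,3}$ embedding has a stray $1$; reading $H_{2,3}$ as the stabilizer of $(e_1,f_1)$, as you do, is the intended meaning.) Then $PQP=(P\rho(s)P)^2$ with $P\rho(s)P$ self-adjoint. On functions of pairs, $P\rho(s)P$ is simply the average of $f$ over the $120$ hyperbolic pairs lying in the orthogonal complement of the plane spanned by the given pair.

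\textbf{The resulting computation.} The $H_{1,2}$-orbits on pairs number $5\cdot 6+10=40$. There are six $\Sp(4;\FF_2)$-orbit types for the qubit-$1,2$ projection of $(u,v)$; five of them admit $6$ qubit-$3$ coordinate choices, and the hyperbolic type admits $10$. So everything reduces to one $40\times 40$ rational quotient matrix, and each eigenvalue $\mu$ of it gives $1\pm\abs{\mu}$. Since $40=\sum_\pi m_\pi^2$ exceeds the $30$ irreps of $\Sp(6;\FF_2)$, the centralizer algebra is noncommutative. So take eigenvalues of this single matrix rather than trying to diagonalize the orbitals simultaneously.

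\textbf{What each approach buys.} Your computation is smaller and more elementary than the paper's. The paper's route in exchange gives the irrep-by-irrep breakdown. In both cases the upper bound rests on actually carrying out the computation. Your observation that the listed values pair up around $1$ is only a consistency check, not a derivation.

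\textbf{The equality case.} Your equality case is correct, provided the candidate is symmetric under permuting the three nonzero values of $v_3$. Take $f=\one_{\{v_3=0\}}-\tfrac{5}{21}$ on nonzero vectors, with $f(0)=0$. Then $\norm{f}^2=80/7$ and $\norm{Qf}^2=64/35$, so $\lambda=4/25$ exactly as you claim. Non-symmetric functions of $v_3$ (supported on $v_3\neq 0$ with zero sum) lie in $\ker Q$ and give $\lambda=0$ instead.
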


The proof of~\ref{prop:sp6} is essentially direct calculation 
on the finite group~$\Sp(6;\FF_2)$ of order $1\,451\,520$ 
with 30 distinct irreps.
The calculation is performed by computer with exact rational arithmetic.
We explain how the calculation is simplified using elementary character theory
at the end of this section.

\begin{corollary}\label{cor:3qubitClifford}
    Consider 3 qubits $a,b,c$ 
    and 
    let $\Cl(3)_{abc}$ be the 3-qubit Clifford unitary group on~$a,b,c$,
    and $\Cl(2)_{ab}$ be the 2-qubit Clifford unitary group on~$a,b$,
    and $\Cl(2)_{bc}$ be the 2-qubit Clifford unitary group on~$b,c$.
    For any representation~$\rho$ of~$\Cl(3)$,
    we have
    \begin{equation}
        \norm*{
            \frac{
                \Pi(\rho|\Cl(2)_{ab}) + \Pi(\rho|\Cl(2)_{bc})
            }{2}
            -
            \Pi(\rho|\Cl(3)_{abc})
        }
        \le 1 - \frac 3 {10} \, .
    \end{equation}  
\end{corollary}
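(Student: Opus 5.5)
The plan is to deduce the corollary from \ref{prop:sp6} by factoring out the Pauli group with \ref{lem:factoringNormal}. Let $G = \Cl(3)_{abc}$, $A = \Cl(2)_{ab}$, $B = \Cl(2)_{bc}$.

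\textbf{Step 1: the quotient.} Let $N \triangleleft G$ be the (special) Pauli group on three qubits together with the central phases in $G$. Conjugation action on Paulis modulo phases gives the standard surjection $G \to \Sp(6;\FF_2)$ with kernel~$N$. Under this map, $A$ and $B$ go onto the two embedded copies $H_{1,2}$ and $H_{2,3}$ of $\Sp(4;\FF_2)$, after choosing the symplectic coordinates appropriately. Hence $AN/N \cong H_{1,2}$ and $BN/N \cong H_{2,3}$.

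\textbf{Step 2: dense generation.} \ref{lem:factoringNormal} needs that $A$ and $B$ generate $G$. It is classical that $\Cl(2)_{ab}$ and $\Cl(2)_{bc}$ generate the full 3-qubit Clifford group up to phases. Indeed, they contain $H$, $S$ on each qubit and CNOTs on $ab$ and $bc$, and these generate modulo phases. The phases of determinant one should be checked: Pauli elements and the scalar $\ii$-type phases in $\SU(8)$ arise from products like $\sigma^x_a\sigma^x_b$ and single-qubit Paulis in~$A$.

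\textbf{Step 3: decomposing $N$.} Write $N = N_1N_2N_3$ as the lemma requires:
\begin{itemize}
\item $N_1$: the Paulis supported on $a$ (with signs), lying in~$A$;
\item $N_2$: the Paulis on $b$ together with the central phases, lying in $A\cap B$;
\item $N_3$: the Paulis on~$c$, lying in~$B$.
\end{itemize}
One subtlety is that single-qubit Paulis have determinant $+1$ on three qubits, since $\det(\sigma\otimes\one_4)=\det(\sigma)^4=1$, so they are legitimately in the special groups. The remaining concern is that the $N_j$ must commute element-wise. Pauli groups on different qubits commute, and central phases commute with everything. Also $N_1$ (qubit $a$) commutes with $B$ (acting on $bc$), and $N_3$ commutes with~$A$. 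So all hypotheses hold, except that $N_1$ and $N_3$ should be taken modulo phases or chosen phase-free as $\{\pm1\}\cdot\{\one,X,Y,Z\}$ style subgroups; I will put all scalars into $N_2$. Single-qubit Pauli groups generate anticommuting elements within a qubit, which is fine since only cross commutation is required.

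\textbf{Step 4: conclude.} The factoring lemma gives two cases. In the first, the supremum is at most $1/2 \le 7/10$. In the second, it equals the supremum over representations $\sigma$ of $G/N \cong \Sp(6;\FF_2)$ with subgroups $H_{1,2}$ and $H_{2,3}$, which \ref{prop:sp6} bounds by $1-3/10$.

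The main obstacle is Steps 1--2: getting the phase and determinant bookkeeping of the special Clifford and Pauli groups right. One needs the kernel of $\Cl(3)\to\Sp(6;\FF_2)$ to be exactly $N$, with its scalars contained in $A\cap B$. One also needs exact (not merely up to phase) generation of $\Cl(3)$ by $\Cl(2)_{ab}$ and $\Cl(2)_{bc}$. If the scalar subgroup of $\Cl(3)$ is larger than what $A\cap B$ contains, I would instead enlarge $A$ and $B$ by the center. This does not change $\Pi(\rho|\cdot)$ on irreps where the center acts trivially, and it forces the relevant $\Pi$'s to vanish otherwise.
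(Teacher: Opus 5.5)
Your Steps 1--4 follow the paper's proof: factor out the Pauli group with \ref{lem:factoringNormal}, then apply \ref{prop:sp6}. The point you flag as the main obstacle is a genuine gap, and your fallback does not close it.

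Since $\det(e^{\ii\pi/4}\one_8)=1$, the scalar $\omega=e^{\ii\pi/4}\one_8$ lies in $\Cl(3)$. So the kernel of $\Cl(3)\to\Sp(6;\FF_2)$ is $\langle\omega\rangle$ times the Pauli group. But $\omega\notin\Cl(2)_{ab}$, because $\det(e^{\ii\pi/4}\one_4)=-1$. Hence $\omega$ cannot be put into $N_2\le A\cap B$; in fact no admissible $N_1N_2N_3$ contains it. You can therefore only factor out the determinant-one Pauli group $\langle\calP_3,\ii\one\rangle$. Its quotient is a central $\ZZ_2$-extension of $\Sp(6;\FF_2)$, not $\Sp(6;\FF_2)$ itself.

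Your Step 2 (generation) is fine. We have $[\omega\,\mathrm{CNOT}_{ab},\omega\,\mathrm{CNOT}_{bc}]=\mathrm{CNOT}_{ac}$, and conjugating by $\omega\,\mathrm{SWAP}_{bc}\in B$ gives $\mathrm{CNOT}_{ab}$, so $\omega\in\langle A,B\rangle$.

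Enlarging $A,B$ by the center goes the wrong way. A larger group has a smaller invariant projector. On irreps with $\rho(\omega)=-1$ and $\rho(\ii\one)=\one$ you get $\Pi(\rho|A\langle\omega\rangle)=0$, while $\Pi(\rho|A)$ need not vanish. \ref{prop:sp6} says nothing about these irreps.

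These irreps actually violate the bound.
\begin{itemize}
\item Apply the Galois automorphism $e^{\ii\pi/4}\mapsto e^{5\ii\pi/4}$ entrywise. It sends every $g\in\Cl(3)$ to $\psi(g)\,g$, where $\psi$ is a character with $\psi(\omega)=\psi(H)=-1$ that is trivial on Paulis.
\item Hence $\Cl(3)/\langle\calP_3,\ii\one\rangle\cong\ZZ_2\times\Sp(6;\FF_2)\cong W(E_7)$, with the image of $\omega$ going to $-1\in W(E_7)$.
\item The transvection lifts $e^{\ii\pi P/4}$ in $\Cl(2)_{ab}$ have $\psi=-1$. So $\Cl(2)_{ab}$ and $\Cl(2)_{bc}$ map onto the reflection subgroups of two $A_5$ root subsystems.
\item Take the $28$-dimensional odd part of the permutation representation of $W(E_7)$ on its $56$ minuscule weights. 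There $-1$ acts as $-1$, so $\Pi(\rho|\Cl(3))=0$.
\item The fixed vectors of each $A_5$ reflection subgroup are $\mathbf 1_O-\mathbf 1_{-O}$ for the three pairs of $6$-element orbits in $56=(6,3)\oplus(\bar6,\bar3)\oplus(20,1)$.
\item The normalized Gram matrix between the two triples is $\frac16$ times a rank-one matrix of $\pm1$'s. Its largest singular value is $1/2$, so the norm is $\frac{1+1/2}{2}=3/4>7/10$.
\item As a check, the even part gives exactly $7/10$, matching \ref{prop:sp6}.
\end{itemize}

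So the corollary as literally stated, with $\Cl(n)$ the determinant-one normalizer, is false, and no repair of Step 3 can rescue it. The paper's own proof asserts that $\Cl(3)$ modulo the Pauli group is $\Sp(6;\FF_2)$. That proof, like yours, establishes the bound only for representations in which $e^{\ii\pi/4}\one$ acts trivially. Equivalently, it proves the bound when the two-qubit Clifford blocks are allowed determinant $\pm1$, which is exactly your enlarged-subgroup statement.
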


\begin{proof}
    The group of all Pauli operators on the 3 qubits (of determinant~$1$)
    is, by definition of Clifford unitary,
    a normal subgroup of~$\Cl(3)$.
    Moreover, it is a product of three Pauli groups, one on each qubit,
    that are element-wise commuting.
    By~\ref{lem:factoringNormal},
    the essential norm can be estimated by passing to the quotient group
    of the Clifford group by the Pauli group.
    This quotient group is well known to be isomorphic 
    to the symplectic group~$\Sp(6,\FF_2)$~\cite[\S II.\,D]{Haah_2017}.
\end{proof}

\begin{theorem}\label{thm:additiveClifford}
    Consider $n$ qubits labeled by~$1,2,\ldots,n$.
    Let $\Cl_{i,j} \cong \Cl(2)$ be the Clifford group acting on two qubits~$i,j$.
    Then, for every representation~$\rho$ of the full Clifford group~$\Cl(n)$,
    we have
    \begin{align}
        \label{eq:1d-random-Clifford}\norm*{
            \frac{\Pi(\rho|\Cl_{1,2}) + \Pi(\rho|\Cl_{2,3}) + \cdots + \Pi(\rho|\Cl_{n-1,n})}{n-1}
            -
            \Pi(\rho|\Cl(n))
         }
        &\le 1 - \frac{1}{5(n-1)} \, ,\\
        \label{eq:all-to-all-random-Clifford}\norm*{
            \frac{2}{n(n-1)}\sum_{i < j}\Pi(\rho|\Cl_{i,j}) 
            -
            \Pi(\rho|\Cl(n))
         }
        &\le 1 - \frac{1}{5(n-1)} \, . 
    \end{align}
\end{theorem}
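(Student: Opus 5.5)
The plan is a Knabe-type local-to-global argument. For \eqref{eq:1d-random-Clifford} the 3-qubit input is \ref{cor:3qubitClifford}. Fix a representation~$\rho$ of~$\Cl(n)$ with $n \ge 3$; the case $n=2$ is trivial because the average is then exactly $\Pi(\rho|\Cl(n))$. Set $Q_j = \one - \Pi(\rho|\Cl_{j,j+1})$, which are orthogonal projectors, and $H = \sum_{j=1}^{n-1} Q_j \succeq 0$. The claimed inequality is equivalent to
\begin{equation}
    H \succeq \tfrac15 \bigl(\one - \Pi(\rho|\Cl(n))\bigr),
\end{equation}
because the averaged operator equals $\one - H/(n-1)$, and $H$ commutes with $\Pi(\rho|\Cl(n))$ and annihilates its range. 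The kernel of $H$ consists of vectors invariant under every $\Cl_{j,j+1}$. Since nearest-neighbor two-qubit Cliffords generate $\Cl(n)$, this kernel is exactly the range of $\Pi(\rho|\Cl(n))$. It therefore suffices to prove $H^2 \succeq \frac15 H$.

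\textbf{Local input.} Restrict $\rho$ to the Clifford group on qubits $j,j+1,j+2$ and apply \ref{cor:3qubitClifford}. The operator $Q_j + Q_{j+1}$ commutes with $\Pi(\rho|\Cl(3)_{j,j+1,j+2})$ and vanishes on its range, since vectors invariant under the big group are invariant under the subgroups. On the complement, $Q_j+Q_{j+1} = 2\bigl(\one - \tfrac12(\Pi_{j}+\Pi_{j+1})\bigr)$ has eigenvalues at least $2\cdot\frac{3}{10} = g := \frac35$. Hence $(Q_j+Q_{j+1})^2 \succeq g(Q_j+Q_{j+1})$, which rearranges to
\begin{equation}
    \{Q_j,Q_{j+1}\} \succeq (g-1)(Q_j+Q_{j+1}).
\end{equation}

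\textbf{Global step.} Expand
\begin{equation}
    H^2 = H + \sum_{j=1}^{n-2}\{Q_j,Q_{j+1}\} + \sum_{|j-k|\ge 2} Q_jQ_k .
\end{equation}
For $|j-k|\ge2$ the groups $\Cl_{j,j+1}$ and $\Cl_{k,k+1}$ commute element-wise, so $Q_j$ and $Q_k$ are commuting projectors and $Q_jQ_k\succeq 0$. Apply the local inequality to the middle sum and note that $\sum_{j=1}^{n-2}(Q_j+Q_{j+1}) \preceq 2H$, with the boundary terms appearing only once. Since $g-1<0$, this gives
\begin{equation}
    H^2 \succeq H + 2(g-1)H = (2g-1)H = \tfrac15 H,
\end{equation}
which proves \eqref{eq:1d-random-Clifford}. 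This is exactly the Knabe threshold computation, and the main point to check is that the local gap $3/5$ exceeds the threshold $1/2$, which \ref{cor:3qubitClifford} supplies.

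\textbf{All-to-all.} For \eqref{eq:all-to-all-random-Clifford} I would not redo the Knabe argument with triangles, since the local overlap count grows with~$n$ and the bound would degenerate. Instead I would average the chain bound over orderings. For a permutation $\pi$ of the qubits, the argument above applies verbatim to the chain $\pi(1),\pi(2),\ldots,\pi(n)$, because relabeling qubits preserves everything used. This gives $H_\pi := \sum_j Q_{\pi(j)\pi(j+1)} \succeq \frac15(\one-\Pi(\rho|\Cl(n)))$, where $Q_{ab} = \one - \Pi(\rho|\Cl_{a,b})$. For uniformly random $\pi$, each unordered pair $\{i,j\}$ is adjacent with probability $(n-1)/\binom n2$. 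Hence
\begin{equation}
    \avg_\pi \frac{H_\pi}{n-1} = \frac{2}{n(n-1)}\sum_{i<j} Q_{ij} \succeq \frac{1}{5(n-1)}\bigl(\one-\Pi(\rho|\Cl(n))\bigr).
\end{equation}
As in the chain case, this operator commutes with and vanishes on the range of $\Pi(\rho|\Cl(n))$, which yields \eqref{eq:all-to-all-random-Clifford}.
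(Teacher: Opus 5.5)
Your proposal is the paper's proof. It uses Knabe's bound with the three-qubit gap $3/5$ from \ref{cor:3qubitClifford}, drops the commuting far-apart products and the boundary terms to get $H^2\succeq\frac15 H$, and obtains the all-to-all case by averaging over qubit relabelings. The only difference is that you average the operator inequality $H_\pi\succeq\frac15(\one-\Pi(\rho|\Cl(n)))$ instead of applying Jensen to norms; this is equivalent and avoids the paper's separate remark about trivial summands.

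One caveat concerns the step you make explicit, $\ker H=\operatorname{range}\Pi(\rho|\Cl(n))$, which the paper leaves implicit in ``one minus the norm is the spectral gap of~$\calH$ divided by~$n-1$''. With $\Cl(n)\le\SU(2^n)$ and $\Cl_{i,i+1}=\Cl(2)\otimes\one$ as literally defined, this holds only up to central phases for $n\ge 4$. Products of such elements have entries in $\mathbb{Q}(e^{\ii\pi/4})$, so they miss $e^{2\pi\ii/2^n}\one\in\Cl(n)$, and a nontrivial character of $\Cl(n)$ that is trivial on the generated subgroup violates the bound. Since the paper relies on the same identification, this is a normalization caveat on the statement rather than a gap relative to the paper. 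It is fixed by reading $\Cl(n)$ as the group generated by the two-qubit Cliffords, or by restricting to $\rho$ trivial on the center.
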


\begin{proof}
    The second inequality~\ref{eq:all-to-all-random-Clifford}
    is a simple consequence of~\ref{eq:1d-random-Clifford}.
    Consider qubit permutation unitaries~$U_\sigma \in \Cl(n)$
    such that $U_\sigma \Cl_{i,j} U_\sigma^{-1} = \Cl_{\sigma(i),\sigma(j)}$,
    where $\sigma$ ranges over the symmetric group~$\symgp_n$.
    Since $\rho$ is a representation, $\rho(U_\sigma) \Pi(\rho|\Cl_{i,j}) \rho(U_\sigma^{-1}) = \Pi(\rho|\Cl_{\sigma(i),\sigma(j)})$.
    By Jensen's inequality,
    \begin{equation}\begin{split}
        \norm*{\avg_{\sigma \in \symgp_n} \Pi(\rho|\Cl_{\sigma(1),\sigma(2)})}
        &= \norm*{\avg_{\sigma \in \symgp_n} \avg_{j=1}^{n-1} \Pi(\rho|\Cl_{\sigma(j),\sigma(j+1)})}\\
        &\le \avg_{\sigma \in \symgp_n} \norm*{\avg_{j=1}^{n-1} \Pi(\rho|\Cl_{\sigma(j),\sigma(j+1)})}\\
        &= \avg_{\sigma \in \symgp_n} \norm*{ \rho(U_\sigma) \left(\avg_{j=1}^{n-1} \Pi(\rho|\Cl_{j,j+1})\right) \rho(U_\sigma^{-1})}\\
        &= \norm*{\avg_{j=1}^{n-1} \Pi(\rho|\Cl_{j,j+1})}.
    \end{split}
    \label{eq:JensenForAllToAll}
    \end{equation}
    This proves the implication from~\ref{eq:1d-random-Clifford} to~\ref{eq:all-to-all-random-Clifford}
    for $\rho$ not containing any trivial representation,
    and hence for any $\rho$ that may contain a trivial representation.

    To prove~\ref{eq:1d-random-Clifford},
    we write $\Pi_j = \one - \Pi(\rho|\Cl_{j,j+1})$ and $\calH = \sum_{i=1}^{n-1} \Pi_j$.
    We use Knabe's method~\cite{knabe1988energy} to lower bound the spectral gap of~$\calH$.
    \begin{align}
        \nonumber\calH^2 
        &= \sum_{i=1}^{n-1} \Pi_i + \sum_{|i-j|=1} \Pi_i \Pi_j + \sum_{|i-j|\ge 2} \Pi_i \Pi_j\\
        &\succeq \calH + \sum_{|i-j|=1} \Pi_i \Pi_j & \text{($\Pi_i \Pi_j = \Pi_j \Pi_i = (\Pi_i \Pi_j)^2 \succeq 0$ if $|i-j|\ge 2$)}\\
        \nonumber &= \calH + \sum_{i=1}^{n-2} (\Pi_i \Pi_{i+1} + \Pi_{i+1} \Pi_i).
    \end{align}
    By~\ref{cor:3qubitClifford} 
    with a representation~$\rho|_{\Cl(3)_{i,i+1,i+2}}$,
    we see that
    \begin{equation}\begin{split}
        (\Pi_i + \Pi_{i+1})^2  &\succeq \Delta (\Pi_i + \Pi_{i+1}) \qquad\qquad (\Delta = 3/5)\, ,\\
        \Pi_i \Pi_{i+1} + \Pi_{i+1}\Pi_i  &\succeq (\Delta-1) (\Pi_i + \Pi_{i+1})\, .
    \end{split}\end{equation}
    Then,\footnote{
        The Knabe method for this step
        naively applied for~\ref{eq:all-to-all-random-Clifford}
        would require $(n-1)(\Delta -1) + 1 > 0$ or $\Delta > (n-2)/(n-1)$.
        This is not fulfilled for $n > 3$.
    }
    \begin{align}
        \calH^2 
        &\succeq \calH + (\Delta - 1)\sum_{i=1}^{n-2} (\Pi_i + \Pi_{i+1})
        = (2\Delta - 1)\calH + (1-\Delta)(\Pi_1 + \Pi_{n-1})\\
        &\succeq (2\Delta - 1) \calH\, . \nonumber
    \end{align}
    One minus the norm in the claim is 
    the spectral gap of~$\calH$ divided by~$n - 1$.
\end{proof}

\begin{theorem}[implying ~\ref{thm:main}(4)]\label{thm:BRCC}
    The spectral gap of the Brickwork Random Clifford Circuit of depth~$2$
    on (the one-dimensional chain of) $n$ qubits for any $n \ge 3$
    is 
    \begin{equation}
        \Delta(\nu_{\mathsf{BRCC}}, \Cl(n)) \ge \frac 1 {80} \, .
    \end{equation}
\end{theorem}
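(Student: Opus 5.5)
The plan is to deduce the brickwork bound from the averaged bound of \ref{thm:additiveClifford} via the detectability lemma (\ref{lem:detectability}). The difficulty is that the naive averaged gap is only $1/(5(n-1))$, which decays with $n$. The detectability lemma compensates through the factor $L\ell^{-2}$: with $L = n-1$ subgroups, the loss in passing from $\Sigma$ to $\convo$ is exactly cancelled.

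\textbf{Step 1 (setup).} Set $H_j = \Cl_{j,j+1}$ for $j = 1,\ldots,n-1$. Each $H_j$ commutes element-wise with every $H_k$ for $|j-k|\ge 2$, so it fails to commute with at most $\ell = 2$ others. By \ref{thm:additiveClifford}, the average $\Sigma = \frac{1}{n-1}\sum_j \mu(H_j)$ satisfies
\[
\Delta(\Sigma,\rho) \ge \frac{1}{5(n-1)}
\]
for every representation $\rho$ of $\Cl(n)$. Here I note that $M(\Sigma,\rho) = \avg_j \Pi(\rho|H_j)$. Also, because each $H_j$ is contained in $\Cl(n)$, we have $\Pi(\rho|H_j)\Pi(\rho|\Cl(n)) = \Pi(\rho|\Cl(n))$, so the essential norm in \ref{defn:spectralgap} is exactly the quantity bounded in \ref{thm:additiveClifford}.

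\textbf{Step 2 (detectability).} The brickwork measure $\nu_{\mathsf{BRCC}}$ is a convolution of the $\mu(H_j)$, taken in the order odd layer then even layer. The ordering of factors in \ref{lem:detectability} is arbitrary: the lemma as transcribed allows any enumeration. If needed, I would relabel the subgroups so that the odd ones come first; commutativity within a layer means the product does not depend on the internal order. Then
\[
\Delta(\nu_{\mathsf{BRCC}},\rho) \ge \frac14 \min\Bigl(1,\ (n-1)\cdot 2^{-2}\cdot \frac{1}{5(n-1)}\Bigr) = \frac14\cdot\frac{1}{20} = \frac{1}{80}.
\]
Taking the infimum over $\rho$ gives the claim. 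The hypothesis $\ell\ge 1$ requires $n\ge 3$, which matches the statement.

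\textbf{Main obstacle.} The only subtle point is confirming that \ref{lem:detectability} applies to the specific brickwork order and not just to the order $\mu(H_1)*\mu(H_2)*\cdots$. I would check this against the form of the detectability lemma in \cite{anshu2016simple,o2022quantum}. That form bounds $\norm{\prod_j P_j - P_{\cap}}$ for any ordering of projectors in which each projector fails to commute with at most $\ell$ others, so it covers the brickwork order. If one wanted to avoid that check, an alternative is to note that the two-layer product equals a reordering of the factors, which is legitimate because factors within a layer commute; the lemma is then applied to that ordering.
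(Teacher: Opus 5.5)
Your proposal is correct and is essentially the paper's proof. The paper also feeds the averaged bound $\Delta(\Sigma)\ge 1/(5(n-1))$ from~\ref{thm:additiveClifford} into~\ref{lem:detectability} with $L=n-1$ and $\ell=2$, obtaining $\frac14\cdot\frac{n-1}{2^2}\cdot\frac{1}{5(n-1)}=\frac1{80}$; your remark about the brickwork ordering is a valid check that the paper leaves implicit, since the lemma holds for any labeling of the subgroups.
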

\begin{proof}
    Let $\Sigma$ be a distribution on~$\Cl(n)$
    obtained by, first, choosing one out of $n-1$ ``bonds'' 
    (a bond is a pair of qubits~$i,i+1$) uniformly at random,
    and, second, choosing an element from~$\Cl(2)_{i,i+1}$ uniformly at random.
    By~\ref{thm:additiveClifford}, we have~$\Delta(\Sigma) \ge 1/(5(n-1))$.
    There are $L = n - 1$ subgroups in the setting of~\ref{lem:detectability},
    and every subgroup commutes element-wise with all but $\ell =2$ other subgroups.
    Therefore, $\Delta(\convo) \ge \frac 1 4 \frac{n -1}{2^2} \frac {1}{5(n -1)} = \frac 1 {80}$.
\end{proof}

\subsection{Gap calculation}

    Since the regular representation of a finite group~$G$ contains all irreps,
    it suffices to find the top eigenvalue of a $\abs G \times \abs G$ self-adjoint matrix.
    Here, $G = \Sp(6;\FF_2)$ has order 
    $\abs G = (2^6 - 2^0)(2^5-2^1)(2^4-2^2)2^3 2^2 2^1 = 1\,451\,520$, a rather large number.
    It is impractical to generate $\abs G \times \abs G$ matrices with more than $2 \times 10^{12}$ entries
    and compute eigenvalues using rational arithmetic.
    The smaller symplectic group~$H = \Sp(4;\FF_2)$ has order~$720$,
    so the Lanczos method to extract top eigenvalues using approximate floating numbers 
    is numerically feasible.
    This gives a relatively quick and convincing result,
    but does not constitute a proof
    since the result is only approximate with weak accuracy guarantee.

    Fortunately, ATLAS~\cite{atlas} database contains valuable information about~$G$,
    and a character table is available in GAP system~\cite{GAP4}, 
    with which we can examine which irrep~$\rho$ of~$G$ 
    contains the trivial irrep of~$H$
    and experiment how to construct those irreps concretely.
    We provide self-contained and explicit calculation for our result,
    which is facilitated by, but does \emph{not} rely on, existing databases~\cite{atlas,GAP4}.
    
\begin{proof}[Proof of~~\ref{prop:sp6}]
    Here we explain the calculation procedure
    with implementation in programming language Julia,
    which runs on a laptop for a few minutes.
    The step numbering is consistent with the Julia code files.

    \begin{enumerate}
        
        \item[Step i.] \emph{Explicit enumeration of~$G$}.
        
        The complete list of all elements of~$G$ is produced 
        by enumerating each line of a $6 \times 6$ symplectic matrix.
        Each element of~$G$ is encoded in a $36$-bit vector, stored as a $64$-bit integer.
        
        \vspace{1ex}
        \item[Step ii.] \emph{Partition of~$G$ into conjugacy classes}.
        
        Bring a copy of the full list of elements,
        and call it unclassified.
        Select any element,
        enumerate all the conjugate elements by going over $\abs G$ conjugations,
        record this conjugacy class~$C_\alpha$,
        and remove these classified elements from the unclassified list.
        Repeat this until the unclassified list becomes empty.
        Confirm that there are $r=30$ conjugacy classes.
        The enumeration of each conjugacy class takes $O(\abs G)$ time.
        Each conjugacy class is sorted (what ordering is unimportant),
        so it is fast to find which conjugacy class an element belongs to.

        \vspace{1ex}
        \item[Step iii.] \emph{Class algebra structure constants}.

        Calculate $c_{\alpha\beta\gamma} \in \ZZ_{\ge 0}$ defined by
        \begin{equation}
            C_\alpha \cdot C_\beta = \sum_{x \in C_\alpha, ~ y \in C_\beta} x y 
            = \sum_{\gamma} \sum_{z \in C_\gamma} c_{\alpha\beta\gamma} z = \sum_{\gamma} c_{\alpha\beta\gamma} C_\gamma \in \CC G \label{eq:classAlgebraConstants}
        \end{equation}
        where $\alpha,\beta,\gamma$ range over all conjugacy classes
        and we have abused notation that $C_\alpha$ denotes the class sum~$\sum_{g \in C_\alpha} g \in \CC G$
        in the group algebra.
        
        The algorithm to calculate $c_{\alpha\beta\gamma}$ is as follows.
        Initialize variables $c_{\alpha\beta\gamma}$ to zero;
        there are $r^3 = 27\,000$ variables.
        For each $C_\gamma$, pick a representative $x_\gamma$.
        For each $g \in G$, determine the conjugacy class $C_\alpha \ni g$ and $C_\beta \ni g^{-1} x_\gamma$
        and increase $c_{\alpha\beta\gamma}$ by~$1$.

        \vspace{1ex}
        \item[Step iv.] \emph{Character table verification}.

        We hard-code the character table of~$G$, which is downloaded from GAP system~\cite{GAP4},
        and verify that the table contains $r$ distinct characters of~$\Sp(6;\FF_2)$
        by
        \begin{proposition}[{\cite[Chapter~3]{Isaacs1976}}]\label{prop:central-character}
            For any unital algebra homomorphism~$\omega$ from the center of the group algebra,
            $\calZ(\CC G) = \CC\{C_\alpha\}$, of a finite group~$G$ to~$\CC$,
            there exists a character~$\chi$ of~$G$ such that
            \begin{equation}
                \omega(C_\alpha) = \frac{\chi(C_\alpha)}{\chi(\one)} \quad \text{ for all }\alpha\, .
            \end{equation}
            Conversely, every character gives a homomorphism from~$\calZ(\CC G)$ to~$\CC$
            by the formula.
        \end{proposition}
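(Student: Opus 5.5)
Both directions rest on the standard correspondence between central characters and simple modules of the semisimple commutative algebra $\calZ(\CC G)$; this is how I would prove \ref{prop:central-character}. By Maschke's theorem, $\CC G \cong \bigoplus_{i=1}^{r} \End(V_i)$, where $V_1,\ldots,V_r$ are the irreps of~$G$. Taking centers gives $\calZ(\CC G) \cong \CC^{r}$, with one coordinate for each irrep. The number of irreps equals the number of conjugacy classes, and the class sums $C_\alpha$ form a basis of $\calZ(\CC G)$, so the dimensions match.

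\emph{Converse direction (character $\Rightarrow$ homomorphism).} Let $\chi$ be the character of an irrep $V$; the formula only makes sense for irreducible characters, so ``character'' means irreducible character here. The operator $\rho(C_\alpha)$ commutes with $\rho(G)$, so Schur's lemma gives $\rho(C_\alpha) = \omega_\chi(C_\alpha)\one_V$ for some scalar. Taking traces yields $|C_\alpha|\,\chi(g_\alpha) = \omega_\chi(C_\alpha)\chi(\one)$ for any $g_\alpha \in C_\alpha$. Hence $\omega_\chi(C_\alpha) = \chi(C_\alpha)/\chi(\one)$, where $\chi(C_\alpha)$ is read as $\chi$ applied to the class sum. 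Since $z \mapsto \rho(z)$ restricted to $\calZ(\CC G)$ is a unital algebra homomorphism into the scalars $\CC\one_V$, the map $\omega_\chi$ is a unital algebra homomorphism $\calZ(\CC G)\to\CC$.

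\emph{Forward direction (homomorphism $\Rightarrow$ character).} I would first check that the $r$ maps $\omega_{\chi_i}$ are distinct. Under $\calZ(\CC G)\cong\CC^r$, the map $\omega_{\chi_i}$ is the $i$-th coordinate projection, because the central idempotent $e_i$ acts as $1$ on $V_i$ and as $0$ on $V_j$ for $j\ne i$. Next, every unital algebra homomorphism $\omega:\CC^r\to\CC$ is a coordinate projection. Indeed, the images $\omega(e_i)$ are idempotents in~$\CC$, so each is $0$ or $1$. They are mutually orthogonal because $e_ie_j=0$ for $i\ne j$, and they sum to $1$ by unitality. So exactly one $\omega(e_i)$ equals $1$, and then $\omega = \omega_{\chi_i}$.

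No step is a real obstacle; the only point needing care is the identification $\calZ(\CC G)\cong\CC^r$ via Wedderburn--Maschke. In the verification step this gives a finite check. Because the $c_{\alpha\beta\gamma}$ are the structure constants of $\calZ(\CC G)$, a table row $\chi$ gives a homomorphism exactly when $\chi(\one)\neq 0$ and
\[
\omega(C_\alpha)\,\omega(C_\beta)=\sum_\gamma c_{\alpha\beta\gamma}\,\omega(C_\gamma)\quad\text{for all }\alpha,\beta,
\]
where $\omega(C_\alpha) = |C_\alpha|\chi(x_\alpha)/\chi(\one)$ and $x_\alpha$ is any representative of $C_\alpha$. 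One also needs $\omega(C_{\{\one\}})=1$, which is automatic. If all $30$ hard-coded rows pass this test and give pairwise distinct $\omega$'s, the forward direction shows they give all $r=30$ central characters. Given these ratios $\chi(x_\alpha)/\chi(\one)$, the degrees $\chi(\one)$ are pinned down by the orthogonality relation $\sum_\alpha |C_\alpha|\,|\chi(x_\alpha)|^2=|G|$.
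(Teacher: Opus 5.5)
Your proof is correct and follows the same route as the paper: Schur's lemma for the converse, and the identification $\calZ(\CC G)\cong\CC^r$ for the forward direction. Your use of the central idempotents to show that $\omega_{\chi_i}$ is the $i$-th coordinate projection also supplies a step that the paper's bare count (``exactly $r$ homomorphisms and exactly $r$ characters'') leaves implicit, namely that distinct irreducible characters give distinct homomorphisms.
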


        \begin{corollary}[{\cite{SCHNEIDER1990601}}]\label{cor:CharacterCriterion}
            A class function~$\chi$ is a character of a finite group~$G$ 
            if and only if all of the following are true.
            \begin{enumerate}
                \item $\chi(\one) > 0$.
                \item $\sum_{g \in G} \abs{\chi(g)}^2 = \abs G$.
                \item 
                $\sum_{\gamma} c_{\alpha\beta\gamma} \chi(\one)\chi(C_\gamma) = \chi(C_\alpha) \chi(C_\beta)$ 
                for all conjugacy classes~$C_\alpha,C_\beta,C_\gamma$
                where $c_{\alpha\beta\gamma}$ are the class algebra constants 
                defined by~\ref{eq:classAlgebraConstants}.
            \end{enumerate}
        \end{corollary}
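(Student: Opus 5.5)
The criterion should follow from \ref{prop:central-character} together with the column orthogonality of irreducible characters. The idea is that condition (c) says exactly that $\omega_\chi(C_\alpha) := \chi(C_\alpha)/\chi(\one)$ is a unital algebra homomorphism $\calZ(\CC G)\to\CC$. Here $\chi(C_\alpha)$ denotes the value of $\chi$ on any element of $C_\alpha$; I will write $\abs{C_\alpha}$ for the class size where it is needed. Throughout, ``character'' means an irreducible character, as in the step of the proof where the table is verified to list the $r$ distinct irreps. The first step is to pin down the normalization. If (c) is read literally with class \emph{values} $\chi(C_\gamma)$, the homomorphism condition is
\begin{equation*}
\omega(C_\alpha)\,\omega(C_\beta)=\sum_\gamma c_{\alpha\beta\gamma}\,\omega(C_\gamma).
\end{equation*}
Since $\omega(C_\alpha)$ is the central character, i.e.\ $\abs{C_\alpha}\chi(C_\alpha)/\chi(\one)$ with the size factor, I will read (c) with the class-sum evaluation, $\chi(C_\alpha)\to\abs{C_\alpha}\chi(x_\alpha)$. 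This convention also makes (c) homogeneous of degree two, which is consistent with $\chi(\one)$ appearing once on the left. With this reading, (c) is equivalent to $\omega_\chi$ being multiplicative on the basis $\{C_\alpha\}$, and hence on all of $\calZ(\CC G)$. Unitality is automatic, because $C_{\{\one\}}=\one$ gives $\omega_\chi(\one)=1$, and (a) guarantees that we are never dividing by zero.

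For the forward direction, suppose $\chi$ is an irreducible character of a representation $\rho$. By Schur's lemma, $\rho(C_\alpha)$ is central in $\End$ of an irreducible space, so it is a scalar. Taking traces identifies that scalar as $\abs{C_\alpha}\chi(x_\alpha)/\chi(\one)$, and therefore $\omega_\chi$ is an algebra homomorphism; this is the converse half of \ref{prop:central-character}. Condition (a) holds since $\chi(\one)=\dim\rho$. Condition (b) is the first orthogonality relation.

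For the reverse direction, suppose (a), (b), (c) hold. Then $\omega_\chi$ is a unital homomorphism, so by \ref{prop:central-character} there is an irreducible character $\psi$ with $\chi(g)/\chi(\one)=\psi(g)/\psi(\one)$ for all $g$. Hence $\chi=\lambda\psi$ with $\lambda=\chi(\one)/\psi(\one)>0$ by (a). Condition (b) then gives $\lambda^2\sum_g\abs{\psi(g)}^2=\abs G$. Since $\sum_g\abs{\psi(g)}^2=\abs G$ by orthogonality, $\lambda^2=1$, and with $\lambda>0$ this forces $\lambda=1$, so $\chi=\psi$.

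The only delicate point I expect is the normalization convention in (c), whether class values or class sums are meant. Everything else is Schur's lemma plus orthogonality. If the authors intended the literal class-value form, I would restate (c) with the factor $\abs{C_\gamma}$ made explicit, or equivalently use structure constants for the normalized class sums, and then run the same argument. In the application, one then checks (a)--(c) for each of the $r=30$ hard-coded rows with exact arithmetic. Distinctness of the rows, together with $r$ being the number of classes, shows the table is complete.
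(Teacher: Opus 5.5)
Your proposal is correct and essentially reproduces the paper's proof. Condition (c) makes $\omega=\chi(\one)^{-1}\chi$ a unital algebra homomorphism on $\calZ(\CC G)$, \ref{prop:central-character} then makes $\chi$ a positive multiple of an irreducible character, and (a)--(b) force that multiple to be $1$; the forward direction is Schur's lemma plus orthogonality, and your class-sum reading of $\chi(C_\gamma)$ is exactly the paper's stated abuse of notation from Step~iii, while your explicit $\lambda=\chi(\one)/\psi(\one)>0$ step is a slightly more careful version of the paper's remark that (b) fixes $\chi(\one)$ up to a sign and (a) fixes the sign.
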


        Since we have computed~$c_{\alpha\beta\gamma}$ in Step~iii,
        the three conditions of~\ref{cor:CharacterCriterion} are readily checked.
        We were unsure if the ordering of the conjugacy classes 
        in the table produced by GAP (the column ordering)
        was the same as our ordering.
        Since we have conjugacy class sizes and the order of the representative group elements,
        most of the columns are unambiguous,
        but there were 5 pairs of columns that were left undecided based on the size and order.
        We examined all $2^5$ permutations of columns and found exactly one that passes
        the homomorphism criteria for all rows in the table.
        We confirm that the table from GAP system~\cite{GAP4} is indeed the character table.

        Since the table consists of integers only,
        arithmetic stays mostly within the rational numbers;
        we do not have to implement any algebraic number (field extensions of~$\mathbb Q$)
        arithmetic routines,
        at least until we calculate eigenvalues of certain self-adjoint matrices of rational entries.

        \begin{proof}[Proof of~\ref{prop:central-character}]
            We prove the converse first.
            Let $\rho$ be an irrep that gives a character~$\chi$.
            We have an algebra homomorphism
            \begin{equation}
                \omega : \calZ(\CC G) \ni x \mapsto \frac{\chi_\rho(x)}{\chi_\rho(\one)} \in \CC
            \end{equation}
            because for any~$x \in \calZ(\CC G)$, the represented operator~$\rho(x)$
            is proportional to the identity by Schur's lemma
            \begin{equation}\begin{split}
                \rho(x) = \frac{\chi_\rho(x)}{\chi_\rho(\one)} \rho(\one) \, ,          
            \end{split}\end{equation}
            implying that for all~$x,y \in \calZ(\CC G)$ it holds that
            \begin{equation}
                \frac{\chi_\rho(xy)}{\chi_\rho(\one)} \rho(\one) 
                = \rho(xy) = \rho(x)\rho(y) 
                = \frac{\chi_\rho(x)}{\chi_\rho(\one)} \frac{\chi_\rho(y)}{\chi_\rho(\one)} \rho(\one) \, .
            \end{equation}
            
            For the forward assertion,
            we recall that the class sums generate the center~$\calZ(\CC G)$ of~$\CC G$, 
            a commutative $*$-algebra of dimension~$r$.
            A finite dimensional commutative $*$-algebra 
            is always isomorphic to a direct product ring~$\CC^r$
            where $r$ is the dimension of the algebra.
            In particular, such a product ring has exactly $r$ unital algebra homomorphisms into~$\CC$.
            Since there are exactly $r$ characters, 
            the assertion is proved.
        \end{proof}

        \begin{proof}[Proof of~\ref{cor:CharacterCriterion}]
            The forward implication is obvious from~\ref{prop:central-character} and~\ref{eq:classAlgebraConstants}.
            For the backward implication,
            we define $\omega = \chi(\one)^{-1}\chi$, linearly extended to~$\calZ(\CC G)$.
            We show that $\omega$ is an algebra homomorphism $\calZ(\CC G) \to \CC$.
            The unitality and $\CC$-linearity holds by definition.
            The multiplicative property of~$\omega$ should be checked 
            for a basis of~$\calZ(\CC G)$.
            So, it suffices to check 
            $\omega(C_\alpha \cdot C_\beta) = \omega(C_\alpha) \omega(C_\beta)$
            for all $\alpha$ and $\beta$,
            which amounts to 
            $\sum_{\gamma} c_{\alpha\beta\gamma} \omega(C_\gamma) = \omega(C_\alpha) \omega(C_\beta)$.
            This is true by~(c).
            We conclude that $\chi$ is a unital homomorphism.
            Condition~(b) fixes $\chi(\one)$ up to a sign,
            which is fixed by~(a). 
            Now, \ref{prop:central-character} guarantees that~$\chi$ is a character.
        \end{proof}
        
        \vspace{1ex}
        \item[Step v.]\emph{Embedding of $H$ into $G$}.
        
        Enumerate all elements of~$\Sp(4;\FF_2)$ by the same procedure as for~$G=\Sp(6;\FF_2)$.
        The number of elements is found to be~$(2^4-2^0)(2^3-2^1)2^2 2^1 = 720$.
        Embed $\Sp(4;\FF_2)$ to obtain a list of all elements of~$H_{1,2} \le G$.
        This list has the same data structure as the list of elements of~$G$.
        Conjugate~$H_{1,2}$ by an involution corresponding to the swap of qubit~$1$ and $3$
        to obtain a list of all elements of~$H_{2,3}$.
        Verify that they form groups and perform other consistency checks.

        \vspace{1ex}
        \item[Step vi.]\emph{The 10 abstract irreps}.
        
        If $\rho$ does not contain any trivial irrep of~$H$ (multiplicity zero),
        then the projector~$\Pi(\rho|H_{1,2})$ is zero
        and the claim on the norm is vacuously true.
        The criterion for a positive multiplicity is
        \begin{equation}
            \frac 1 {|H|} \sum_{h \in H} \overline{\chi_1(h)}  \chi_\rho(h) 
            = \frac 1 {|H|} \sum_{h \in H} \chi_\rho(h)> 0 \label{eq:H-trivial-multiplicity}
        \end{equation}
        where $\chi_1(h) = 1$ is the character of the trivial irrep of~$H$.
        It turns out that all positive multiplicity irreps of~$G$ have dimension at most~$280$,
        so explicit spectral analysis becomes easy.

        Using the verified character table of~$G$ from Step~iv
        and the embedded subgroup elements from Step~v,
        we evaluate~\ref{eq:H-trivial-multiplicity} 
        by summing over all~$\abs{H_{1,2}} = \abs{H_{2,3}} = 720$ elements.
        Report those irreps of~$G$ that have nonzero multiplicity.
        There are exactly 10 such $G$-irreps including the trivial irrep.
        In any case, the multiplicity did not exceed~$3$.
        The found 10 irreps are referred to in the Julia code
        by the row index in the character table,
        in which the 6th and 8th rows have degree $27$ and $35$, respectively.

        \vspace{1ex}
        \item[Step vii.]\emph{Generating the 10 irreps}.

        Over~$\FF_2$, a symplectic form~$\lambda$ is symmetric,
        and we consider quadratic refinements~$\eta$ of~$\lambda$.
        Since the symplectic group~$G$ preserves~$\lambda$,
        it acts on the set of all quadratic refinements.

        In terms of concrete matrices, a quadratic refinement
        is nothing but a matrix~$\eta$ such that $\eta + \eta^T = \lambda$
        with an equivalence relation that $\eta \cong \eta'$ if and only if $\eta - \eta'$
        is symmetric matrix with zero diagonal.
        Hence, we have a unique matrix representation of a quadratic refinement
        as a upper triangular matrix~$\eta$ over~$\FF_2$
        such that the strict upper triangular part of~$\eta$ must match that of~$\lambda$,
        but the diagonal of~$\eta$ is completely arbitrary.
        This implies that there are exactly~$2^6 = 64$ quadratic refinements of~$\lambda$.
        
        Since the action of~$G$ on~$\eta$ is by congruence,
        there are at least two orbits 
        distinguished by the Arf invariant $2^{-3}\sum_{v \in \FF_2^6} (-1)^{v^T \eta v} \in \{\pm 1\}$.
        There are $36$ quadratic refinements with the Arf invariant~$+1$,
        and $28$ with the Arf invariant~$-1$.
        This is directly confirmed by enumerating the orbits in our Julia code.
        The two orbits do not give irreps 
        since the sum of all orbit elements over~$\CC$ gives a trivial irrep.
        At least, they imply that there are $27$- and $35$-dimensional representations of~$G$,
        which are verified irreducible.

        Construct (again) two representation maps arising from the $G$-action 
        on the quadratic refinements.
        Each map here outputs matrices $\rho(g) - W$ of dimension the size~$N = 28$ or $N=36$ of the orbit,
        where $W$ is the rank-$1$ projector onto the trivial subrepresentation.
        So, the output $\rho(g) - W$ is not unitary on~$\CC^N$,
        but is a direct sum of a unitary acting on the orthogonal complement of the image of~$W$ 
        and zero on~$W$.
        So, strictly speaking this is not a representation, 
        but it is fine for spectral analysis
        and character evaluations.
        We verify that the character of those $27$- and $35$-dimensional irreps
        match two out of the ten important character table rows from Step~vi.
        These are denoted as $\chi_6$ and $\chi_8$, respectively, in the Julia code.
        Here the subscripts~$6$ and~$8$ are just row index of the character table.

        We directly construct $\wedge^2 \chi_6$, $\wedge^2 \chi_8$, and $\chi_6 \otimes \chi_8$,
        and find by evaluation of characters
        that they collectively contain all irreps from Step~vi, except for the trivial $G$-irrep.
        Hence, we find that just 3 concrete representations of~$G$
        suffice for the spectral gap calculation.
        We actually include $\chi_6$ and $\chi_8$ as well for redundant checks.
        The largest of these three has degree~$27 \cdot 35 = 945$,
        but our Julia code manipulates matrices of dimension $(27+1)\cdot(35+1) = 1008$.

        Using the embedded groups~$H_{1,2}, H_{2,3}$ from Step~v,
        we construct $\Pi(\rho|H_{1,2})$ and $\Pi(\rho|H_{2,3})$
        and store the matrices on those five representations.
        From the multiplicities of the trivial representation of~$H$
        inside $\rho|_H$ where $\rho$ ranges over those five $G$-representations,
        we know the rank of the matrices and verified them.

        \vspace{1ex}
        \item[Step viii.]\emph{Roots of characteristic polynomial}.
        
        Now with the 5 concrete, relevant, small representations~$\rho$ of~$G$,
        it remains to find eigenvalues of~$\calH = \Pi(\rho|H_{1,2}) + \Pi(\rho|H_{2,3})$.
        Naively, we may have to diagonalize $M \times M$ matrices where $M \le 1008$,
        which is challenging with exact rational arithmetic.
        However, the computation is made much faster
        using the observed fact that the multiplicity of the trivial $H$-irrep in~$\rho$ is at most~$15$,
        computed purely by the characters,
        along with the setting that $\calH$ is a sum of \emph{two} projectors.

        A projector~$\Pi$ of rank~$m$ in an $M$-dimensional space 
        can be written as $\Pi = U U^\dagger$
        where $U$ is an $M \times m$ matrix whose columns form an orthonormal basis of the image of~$\Pi$.
        Consider two such matrices $U_1, U_2$,
        and take the singular value decomposition $S_1 \Sigma S_2^\dagger = U_1^\dagger U_2$
        where $\Sigma = \diag(s_1, \ldots, s_m)$ is diagonal with the singular values 
        and $S_1,S_2$ are $m \times m$ unitaries.
        Let $V_i = U_i S_i$ so that $\Pi_i = V_i V_i^\dagger = U_i U_i^\dagger$.
        The span of the $j$-th column of~$U_1$ and that of~$U_2$
        is invariant under the action of both projectors,
        and in this subspace~$J$, if $s_j = \cos \phi \in (0,1)$,
        the projectors are
        \begin{equation}
            \Pi_1|_J \cong 
            \begin{bmatrix} 
                \cos^2 \frac \phi 2 & \sin \frac \phi 2 \cos \frac \phi 2 \\
                \sin \frac \phi 2 \cos \frac \phi 2 & \sin^2 \frac \phi 2
            \end{bmatrix},
            \qquad
            \Pi_2|_J \cong 
            \begin{bmatrix} 
                \cos^2 \frac \phi 2 & -\sin \frac \phi 2 \cos \frac \phi 2 \\
                -\sin \frac \phi 2 \cos \frac \phi 2 & \sin^2 \frac \phi 2
            \end{bmatrix} \, .
        \end{equation}
        The eigenvalues of~$\Pi_1 + \Pi_2$ on~$J$ are $1 \pm s_j$.
        This conclusion remains true even if~$s_j = 0,1$.

        If we have a nonorthonormal basis written in the columns of~$\tilde U_i$,
        then we can orthogonalize it by diagonalizing $G_i = \tilde U_i^\dagger \tilde U_i$
        with a unitary~$W_i$ and a positive diagonal~$H_i$:
        \begin{equation}
            G_i = W_i H_i W_i^\dagger
            \Longrightarrow
            U_i = \tilde U_i W_i H_i^{-1/2}
        \end{equation}
        We see that the square of the desired singular values~$s_i$ are precisely
        the eigenvalues of
        \begin{equation}\begin{split}
            (U_1^\dagger U_2) (U_2^\dagger U_1) 
            &= 
            (H_1^{-1/2} W_1^\dagger \tilde U_1^\dagger)
            (\tilde U_2 W_2 H_2^{-1/2})
            (H_2^{-1/2} W_2^\dagger \tilde U_2^\dagger)
            (\tilde U_1 W_1 H_1^{-1/2}) \\
            &=
            H_1^{-1/2} W_1^\dagger 
            (\underbrace{\tilde U_1^\dagger \tilde U_2}_{G_{1,2}})
            (W_2 H_2^{-1} W_2^\dagger) 
            (\tilde U_2^\dagger \tilde U_1) W_1 H_1^{-1/2} \\
            &\simeq
            (W_1 H_1^{-1} W_1^\dagger)
            G_{1,2} 
            (W_2 H_2^{-1} W_2^\dagger) 
            G_{1,2}^\dagger \qquad \text{(similarity transform)}\\
            &=
            G_1^{-1} G_{1,2} G_2^{-1} G_{1,2}^\dagger =: K \,\, .
        \end{split}\end{equation}
        The last matrix~$K$ is not self-adjoint,
        but it is better suited for exact rational arithmetic.
        In our Julia code, $\tilde U_i$ is constructed by Gauss elimination
        of explicit projectors~$\abs{H}^{-1} \sum_{h \in H} \rho_i(H)$
        that are obtained by summing over all $\abs H = 720$ matrices.

        It remains to calculate the eigenvalues of~$K$ that is $m \times m$.
        In our case, $m \le 15$ for all 5 representations.
        Compute $\Tr(K^\ell)$ for $\ell = 1,2,\ldots,m$
        and convert them to the coefficients of the characteristic polynomial.
        The roots of this polynomial turns out to be all rational.
        We do not implement exact root finding routines;
        instead,
        based on approximately known roots of this polynomial 
        from approximate eigenvalues of~$\calH$,
        we find a close rational number,
        and verify that it is a root by factorizing the characteristic polynomial.
        \qedhere
    \end{enumerate}
\end{proof}

\begin{remark}
    We used Claude for the implementation.
    We wrote an initial prompt of the to-do list based on the above list of steps,
    and the prompt itself is refined and expanded by Claude Fable~5.
    The refined instruction was given to Claude Code (Sonnet 5)
    to produce the implementation, which we revised.
\end{remark}

\section{Random Pauli Rotations}

Let $\calP_n = \{\sigma^x, \sigma^y, \sigma^z, \one_2\}^{\otimes n} \setminus \{\one_{2^n}\}$
denote the set of nonidentity Pauli operators.
We fix the overall signs for the Pauli operators 
so that there are exactly $4^n - 1$ elements in~$\calP_n$ and they are all self-adjoint.
They form a basis of the (complexified) Lie algebra~$\su(2^n)$ 
that is orthogonal with respect to the Killing form.

Let $\Theta \le \RR/2\pi\ZZ$ be a closed additive group.
We will use two examples of $\Theta$:
the full circle group~$\RR/2\pi \ZZ$ and 
a discrete group~$\{m \frac \pi 4 + 2\pi \ZZ~:~ m \in \ZZ\}$.
Since $\Theta$ is closed, we have a Haar probability measure~$\mu_\Theta$.
For any representation~$\rho$ of~$\SU(2^n)$ and $P \in \calP_n$,
we define an orthogonal projector
\begin{equation}
    \Pi(\rho|\Theta,P) 
    = \avg_{\theta \sim \mu_\Theta} \rho(e^{\ii \theta P}) 
    = \int \rho(e^{\ii \theta P}) ~\rd \mu_\Theta(\theta) \, .
\end{equation}
Define a real number
\begin{equation}
    \Delta_\Theta \in [0,3]
\end{equation}
to be the supremum real number 
such that for \emph{every} representation~$\rho$ of
the closed Lie or finite group $\mathsf G_\Theta \le \SU(2)$ generated 
by~$\{ e^{\ii \theta \sigma^x}, e^{\ii \theta \sigma^y}, e^{\ii \theta \sigma^z} ~:~ \theta \in \Theta\}
\subset \SU(2)$,
\begin{equation}
    \calH_0^2 \succeq \Delta_\Theta \calH_0 \qquad \text{where} \qquad 
    \calH_0 = \Pi(\rho|\Theta,\sigma^x) + \Pi(\rho|\Theta,\sigma^y) + \Pi(\rho|\Theta,\sigma^z) \, .
\end{equation}
In other words, $\Delta_\Theta/3$ is the spectral gap of the random walk on~$\mathsf G_\Theta$
where we choose one~$\sigma^j$ of~$\sigma^x, \sigma^y,\sigma^z$ uniformly at random
and apply $e^{\ii \theta \sigma^j}$ where $\theta \sim \mu_\Theta$.

\begin{lemma}\label{lem:RandomPauliRotation}
    For any representation~$\rho$ of~$\mathsf G_\Theta$, we have 
    \begin{equation}
        \norm[\Big]{
            \avg_{P \sim \calP_n} \Pi(\rho|\Theta,P) - \Pi(\rho|\mathsf G_\Theta)
        }
        \le
        1 - \frac{1 + 4^{n-1}(\Delta_\Theta - 1)}{4^n - 1} \, .
    \end{equation}
\end{lemma}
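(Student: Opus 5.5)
The plan is a Knabe-type argument in the style of the proof of~\ref{thm:additiveClifford}, with anticommuting triples of Pauli operators playing the role of overlapping bonds. I read $\rho$ as a representation of the group generated by all $e^{\ii\theta P}$ with $\theta\in\Theta$, $P\in\calP_n$; the subscript $\mathsf G_\Theta$ in the statement then refers to this $n$-qubit group. I read $\Delta_\Theta$ as the local gap of the complementary projectors $\one-\Pi(\rho|\Theta,\sigma^j)$, consistent with the remark that $\Delta_\Theta/3$ is a spectral gap.

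\emph{Setup.} Put $Q_P=\one-\Pi(\rho|\Theta,P)$ and $\calH=\sum_{P\in\calP_n}Q_P$. The kernel of $\calH$ is the set of vectors invariant under every $e^{\ii\theta P}$ with $\theta\in\Theta$. These elements generate the group, so the kernel is the image of $\Pi(\rho|\mathsf G_\Theta)$. Therefore the claimed bound is equivalent to $\calH^2\succeq c\,\calH$ with $c=1+4^{n-1}(\Delta_\Theta-1)$: on the orthogonal complement of the kernel this gives $\calH\succeq c$, and dividing by $4^n-1$ gives the statement. When $c\le 0$ the claim is trivial.

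\emph{Expansion.} Expanding the square gives
\[
\calH^2=\calH+\sum_{P\ne Q}Q_PQ_Q .
\]
I split the pairs $P\ne Q$ into commuting and anticommuting ones.
\begin{itemize}
\item If $[P,Q]=0$, then $e^{\ii\theta P}$ and $e^{\ii\theta' Q}$ commute. Hence $Q_P$ and $Q_Q$ are commuting projectors, so $Q_PQ_Q\succeq 0$ and these terms can be dropped.
\item If $P$ and $Q$ anticommute, let $R\in\calP_n$ be the Pauli operator equal to $\pm\ii PQ$. Then $\{P,Q,R\}$ is a mutually anticommuting triple satisfying the $\sigma^x,\sigma^y,\sigma^z$ relations, up to signs.
\end{itemize}
Every anticommuting pair lies in exactly one such triple. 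The map $\sigma^x\mapsto P$, $\sigma^y\mapsto Q$, $\sigma^z\mapsto \pm R$ extends to a homomorphism $\SU(2)\to\SU(2^n)$. The possible sign on $R$ is harmless because $\Theta=-\Theta$, so $\Pi(\rho|\Theta,-R)=\Pi(\rho|\Theta,R)$. Composing with $\rho$ gives a representation of the $\SU(2)$-level group $\mathsf G_\Theta$ from the definition of $\Delta_\Theta$.

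\emph{Local bound.} For each triple $T=\{P,Q,R\}$, set $\calH_T=Q_P+Q_Q+Q_R$. The definition of $\Delta_\Theta$ gives $\calH_T^2\succeq\Delta_\Theta\calH_T$, that is,
\[
\sum_{\text{ordered pairs in }T}Q_AQ_B\succeq(\Delta_\Theta-1)\,\calH_T .
\]

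\emph{Counting.} For fixed $P$, exactly half of the $4^n$ Pauli strings anticommute with $P$, namely $2\cdot4^{n-1}$. These pair up into $4^{n-1}$ triples containing $P$. Summing the local bound over all triples therefore yields
\[
\sum_{\text{anticommuting }P\neq Q}Q_PQ_Q\succeq(\Delta_\Theta-1)\,4^{n-1}\,\calH .
\]
Combined with the expansion, this gives $\calH^2\succeq\bigl(1+4^{n-1}(\Delta_\Theta-1)\bigr)\calH$, as required.

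\emph{Main obstacle.} I expect the delicate step to be the identification of each triple with a genuine copy of $\SU(2)$ carrying the rotation groups $\Theta$. This requires checking the sign conventions for $R=\pm\ii PQ$ and confirming that the pulled-back representation is a representation of $\mathsf G_\Theta$, so that $\Delta_\Theta$ applies. The double counting of triples and the commuting-pair positivity are routine.
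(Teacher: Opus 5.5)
Your proposal is correct and is essentially the paper's proof. You expand $\calH^2$ and drop the commuting pairs, whose products are commuting projectors. You then partition the anticommuting ordered pairs into edge-disjoint $\su(2)$-triangles $\{P,Q,\pm\ii PQ\}$, apply $\calH_T^2-\calH_T\succeq(\Delta_\Theta-1)\calH_T$ on each, and use that every Pauli lies in exactly $4^{n-1}$ triangles. Your extra care matches what the paper's argument implicitly uses: reading $\Delta_\Theta$ through the complementary projectors $\one-\Pi$, handling the sign of $R$ via $\Theta=-\Theta$, identifying $\ker\calH$ with the $\mathsf G_\Theta$-invariant subspace, and counting $2\cdot 4^{n-1}$ anticommuting neighbours (the paper's text says degree $4^{n-1}$, though its triangle count is right).
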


This proof is an adaptation of the triangle comparison 
for Kac's random walk~\cite[Theorem~1.1]{caputo2008spectral}.

\begin{proof}
    Define a ``Hamiltonian'' $\calH = \sum_{P \in \calP_n} (\one - \Pi(\rho|\Theta,P))$.
    Writing $F_P = \one - \Pi(\rho|\Theta,P)$ for $P \in \calP_n$,
    we consider the expansion of~$\calH^2$:
    \begin{align}
        \nonumber \calH^2 &= \underbrace{\sum_P F_P}_{\calH} + \sum_{\substack{P\ne Q\\ PQ = -QP}} F_P F_Q + \sum_{\substack{P\ne Q\\ PQ = QP}} F_P F_Q & \text{($F_P^2 = F_P$)}\\
        &\succeq \calH + \sum_{\substack{P\ne Q\\ PQ = -QP}} F_P F_Q & \text{($F_P F_Q = (F_P F_Q)^2 \succeq 0$ if $PQ=QP$)}.
    \end{align}
    Notice that a pair of anticommuting Pauli operators $P,Q$
    generate a subalgebra of~$\su(2^n)$ that is isomorphic to~$\su(2)$.
    Call $\{P, Q, R = -\ii P Q\} \subseteq \calP_n$ a $\su(2)$-triangle.
    Since $\rho|_{\langle P,\,Q,\, R=-\ii PQ\rangle}$ is a representation of~$\SU(2)$,
    we see that $\Delta_\Theta$ applies to every $\su(2)$-triangle:
    \begin{align}
        \nonumber (\underbrace{F_P + F_Q + F_R}_{\calH_T})^2 &\succeq \Delta_\Theta (F_P + F_Q + F_R)\, , \\
        \calH_T^2 - \calH_T &\succeq (\Delta_\Theta - 1) \calH_T \, .
    \end{align}
    Consider a simple graph~$\calG$ with a vertex set~$\calP_n$ 
    and an edge set consisting of anticommuting pairs.
    Recall that the phase-forgotten Pauli group is the additive group~$\FF_2^{2n}$,
    which happens to be a vector space, and
    given a Pauli operator~$P$, 
    a Pauli operator $Q$ that anticommutes with $P$
    if and only if the corresponding $2n$-bit vectors have nonzero symplectic product.
    This implies that $\calG$ is regular with degree~$4^{n -1}$.
    An edge gives a unique $\su(2)$-triangle, 
    and hence different $\su(2)$-triangles do not share an edge.
    We see that
    \begin{equation}
        \sum_{\substack{P\ne Q\\ PQ=-QP}} F_P F_Q = 
        \sum_T (\calH_T^2 - \calH_T)
        \succeq
        (\Delta_\Theta - 1) \sum_T \calH_T
        = 4^{n - 1} (\Delta_\Theta - 1) \calH
    \end{equation}
    where the last equality is
    because every Pauli operator belongs to $4^{n -1}$ $\su(2)$-triangles.
    Therefore,
    \begin{equation}
        \calH^2 \succeq (1 + 4^{n - 1} (\Delta_\Theta - 1)) \calH \, .
    \end{equation}
    One minus the norm in the claim is 
    the spectral gap of~$\calH$ divided by~$\abs{\calP_n} = 4^n - 1$.
\end{proof}

\begin{lemma}
\label{lem:CliffordLocalGap}
    For $\Theta = \{ \frac{m \pi}{4} + 2\pi \ZZ ~:~m \in \ZZ\}$, a cyclic group of order~8,
    the unitary $e^{\ii \theta P}$ with any $\theta \in \Theta$ and $P \in \calP_n$
    is Clifford. Moreover,
    $
        \Delta_\Theta = 3/2.
    $
\end{lemma}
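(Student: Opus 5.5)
The plan has two parts: the Clifford membership claim, which is a one-line conjugation computation, and the value $\Delta_\Theta = 3/2$, which is a finite character-theoretic computation over the irreps of $\mathsf G_\Theta$. That computation is where the difficulty lies, and my own check of one irrep does not give $3/2$.

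\textbf{Clifford membership.} I will first note $\det e^{\ii\theta P} = e^{\ii\theta \Tr P} = 1$, since $P$ is traceless, so $e^{\ii\theta P}\in\SU(2^n)$. Because $\Theta$ is generated by $\pi/4$, it suffices to treat $U=e^{\ii\pi P/4} = (\one + \ii P)/\sqrt 2$. For a Pauli $Q$ commuting with $P$ we get $UQU^{-1}=Q$. If $Q$ anticommutes with $P$, then $UQU^{-1} = U^2 Q = e^{\ii \pi P/2}Q = \ii PQ$, which is again a Pauli operator up to a phase in the special Pauli group. Hence $U$ normalizes the Pauli group and lies in $\Cl(n)$.

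\textbf{Reduction for $\Delta_\Theta$.} Here $\mathsf G_\Theta\le\SU(2)$ is generated by $g_j=e^{\ii\pi\sigma^j/4}$ for $j=x,y,z$. The rotations $g_j$ map to the $90^\circ$ rotations of the cube, so $\mathsf G_\Theta$ is the binary octahedral group of order $48$, with $8$ irreps of dimensions $1,1,2,2,2,3,3,4$. The projector $\Pi(\rho|\Theta,\sigma^j)$ is the projector onto the $g_j$-fixed vectors. The inequality $\calH_0^2\succeq\Delta\calH_0$ says every nonzero eigenvalue of $\calH_0$ is at least $\Delta$, and this can be checked irrep by irrep.

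\textbf{Irrep-by-irrep computation.}
\begin{itemize}
\item Trivial irrep: $\calH_0=3$.
\item Sign irrep: each $g_j$ is an odd element, so $\calH_0=0$.
\item Spinorial irreps (dimensions $2,2,4$): $-\one$ acts as $-1$, and $g_j$ has eigenvalues $e^{\pm\ii\pi/4}$ times roots of unity with no eigenvalue $1$. I will verify this by characters, giving $\calH_0=0$.
\item Two-dimensional irrep factoring through $\symgp_3$: the $g_j$ act as the three reflections. The three fixed lines are equiangular at $60^\circ$, so $\calH_0=\tfrac32\one$. This is where the value $3/2$ comes from, with equality.
\item The two $3$-dimensional irreps: these must be checked, and this is the main obstacle. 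In the twisted one, $-g_j$ has eigenvalues $-1,\mp\ii$, so there are no fixed vectors and $\calH_0=0$. In the adjoint (vector) representation, however, $g_j$ is the $90^\circ$ rotation about axis $j$, whose fixed space is that axis. Then $\calH_0$ is the sum of the three coordinate projectors, which is $\one$, with nonzero eigenvalue $1<3/2$.
\end{itemize}

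\textbf{Main obstacle.} I therefore expect the hardest step to be reconciling the adjoint irrep with the claimed value $3/2$. I would first recheck whether $\Pi(\rho|\Theta,\sigma^j)$ is intended as defined, or whether $\Delta_\Theta$ is meant only over representations where some relevant condition excludes the adjoint. If neither applies, the computation above gives $\Delta_\Theta=1$ rather than $3/2$. The rest of the plan, namely the character computation over the remaining irreps and the $\symgp_3$ irrep attaining $3/2$, is routine.
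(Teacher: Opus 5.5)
Your irrep computations are all correct. The proposal still does not prove the statement: it ends with $\Delta_\Theta = 1$, because you computed the nonzero spectrum of the wrong operator. The displayed formula for $\calH_0$ in the definition of $\Delta_\Theta$ drops the complements. The intended operator is $\calH_0 = \sum_{j}\bigl(\one - \Pi(\rho|\Theta,\sigma^j)\bigr)$, for two reasons.
\begin{itemize}
\item The sentence right after the definition says that $\Delta_\Theta/3$ is the spectral gap of the random walk. That means $\Delta_\Theta$ is three times one minus the largest eigenvalue of $\frac13\sum_j \Pi(\rho|\Theta,\sigma^j)$ away from the invariant vectors.
\item The only place $\Delta_\Theta$ is used, Lemma~\ref{lem:RandomPauliRotation}, applies the inequality to $\calH_T = F_P + F_Q + F_R$ with $F_P = \one - \Pi(\rho|\Theta,P)$.
\end{itemize}
Your literal reading would also break Lemma~\ref{lem:SULocalGap}. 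In the adjoint representation of $\SU(2)$ the three axis projectors again sum to $\one$, which would give $1 < 5/4$. You were right to suspect the definition, but you stopped there instead of resolving it.

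With the correct operator, your own table finishes the proof. Write $\Pi_j = \Pi(\rho|\Theta,\sigma^j)$. You showed that $\sum_j \Pi_j$ acts as $3$, $0$, $\frac32\one$, $\one$, $0$, $0$ on the trivial, sign, $\symgp_3$-type two-dimensional, adjoint, twisted three-dimensional, and spinorial irreps, respectively. Hence $\calH_0 = 3\cdot\one - \sum_j \Pi_j$ acts as $0$, $3$, $\frac32$, $2$, $3$, $3$ on the same irreps.
\begin{itemize}
\item The trivial irrep imposes no condition.
\item The minimum nonzero eigenvalue is $\frac32$, attained on the two-dimensional irrep.
\item The adjoint irrep, which you flagged as the obstacle, contributes $2 > \frac32$.
\end{itemize}
So $\Delta_\Theta = 3/2$.

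This is essentially the paper's computation. The paper quotients by $\{\pm\one\}$ via Lemma~\ref{lem:factoringNormal}; you dispose of the spinorial irreps directly via $g_j^4 = -\one$, which is equally valid. The paper then reads off the spectrum of $\frac13\sum_j \Pi_j$ in the regular representation of $\symgp_4$:
\begin{itemize}
\item eigenvalue $1$ (trivial);
\item $\frac12$ with multiplicity $4$, your two-dimensional irrep;
\item $\frac13$ with multiplicity $9$, your adjoint irrep;
\item $0$ with multiplicity $10$, the sign and twisted irreps.
\end{itemize}
Your Clifford-membership argument matches the paper's.
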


\begin{proof}
    The first claim follows by direct calculation that
    the conjugate $e^{\ii \pi P / 4} Q e^{-\ii \pi P / 4}$ of~$Q$ for any~$P,Q \in \calP_n$
    belongs to the Pauli \emph{group}~$\langle \calP_n, \ii \rangle$:
    since $e^{\ii \pi P / 4} = \one \cos \frac \pi 4 + \ii P \sin \frac \pi 4  = (\one + \ii P)/\sqrt{2}$,
    if $P$ and $Q$ anticommute,
    we see that $e^{\ii \pi P / 4} Q e^{-\ii \pi P / 4} = \ii P Q$ is a Pauli operator.

    The spectral gap calculation can be performed in a similar manner as in~$\Sp(6;\FF_2)$,
    but this case is much simpler.
    The group generated by~$\{ e^{\ii \theta P} ~|~ \theta \in \Theta, P \in \calP_1\}$
    is the group~$\Cl(1)$ of all single-qubit Clifford unitaries of determinant~$1$.
    The order of the group is~$48$.
    The center consists of~$\pm \one$,
    and we can pass to the quotient group~$\Cl(1)/\{\pm \one\}$ of order~$24$ 
    by~\ref{lem:factoringNormal}.
    Since $\SO(3) \cong \SU(2) / \{\pm \one\} \supseteq \Cl(1) / \{\pm \one \}$,
    we can understand the quotient group as a geometric symmetry group.
    
    Indeed, $e^{\ii \pi P / 4}$ corresponds to $90^\circ$ rotation about
    one of the coordinate axes of~$\RR^3$, and $\Cl(1)/\{\pm \one\}$
    is the symmetry group of the octahedron,
    which happens to be isomorphic to~$\symgp_4$, the symmetric group of degree~$4$.
    There are 6 elements of order~$4$ in~$\symgp_4$ in total,
    corresponding to~$e^{\pm \ii \pi P / 4}$ for $P=\sigma^x, \sigma^y, \sigma^z$.
    ($e^{\ii \pi P} \in \Cl(1)$ maps to $\one \in \SO(3)$.)
    One can easily confirm the spectral gap~$1/2$ 
    either in the regular representation or in all irreps of~$\symgp_4$.
\end{proof}

\begin{remark}
    Here is Mathematica code for the gap calculation in the regular representation.
    {\scriptsize   
    \begin{verbatim}
s4 = GroupElements[PermutationGroup[{Cycles[{{1, 2}}], Cycles[{{2, 3}}], Cycles[{{3, 4}}]}]];
PiX = {Cycles[{}], Cycles[{{1, 2, 3, 4}}], Cycles[{{1, 3}, {2, 4}}], Cycles[{{1, 4, 3, 2}}]};
PiY = {Cycles[{}], Cycles[{{1, 2, 4, 3}}], Cycles[{{1, 4}, {2, 3}}], Cycles[{{1, 3, 4, 2}}]};
PiZ = {Cycles[{}], Cycles[{{1, 3, 2, 4}}], Cycles[{{1, 2}, {3, 4}}], Cycles[{{1, 4, 2, 3}}]};
rep[cy_] := Transpose@Map[UnitVector[Length[s4],First@FirstPosition[s4, PermutationProduct[cy, #]]]&, s4]
Eigenvalues[(Total@Map[rep, PiX] + Total@Map[rep, PiY] + Total@Map[rep, PiZ])/12]
{1, 1/2, 1/2, 1/2, 1/2, 1/3, 1/3, 1/3, 1/3, 1/3, 1/3, 1/3, 1/3, 1/3, 0, 0, 0, 0, 0, 0, 0, 0, 0, 0}
\end{verbatim}
    }
\end{remark}

\begin{theorem}[implying \ref{thm:main}(2)]\label{thm:RPCR}
    The spectral gap of the Random Pauli Clifford Rotation on~$\Cl(n)$ is
    \begin{align*}
        \Delta(\nu_{\mathsf{RPCR}}, \Cl(n)) \ge \frac{4^n + 8}{8(4^n - 1)} > \frac{1}{8}.
    \end{align*}
\end{theorem}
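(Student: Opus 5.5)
The plan is to apply \ref{lem:RandomPauliRotation} with $\Theta = \{m\pi/4 + 2\pi\ZZ\}$ and insert the value $\Delta_\Theta = 3/2$ from \ref{lem:CliffordLocalGap}. The moment operator of $\nu_{\mathsf{RPCR}}$ in a representation $\rho$ is exactly $\avg_{P\sim\calP_n}\Pi(\rho|\Theta,P)$. The reason is that the angle $\theta$ is uniform on the cyclic group $\Theta$ of order $8$, so averaging $\rho(e^{\ii\theta P})$ over $\theta$ gives the projector $\Pi(\rho|\Theta,P)$.

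One point needs care. \ref{lem:RandomPauliRotation} is stated for representations of $\mathsf G_\Theta$, but here $\rho$ is a representation of $\Cl(n)$, and the relevant ambient group is the one generated by all $e^{\ii\theta P}$. Its proof only uses two facts.
\begin{enumerate}
\item $F_P = \one - \Pi(\rho|\Theta,P)$ is a projector, and $F_P F_Q \succeq 0$ whenever $P$ and $Q$ commute. Commuting $P,Q$ make $\rho(e^{\ii\theta P})$ and $\rho(e^{\ii\theta' Q})$ commute, so the projectors commute and their product is a projector.
\item For each $\su(2)$-triangle $\{P,Q,R\}$, the operators $e^{\ii\theta P}, e^{\ii\theta Q}, e^{\ii\theta R}$ with $\theta\in\Theta$ generate a copy of $\mathsf G_\Theta = \Cl(1)$ inside $\Cl(n)$. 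Restricting $\rho$ to this copy then gives $\calH_T^2 \succeq \Delta_\Theta \calH_T$.
\end{enumerate}
For (2), I would check that any anticommuting pair $P,Q\in\calP_n$, together with $R=-\ii PQ$, satisfies the Pauli relations. There is then a Clifford unitary conjugating $(\sigma^x_1,\sigma^y_1,\sigma^z_1)$ to $(P,Q,\pm R)$, since the Clifford group acts transitively on symplectic-basis pairs. So the generated subgroup is a conjugate of the single-qubit $\Cl(1)$ and is isomorphic to $\mathsf G_\Theta$. A possible sign on $R$ is harmless because $\Theta = -\Theta$.

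The argument then gives $\calH^2 \succeq (1 + 4^{n-1}(\Delta_\Theta - 1))\calH$. Here $\calH$ is built inside $\rho$, and the kernel of $\calH$ is the subspace fixed by all $e^{\ii\theta P}$. Next I would identify this kernel with the $\Cl(n)$-invariant subspace, i.e.\ check that $\Pi(\rho|\mathsf G)$ for the generated group $\mathsf G$ equals $\Pi(\rho|\Cl(n))$. This requires that the rotations $e^{\ii\pi P/4}$, $P\in\calP_n$, together with the Pauli operators $e^{\ii\pi P/2} = \ii P$, generate all of $\Cl(n)$ of determinant $1$.

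I expect this generation claim to be the main obstacle. I would argue it as follows. Modulo Paulis, $e^{\ii\pi P/4}$ acts on $\FF_2^{2n}$ as the symplectic transvection $v\mapsto v + \langle v,p\rangle p$, and transvections generate $\Sp(2n;\FF_2)$. The Paulis themselves lie in the generated group, so the Pauli-modulo quotient is covered. The remaining freedom is scalars, and the determinant-one condition together with $\pm\one = e^{\ii\pi P}$ handles it. One should double check that $\ii\one$-type scalars of determinant $1$ are also reached, or else interpret $\Cl(n)$ as the group generated by these rotations. With generation in hand, the trivial part of $\rho$ is exactly $\ker\calH$, and the gap statement follows.

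With $\Delta_\Theta = 3/2$, the bound becomes $(1+4^{n-1}/2)/(4^n-1) = (4^n+8)/(8(4^n-1))$. This matches the claimed value, and it is greater than $1/8$ since $4^n+8 > 4^n-1$.
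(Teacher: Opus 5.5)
Your proposal is the paper's proof. The paper just applies \ref{lem:RandomPauliRotation} with $\Delta_\Theta = 3/2$ from \ref{lem:CliffordLocalGap}. Your checks of its hypotheses are correct and more careful than what the paper writes: commuting $F_PF_Q$ are projectors, each $\su(2)$-triangle generates a Clifford conjugate of $\Cl(1)$, and the arithmetic gives $(1+4^{n-1}/2)/(4^n-1)$.

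The step that fails as written is the scalar part of your generation claim. The condition $\det = 1$ does not reduce the scalars to $\pm\one$. The paper's $\Cl(n)$, the normalizer in $\SU(2^n)$, contains the whole center $Z = \{\omega\one : \omega^{2^n}=1\}$. On the other hand, every generator $e^{\ii\pi P/4} = (\one+\ii P)/\sqrt 2$ has entries in $\mathbb{Q}(e^{\ii\pi/4})$. So the group $G$ generated by the walk contains only scalars that are $8$th roots of unity, because that field contains no primitive $16$th root of unity.

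For $n\le 3$ nothing is lost, so $G = \Cl(n)$ and your argument is complete:
\begin{itemize}
\item for $n=1$, the only scalars are $\pm\one$, and $-\one = e^{\ii\pi P}$;
\item for $n=2$, $(\ii\sigma^z_1)(\ii\sigma^z_2)(\ii\sigma^z_1\sigma^z_2) = -\ii\one$;
\item for $n=3$, one checks $\prod_{\emptyset\ne S\subseteq\{1,2,3\}} e^{-\ii\pi Z_S/4} = e^{\ii\pi/4}\one$, where $Z_S=\prod_{j\in S}\sigma^z_j$.
\end{itemize}

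For $n\ge 4$, however, $e^{2\pi\ii/2^n}\one\in\Cl(n)\setminus G$. As you note, $G$ already contains the Paulis $\ii P$ and maps onto $\Sp(2n;\FF_2)$, so $\Cl(n) = G\cdot Z$. Hence $G$ is normal and $\Cl(n)/G\cong Z/(Z\cap G)$ is a nontrivial cyclic group. A nontrivial character $\rho$ of this quotient is a one-dimensional representation of $\Cl(n)$ with $M(\nu_{\mathsf{RPCR}},\rho)=1$ and $\Pi(\rho|\Cl(n))=0$, so its gap is zero.

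Your fallback is therefore forced, not optional. The theorem, and the paper's one-line proof (whose lemma is really about the group generated by the rotations), hold only with $\Cl(n)$ read as $G$. For $n\ge 3$ this $G$ is the usual Clifford group generated by Hadamard, phase and CNOT gates, whose scalars are exactly the $8$th roots of unity. With that reading, $\ker\calH$ is exactly the invariant subspace and your proof goes through verbatim.
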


\begin{proof}
    Apply \ref{lem:RandomPauliRotation} to $\Delta_\Theta = 3/2$ from \ref{lem:CliffordLocalGap}.
\end{proof}

\begin{lemma}[{\cite[Corollary 4.2]{haah2025efficient}}]
\label{lem:SULocalGap}
    For $\Theta = \RR/2\pi\ZZ$, we have
    $
        \Delta_\Theta = 5/4.
    $
\end{lemma}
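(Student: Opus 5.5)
The plan is to reduce to irreducible representations of $\SU(2)$ and compute the spectrum of a sum of three rank-one projectors. I read $\calH_0$ as the sum of the complementary projectors $F_j = \one - \Pi(\rho|\Theta,\sigma^j)$, $j=x,y,z$. This is the reading used in the proof of~\ref{lem:RandomPauliRotation}, and it is the one under which $\Delta_\Theta/3$ is the spectral gap. With this reading, $\calH_0^2 \succeq \Delta\,\calH_0$ says exactly that every nonzero eigenvalue of $\calH_0$ is at least $\Delta$. Equivalently, $\Pi_x+\Pi_y+\Pi_z \preceq 3-\Delta$ on the complement of the $\SU(2)$-invariant vectors. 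Since $\mathsf G_\Theta = \SU(2)$ here and everything decomposes over irreps, it suffices to treat each spin-$j$ irrep $V_j$ separately.

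\textbf{Step 1 (the projectors).} Let $J_z$ be the spin operator, so that $e^{\ii\theta\sigma^z}$ acts on the $J_z$-weight-$m$ vector by $e^{2\ii\theta m}$. Averaging over $\theta \in \RR/2\pi\ZZ$ shows that $\Pi(\rho|\Theta,\sigma^z)$ projects onto the weight-$0$ space.
\begin{itemize}
\item For half-integer $j$ this space is zero, so all three projectors vanish. Then $\calH_0 = 3\cdot\one$, which is fine.
\item For integer $j \ge 1$, the projectors are rank one: $\Pi_a = \ket{v_a}\bra{v_a}$, where $v_a = \ket{j,0}_a$ is the weight-$0$ vector along axis $a\in\{x,y,z\}$.
\end{itemize}

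\textbf{Step 2 (Gram matrix).} The nonzero spectrum of $\Pi_x+\Pi_y+\Pi_z$ equals that of the $3\times 3$ Gram matrix $G_{ab}=\braket{v_a|v_b}$. The $90^\circ$ rotations permuting the axes cyclically show that all off-diagonal moduli are equal to
\begin{equation*}
|c_j| = |d^j_{00}(\pi/2)| = |P_j(0)|,
\end{equation*}
where $P_j$ is the Legendre polynomial. Hence $G$ is unitarily a circulant with eigenvalues
\begin{equation*}
1 + 2|c_j|\cos\bigl((\varphi + 2\pi k)/3\bigr), \qquad k = 0,1,2,
\end{equation*}
where $\varphi$ is the gauge-invariant phase of $G_{xy}G_{yz}G_{zx}$. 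It follows that
\begin{equation*}
\norm{\Pi_x+\Pi_y+\Pi_z} \le 1+2|P_j(0)|.
\end{equation*}

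\textbf{Step 3 (Legendre values).} For odd $j$ we have $P_j(0)=0$, so the bound is $1$. For $j=2$, $|P_2(0)|=1/2$; here I would compute the phase explicitly to find the eigenvalues $\{3/2,3/2,0\}$, which gives a bound of $3/2$. For even $j\ge 4$,
\begin{equation*}
|P_j(0)| = \binom{j}{j/2}2^{-j},
\end{equation*}
which is decreasing in $j$ and equals $3/8$ at $j=4$. So $\Pi_x+\Pi_y+\Pi_z \preceq 7/4$ on every nontrivial irrep, which gives $\Delta_\Theta \ge 5/4$. The trivial irrep $j=0$ has $\calH_0=0$ and imposes nothing.

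\textbf{Step 4 (sharpness).} For the upper bound $\Delta_\Theta \le 5/4$, I would show that at $j=4$ the triangle phase satisfies $\varphi \equiv 0$. Then $G$ has top eigenvalue exactly $1+2\cdot 3/8 = 7/4$, so $\calH_0$ has the eigenvalue $5/4$ on $V_4$. I expect this phase computation, together with the $j=2$ phase, to be the main obstacle. I would handle it by choosing real bases: take $v_z = \ket{j,0}$ and obtain $v_x, v_y$ by real rotations in the standard real form of integer-spin irreps, for example the harmonic polynomial $P_j$ evaluated along each axis in the spherical-harmonics model. In that model the overlaps are all real and equal to $P_j(0)$ with a common sign, so $\varphi=0$. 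With $\varphi=0$ at $j=2$, the eigenvalues are $1+2(-1/2)=0$ and $1-(-1/2)=3/2$, which are consistent with Step 3. Alternatively, the whole statement can simply be cited from~\cite[Corollary 4.2]{haah2025efficient}.
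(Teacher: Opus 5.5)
Your argument is correct, and it takes a genuinely different route from the paper's. The paper computes nothing. It notes, as you do, that the projectors vanish in half-integer-spin irreps. It then observes that integer-spin irreps factor through $\SO(3)$, where $\sigma^x,\sigma^y,\sigma^z$ become the three coordinate-plane rotation generators, and imports the exact spectral gap of Kac's walk on $\SO(3)$ from~\cite{maslen2003eigenvalues,carlen2003determination,caputo2008spectral}, or from~\cite[Corollary 4.2]{haah2025efficient}.

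You reprove that gap by hand instead. You use the fact that each axis stabilizer has a one-dimensional fixed space in every integer-spin irrep, so the problem collapses to a $3\times 3$ Gram matrix. This makes the lemma self-contained, at the cost of a little harmonic analysis. It also identifies the extremal irrep ($j=4$) and the extremal eigenvector $v_x+v_y+v_z$, which is the degree-$4$ harmonic $x^4+y^4+z^4-\tfrac35$. Your reading of $\calH_0$ as the sum of the complementary projectors is the one the paper actually uses in the proof of~\ref{lem:RandomPauliRotation}.

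You can streamline the argument. Work in the zonal gauge $v_a \propto P_j(\hat n\cdot\hat a)$ from your Step 4 from the start. The addition theorem then gives $G=(1-P_j(0))\one+P_j(0)J$ for every integer $j$ at once, where $J$ is the all-ones matrix. Its eigenvalues are $1+2P_j(0)$ and $1-P_j(0)$ (twice). This makes the circulant and phase discussion of Step 2 unnecessary. Maximizing over $j\ge 1$ gives:
\begin{itemize}
\item $1$ for odd $j$;
\item $1+|P_j(0)|$ for $j\equiv 2 \pmod 4$, so $\tfrac32$ at $j=2$ and $\tfrac{21}{16}$ at $j=6$;
\item $1+2P_j(0)$ for $j\equiv 0\pmod 4$, so $\tfrac74$ at $j=4$ and less for larger $j$.
\end{itemize}

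There is one small notational slip. Under your Step 2 convention, the triangle phase at $j=2$ is $\varphi=\pi$, since $P_2(0)^3<0$, not $\varphi=0$. Your eigenvalues $\{0,\tfrac32,\tfrac32\}$ are still right, because you plugged in the signed overlap $-\tfrac12$. Separately, the rotation permuting the axes cyclically is a $120^\circ$ rotation about the body diagonal, not a $90^\circ$ rotation. This does not matter, since $|\langle v_a|v_b\rangle|$ depends only on the angle between the axes.
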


\begin{proof}    
    In fact, the spectral gap 
    of Kac's random walk on~$\SO(3)$
    implies the lemma, too, which has appeared in~\cite{maslen2003eigenvalues,carlen2003determination,caputo2008spectral}.
    A way to see this is to realize that 
    in any even dimensional representation of~$\SU(2)$
    the projector~$\avg_{\theta \sim \RR/2\pi\ZZ} \rho(e^{\ii \theta P})$
    is equal to the kernel of the represented operator~$\rho_*(P)$ of element~$P \in \su(2)$,
    which is zero.
    Every odd dimensional representation of~$\SU(2)$ factors through~$\SO(3)$.
    Another way is to use a straightforward variant of~\ref{lem:factoringNormal}
    using $N_1 = N_2 = N_3 = \{\pm \one\} \le \SU(2)$, the center of~$\SU(2)$.
    Then, the spectral gap the random Pauli rotation on~$\SU(2)$ is calculated by
    passing to $\SU(2)/\{\pm \one\} \cong \SO(3)$.
    The Pauli matrices of~$\su(2)$ are represented in~$\so(3)$ as
    \begin{align*}
        \begin{bmatrix} 0 & 0 & 0\\ 0 & 0 & -1\\ 0 & 1 & 0 \end{bmatrix},\quad
        \begin{bmatrix} 0 & 0 & 1\\ 0 & 0 & 0\\ -1 & 0 & 0 \end{bmatrix},\quad
        \begin{bmatrix} 0 & -1 & 0\\ 1 & 0 & 0\\ 0 & 0 & 0 \end{bmatrix},
    \end{align*}
    which are precisely the generators of Kac's random walk.
\end{proof}

\begin{remark}\label{rem:su4so6}
    The well known isomorphism between Lie algebras~$\su(4)$ and~$\so(6)$
    can be similarly used to give the spectral gap of the Random Pauli Rotation with~$n=2$ qubits
    because the spectral gap for Kac's random walk on~$\SO(m)$ is exactly known for 
    all~$m \ge 3$~\cite{maslen2003eigenvalues,carlen2003determination,caputo2008spectral},
    \begin{equation}
        \Delta(\nu_{\mathsf{RPR}}, \SU(2^2)) = \frac 2 {15}.
    \end{equation}
\end{remark}

\begin{theorem}[implying~\ref{thm:main}(1)]\label{thm:RPR}
The spectral gap of the Random Pauli Rotation on $\SU(2^n)$ is
    \begin{align*}
        \Delta(\nu_\mathsf{RPR}, \SU(2^n)) \ge \frac{4^n + 16}{16(4^n - 1)} > \frac 1 {16}.
    \end{align*}
\end{theorem}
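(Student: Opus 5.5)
This is a direct specialization of \ref{lem:RandomPauliRotation} to the circle group $\Theta = \RR/2\pi\ZZ$, using the local gap $\Delta_\Theta = 5/4$ from \ref{lem:SULocalGap}.

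First we identify the relevant objects. For $\Theta=\RR/2\pi\ZZ$, the group $\mathsf G_\Theta$ generated by $e^{\ii\theta\sigma^j}$ inside $\SU(2)$ is all of $\SU(2)$. In the $n$-qubit setting of \ref{lem:RandomPauliRotation}, the subgroups $H_P=\{e^{\ii\theta P}\}$ with $P\in\calP_n$ densely generate $\SU(2^n)$, since the Paulis span $\su(2^n)$. Hence the group projector in the lemma is $\Pi(\rho|\SU(2^n))$. Moreover $\avg_{P\sim\calP_n}\Pi(\rho|\Theta,P)$ is exactly the moment operator $M(\nu_{\mathsf{RPR}},\rho)$, because $\nu_{\mathsf{RPR}}$ is the uniform average of the Haar measures $\mu(H_P)$. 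Therefore
\[
\Delta(\nu_{\mathsf{RPR}},\rho,\SU(2^n)) \ge \frac{1+4^{n-1}(\Delta_\Theta-1)}{4^n-1}
\]
for every representation $\rho$ of $\SU(2^n)$.

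The only point to check is that the local hypothesis is used legitimately. In the proof of \ref{lem:RandomPauliRotation}, each $\su(2)$-triangle $\{P,Q,R\}$ exponentiates to a homomorphic image of $\SU(2)$ inside $\SU(2^n)$. Restricting $\rho$ to that image gives a representation of $\SU(2)=\mathsf G_\Theta$, and $\Delta_\Theta$ is defined as an infimum over all such representations. So the triangle inequality $\calH_T^2\succeq \Delta_\Theta\calH_T$ holds for every triangle.

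Substituting $\Delta_\Theta=5/4$ from \ref{lem:SULocalGap} (i.e.\ \cite[Corollary 4.2]{haah2025efficient}) gives the numerator $1+4^{n-1}/4 = 1+4^n/16 = (4^n+16)/16$. This yields
\[
\Delta(\nu_{\mathsf{RPR}},\SU(2^n)) \ge \frac{4^n+16}{16(4^n-1)}
\]
after taking the infimum over $\rho$. The strict inequality $>1/16$ follows from $4^n+16>4^n-1$.

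There is no real obstacle. The only delicate points are the identification of $\mathsf G_\Theta$ with $\SU(2)$ and the identification of the averaged projector with the moment operator of $\nu_{\mathsf{RPR}}$; both are bookkeeping.
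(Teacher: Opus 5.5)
Your proposal is correct and matches the paper's proof: specialize Lemma~\ref{lem:RandomPauliRotation} to $\Theta=\RR/2\pi\ZZ$ with $\Delta_\Theta=5/4$ from Lemma~\ref{lem:SULocalGap}, which gives $\frac{1+4^{n-2}}{4^n-1}=\frac{4^n+16}{16(4^n-1)}$. Your added checks make explicit what the paper leaves implicit: dense generation identifies the projector with $\Pi(\rho|\SU(2^n))$, and each $\su(2)$-triangle gives a homomorphic image of $\SU(2)$.
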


\begin{proof}
    Apply \ref{lem:RandomPauliRotation} to $\Delta_\Theta = 5/4$ from \ref{lem:SULocalGap}.
\end{proof}

\begin{proposition}\label{lem:upper}
    When $n=1,2,3$, the lower bound in~\ref{thm:RPR} is tight.
    Furthermore, when $n \ge 3$,
    \begin{align*}
        \Delta(\nu_{\mathsf{RPR}}, \SU(2^n)) \le \frac{2^n(2^n-3)}{8(4^n-1)} < \frac{1}{8}.
    \end{align*}
\end{proposition}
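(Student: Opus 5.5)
The plan has two parts. First, reduce tightness for $n=1,2,3$ to exact gap values. Second, prove the upper bound for $n\ge 3$ by exhibiting an explicit test vector in a balanced tensor representation $\tau_{2,2}$, or in the closely related $U^{\otimes 2}\otimes \bar U^{\otimes 2}$ acting on operators. In that representation I would evaluate the Rayleigh quotient of $M(\nu_{\mathsf{RPR}},\tau)$.

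\emph{Tightness.} For $n=1$ the walk is the one defining $\Delta_\Theta$, so $\Delta=\Delta_\Theta/3=5/12$ by \ref{lem:SULocalGap}. The formula in \ref{thm:RPR} gives $20/48=5/12$. For $n=2$, \ref{rem:su4so6} gives $2/15$, which equals $32/240$. For $n=3$ the claimed upper bound gives $2^3\cdot 5/(8\cdot 63)=5/63$. This coincides with the lower bound $80/(16\cdot 63)$, so the case $n=3$ follows from the second assertion.

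\emph{Action of one step.} I would work on operators $X$ on $(\CC^{2^n})^{\otimes 2}$ under conjugation, $X\mapsto (U\otimes U)X(U\otimes U)^\dagger$. The basic computation is how a single step acts on the basis $Q\otimes Q'$.
\begin{itemize}
\item If $P$ commutes with both $Q$ and $Q'$, then $e^{\ii\theta P}$ fixes $Q\otimes Q'$.
\item If $P$ anticommutes with both, conjugation gives $e^{2\ii\theta P}Q\otimes e^{2\ii\theta P}Q'$. Averaging over $\theta$ uses $\avg_\theta e^{2\ii\theta P}\otimes e^{2\ii\theta P}=\tfrac12(\one\otimes\one+P\otimes P)$. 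Hence $Q\otimes Q'\mapsto \tfrac12(Q\otimes Q'+PQ\otimes PQ')$.
\item If $P$ anticommutes with exactly one of $Q,Q'$, the average is $0$.
\end{itemize}
Restricting to the diagonal span of $\{Q\otimes Q\}$, with a sign convention chosen to make the coefficients real, $M$ therefore acts on coefficient vectors $c:\calP_n\to\RR$ as an explicit averaging operator. For each $P$, $c_Q$ is kept if $[P,Q]=0$ and replaced by $\tfrac12(c_Q\pm c_{R})$ with $R\propto PQ$ otherwise. Concretely,
\[
(Mc)_Q=\frac{1}{4^n-1}\Bigl[(2^{2n-1}-1)\,c_Q+\sum_{P:\,\{P,Q\}=0}\tfrac12\bigl(c_Q\pm c_{-\ii PQ}\bigr)\Bigr].
\]

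\emph{Test vector.} The invariant vectors, such as the swap $\sum_Q Q\otimes Q$, correspond to constant $c$, so the test vector must be orthogonal to constants. I would take $c_Q=(-1)^{\eta(Q)}$ for a quadratic refinement $\eta$ of the symplectic form on $\FF_2^{2n}$, then shift it to be orthogonal to the constant vector. When $P$ and $Q$ anticommute, $\eta(PQ)=\eta(P)+\eta(Q)+1$. So the update at $Q$ depends only on $\eta(P)$, on the sign convention, and on counts of anticommuting $P$ with a given $\eta$-value. Those counts are governed by the Arf invariant: the number of zeros of $\eta$ is $2^{n-1}(2^n\pm1)$. From them I expect the Rayleigh quotient to be $1-\frac{2^n(2^n-3)}{8(4^n-1)}$. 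If the pure sign vector is not optimal, I would fall back to the two-dimensional span of $c$ and the indicator of $\eta^{-1}(0)$, projected off the constants, and diagonalize a $2\times2$ matrix.

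\emph{Main obstacle.} The hardest part is the bookkeeping of phases, that is, $PQ=\ii R$ giving $PQ\otimes PQ=-R\otimes R$. It may be necessary to switch to $Q\otimes\bar Q$, i.e.\ the representation $\tau_{1,1}^{\otimes 2}$ or $\tau_{2,2}$ directly, so that the signs cooperate with $(-1)^{\eta}$. I would first check the exact value at $n=3$ against $5/63$ to fix the correct sign convention and the correct choice of Arf class. The general-$n$ count would follow from that.
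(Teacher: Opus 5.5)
Your tightness argument is fine. The exact values $5/12$ and $2/15$ at $n=1,2$ match $\frac{4^n+16}{16(4^n-1)}$, and at $n=3$ the lower and upper bounds both equal $5/63$.

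\textbf{The gap is in the upper bound.} No test vector in $\tau_{2,2}$ can certify it, nor in $\tau_{1,1}^{\otimes 2}$, which is the same representation with tensor factors reordered. The reason is that the gap of $\nu_{\mathsf{RPR}}$ in that representation is more than twice the target. To see this:
\begin{itemize}
\item The span of the $Q\otimes Q'$ splits into blocks $V_S$, $S\propto QQ'$, and each single-step average preserves every block.
\item For $S\neq\one$, each of the $4^n/2$ Paulis $P$ anticommuting with $S$ anticommutes with exactly one of $Q,Q'$, so it annihilates $V_S$. Hence the moment operator has norm at most $\frac{4^n/2-1}{4^n-1}<\frac12$ there.
\item On the diagonal block the correct average is $\avg_\theta e^{2\ii\theta P}\otimes e^{2\ii\theta P}=\frac12(\one\otimes\one-P\otimes P)$; your sign is off. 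Combined with $PQ\otimes PQ=-R\otimes R$, every sign becomes $+$. So the phase bookkeeping you flag as the main obstacle is not one.
\item For $X=\sum_Q c_Q\,Q\otimes Q$, the moment operator acts on $c$ as $(1-p)I+\frac{1}{2(4^n-1)}A$, with $p=\frac{4^n}{4(4^n-1)}$ and $A$ the adjacency matrix of the anticommutation graph on $\calP_n$.
\item That graph is strongly regular with $k=2\cdot 4^{n-1}$ and $\lambda=\mu=4^{n-1}$, so its nontrivial eigenvalues are $\pm 2^{n-1}$.
\end{itemize}
Hence, for $n\ge 2$, the gap in $\tau_{2,2}$ is exactly $\frac{2^n(2^n-1)}{4(4^n-1)}$.

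Your own vector is an eigenvector attaining this value. Indeed, $A(-1)^\eta=2^{n-1}\mathrm{Arf}(\eta)\,(u-(-1)^\eta)$ with $u$ the all-ones vector. So $(-1)^\eta$ projected off the constants has eigenvalue $-2^{n-1}\mathrm{Arf}(\eta)$, and the top case is $\mathrm{Arf}=-1$. At $n=3$ this gives the gap $2/9$, not $5/63$. In general it exceeds the claimed bound by the factor $\frac{2(2^n-1)}{2^n-3}>2$. Your $2\times 2$ fallback adds nothing, because the indicator of $\eta^{-1}(0)$ equals $\frac12(u+(-1)^\eta)$.

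\textbf{The missing idea is that the slow mode lives at $t=4$.} Conjecturally it is the irrep with highest weight $(1^4,0,\ldots,0,-1^4)$, which does not occur in $\tau_{2,2}$. The paper works in $\tau_{4,4}$ with $\Omega=\one^{\otimes 4}+\sum_{P\in\calP_n}P^{\otimes 4}$. Its $t=2$ analogue $\sum_Q Q\otimes Q$ is just the invariant swap, which is why nothing useful survives at $t=2$. The paper's steps are:
\begin{itemize}
\item Compute the one-step average of $\Omega$ from $\avg_\theta(Q\cos 2\theta+\ii PQ\sin 2\theta)^{\otimes 4}$ and the identities $\sum_{P\in\calP_n\cup\{\one\}}P^{\otimes B}=2^nT_B$.
\item Compress to the antisymmetric Schur--Weyl sector $\lambda=(1^4)$ and remove the $\SU(2^n)$-invariant part. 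This gives an exact eigenvector with eigenvalue $1-\frac{2\cdot 4^n-6\cdot 2^n}{16(4^n-1)}=1-\frac{2^n(2^n-3)}{8(4^n-1)}$.
\item Check that this eigenvector is nonzero for $2^n>4$, by comparing the Frobenius norms of $\Pi_{(1^4)}\Omega$ and of its invariant part.
\end{itemize}
Any correct proof of the bound has to leave $t=2$ in some such way.
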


\begin{proof}
    For $n=1$ and $n=2$, 
    the exact spectral gap can be extracted from the analysis of Kac's random walk;
    see~\ref{rem:su4so6}.
    For $n \ge 3$, we are going to calculate an eigenvalue of 
    the moment operator for the Random Pauli Rotation
    in a tensor representation
    \begin{equation}
        \tau_{4,4} : \SU(2^n) \ni U \mapsto U^{\otimes 4} \otimes \bar U^{\otimes 4} .
    \end{equation}
    This is naturally identified with the action of~$\SU(2^n)$ 
    on the space~$\End(\calH^{\otimes 4})$
    where $\calH = (\CC^2)^{\otimes n}$
    by
    \begin{equation}
        \tau_{4,4}(U) : \End(\calH^{\otimes 4}) \ni O \mapsto U^{\otimes 4} O U^{\dagger \otimes 4}  \in \End(\calH^{\otimes 4})\, ,
    \end{equation}
    with which
    the moment operator for~$\tau_{4,4}$ is a mixed unitary channel
    \begin{equation}
        \mathbf{\Phi} = M(\nu_{\mathsf{RPR}},\tau_{4,4},\SU(2^n)) : 
        O
        \mapsto \avg_{\substack{P\sim \calP_n\\\theta \sim \RR/2\pi\ZZ}}  
        (e^{\ii \theta P})^{\otimes 4} O (e^{-\ii \theta P})^{\otimes 4} \, .
    \end{equation}
    Since the eigenvalue of~$\mathbf{\Phi}$ of interest must not be~$1$,
    we must project out the trivial subrepresentation of~$\SU(2^n)$
    contained in~$\tau_{4,4}$.
    The orthogonal projector onto the trivial $\SU(2^n)$-subrepresentation is
    \begin{equation}
        \mathbf \Gamma = \Pi(\tau_{4,4}|\SU(2^n)) : \End(\calH^{\otimes 4}) \ni O \mapsto \int_{\SU(2^n)}  U^{\otimes 4} O U^{\dagger \otimes 4} \rd U
        \in \langle T_{\{1,2\}}, T_{\{2,3\}}, T_{\{3,4\}} \rangle
    \end{equation}
    where $T_B$ with $B \subset [4] = \{1,2,3,4\}$ with $\abs B = 2$ 
    are tensor factor transposition operators,
    which precisely generate the image of~$\mathbf{\Gamma}$
    by the Schur--Weyl duality~\cite[e.g.][Theorem $1.10$]{christandl2006structure}:
    \begin{equation}
        \calH^{\otimes 4} = \bigoplus_{\lambda \vdash 4} \Pi_\lambda \calH^{\otimes 4}
        \qquad \text{ with } \quad 
        \Pi_\lambda \calH^{\otimes 4} \cong \calQ_\lambda \otimes \calS_\lambda
    \end{equation}
    where $\calQ_\lambda$ is an irrep of~$\SU(2^n)$
    and $\calS_\lambda$ is an irrep of the symmetric group~$\symgp_4$
    associated with a Young diagram~$\lambda$.
    Note that three superoperators~$\mathbf{\Gamma}$, $\mathbf{\Phi}$, and 
    \begin{equation}
        \mathbf{\Pi}_\lambda : X \mapsto \Pi_\lambda X \Pi_\lambda
    \end{equation}
    commute pairwise.
    
    Define a distinguished element%
    \footnote{This $\Omega$ commutes with $U^{\otimes 4}$ for any~$U \in \Cl(n)$~\cite{bittel2025complete}.}
    of~$\End(\calH^{\otimes 4})$,
    \begin{equation}
        \Omega = \one_{2^n}^{\otimes 4} + \sum_{P \in \calP_n} P^{\otimes 4} = 4^n \Omega^2 \, ,
    \end{equation}
    which commutes with $\Pi_\lambda \in \End(\calH^{\otimes 4})$ for every~$\lambda$,
    and its projected down version
    \begin{equation}
        E_\lambda = (\one - \mathbf{\Gamma})(\Pi_\lambda \Omega \Pi_\lambda) \, .
    \end{equation}    
    We claim that
    \begin{equation}
        \mathbf{\Phi}(E_\lambda) = e_\lambda E_\lambda
        \quad \text{ where $e_\lambda < 1$ and $E_\lambda \neq 0$ for ``valid''~$\lambda$.}
    \end{equation}
    Here, the condition on~$\lambda$ so as to make $E_\lambda \neq 0$ will be derived later.
    The difference $1 - e_\lambda$ for any valid~$\lambda$ 
    is an upper bound on~$\Delta(\nu_{\mathsf{RPR}}, \SU(2^n))$.
    
    To evaluate~$e_\lambda$, we calculate~$\mathbf{\Phi}(\Omega)$,
    starting with the average over~$\theta \sim \RR/2\pi \ZZ$.
    For anticommuting Pauli operators~$P$ and~$Q$,
    we have
    \begin{align}
        \avg_\theta (e^{\ii \theta P})^{\otimes 4} Q^{\otimes 4} (e^{-\ii \theta P})^{\otimes 4}
        &=
        \int_0^{2\pi} \frac{\rd \theta}{2\pi} ~ \left( 
            Q \cos 2\theta + \ii P Q \sin 2\theta
        \right)^{\otimes 4} \\
        &=
        \frac{3}{8}Q^{\otimes 4} + \frac{3}{8} (PQ)^{\otimes 4} 
        - \frac{1}{8} \sum_{\substack{B \subset [4] :\\ \abs B = 2}} P^{\otimes B} Q^{\otimes 4} \nonumber
    \end{align}
    where $[4] = \{1,2,3,4\}$ and
    $P^{\otimes B}$ is the tensor product of~$P$ and~$\one_{2^n}$
    with a tensor factor~$P$ for each position specified by an element of~$B$;
    for example, if $B = \{1,4\}$, then $P^{\otimes B} = P \otimes \one_{2^n} \otimes \one_{2^n} \otimes P$.
    For any~$P \in \calP_n$, we define
    \begin{equation}
        A_P = \sum_{Q \in \calP_n:\, PQ=-QP} Q^{\otimes 4} \, .
    \end{equation}
    Summing over~$Q \in \calP_n \cup \{\one\}$ and then averaging over~$P$, we have 
    \begin{align}
        \nonumber\avg_\theta (e^{\ii \theta P})^{\otimes 4} \Omega (e^{-\ii \theta P})^{\otimes 4}
        &= \sum_{\substack{Q \in \calP_n \cup \{\one\}:\\ PQ = QP}} Q^{\otimes 4}
        +
        \sum_{\substack{Q \in \calP_n :\\PQ = - QP}}
        \left(
        \frac{3}{8}Q^{\otimes 4} + \frac{3}{8} (PQ)^{\otimes 4} 
        - \frac{1}{8} \sum_{\substack{B \subset [4] :\\ \abs B = 2}} P^{\otimes B} Q^{\otimes 4} \right)\nonumber\\
        &=
        (\Omega - A_P) + \sum_{\substack{Q \in \calP_n :\\PQ = - QP}}
        \left(
            \frac{3}{4}Q^{\otimes 4} 
            - \frac{1}{8} \sum_{\substack{B \subset [4] :\\ \abs B = 2}} P^{\otimes B} Q^{\otimes 4} 
        \right)\\
        \nonumber &= 
        \Omega - \frac 1 4 A_P - \frac 1 8 \sum_{\substack{B \subset [4] :\\ \abs B = 2}} P^{\otimes B} A_P \, ,\\
        \label{eq:upperbound}
        \mathbf{\Phi}(\Omega) &= 
        \frac{1}{4^n-1} \sum_{P\in \calP_n}
        \left( 
            \Omega - \frac{1}{4} A_P - \frac{1}{8} \sum_{\substack{B \subset [4] :\\ \abs B = 2}} P^{\otimes B} A_P
        \right).
    \end{align}
    Since every nonidentity Pauli operator anticommutes with exactly $2\cdot 4^{n-1}$ Pauli operators,
    \begin{equation}
        \sum_{P\in \calP_n} A_P = 2 \cdot 4^{n-1} \sum_{P\in \calP_n} P^{\otimes 4} = 2 \cdot 4^{n-1} (\Omega - \one_{2^n}^{\otimes 4}).
    \end{equation}
    
    To simplify $\sum_P \sum_B P^{\otimes B} A_P$,
    we recall an identity 
    $
        \sum_{P\in \calP_n \cup \{\one\}} P^{\otimes 2} = 2^n T,
    $
    where $T \in \End(\calH^{\otimes 2})$ is the unitary that swaps the two tensor factors.
    This identity can be checked by considering the fact that Pauli operators
    form an orthogonal basis under the Hilbert--Schmidt inner product.
    Now, observe that for $P,Q \in \calP_n \cup \{\one\}$,
    if $PQ = QP$, then $PQ \in \calP_n \cup \{ \one\}$,
    but if $PQ = -QP$, then $PQ \notin \calP_n \cup \{\one\}$ because of an extra~$\ii$.
    In particular,
    \begin{align}
        2^n T (Q^{\otimes 2})
        &= \sum_{P \in \calP_n \cup \{\one\}: PQ = QP} ( P Q)^{\otimes 2} 
        - \sum_{P \in \calP_n: PQ = -QP} (\ii P Q)^{\otimes 2}\nonumber\\
        2^n T &=
        \sum_{P \in \calP_n \cup \{\one\}: PQ = QP} ( P Q)^{\otimes 2} 
        + \sum_{P \in \calP_n: PQ = -QP} (\ii P Q)^{\otimes 2}\, ,\nonumber\\
        2^{n-1}(T - T(Q^{\otimes 2})) 
        &=
        \sum_{P \in \calP_n: PQ = -QP} (\ii P Q)^{\otimes 2} = 
        \sum_{R \in \calP_n: RQ = -QR} R^{\otimes 2} \, .
    \end{align}
    Embedded into~$\End(\calH^{\otimes 4})$,
    this becomes 
    \begin{align}
        \sum_{P \in \calP_n: PQ =-QP} P^{\otimes B} &= 2^{n-1}(T_B - T_B Q^{\otimes B}), \label{eq:tbtbq}\\
        \sum_{P\in \calP_n \cup \{\one\}} P^{\otimes B} &= 2^n T_B, \label{eq:swapIdentity}
    \end{align}
    where $T_B$ is now a transposition of the two tensor factors specified by~$B \subset [4]$.

    Observe that the double sum~$\sum_{P \in \calP_n}\sum_{Q\in \calP_n:PQ=-QP}$
    is over all ordered pairs of~$P,Q$ that are anticommuting.
    So,
    \begin{align}
        \sum_{P\in \calP_n} \sum_{\substack{B\subset [4]:\\ |B|=2}} P^{\otimes B} A_P 
        &=  
        \sum_{\substack{B\subset [4]:\\ |B|=2}} \sum_{Q\in \calP_n} \left(\sum_{\substack{P\in \calP_n: PQ = -QP}} P^{\otimes B}\right) Q^{\otimes 4}\nonumber\\
        &=  2^{n-1} \sum_{\substack{B\subset [4]:\\ |B|=2}} 
        \sum_{Q\in \calP_n} \left( T_B Q^{\otimes 4} - T_B Q^{\otimes B^c} \right) &\text{by \ref{eq:tbtbq}}\\
        &= 2^{n-1} \sum_{\substack{B\subset [4]:\\ |B|=2}} \left(T_B (\Omega - \one_{2^n}^{\otimes 4}) - T_{B} (2^n T_{B^c} - \one_{2^n}^{\otimes 4})  \right) &\text{by \ref{eq:swapIdentity}}
        \nonumber\\
        &= 2^{n-1} \sum_{\substack{B\subset [4]:\\ |B|=2}} \left(T_B \Omega  - 2^n T_{B} T_{B^c}  \right) .
        \nonumber
    \end{align}
    Plugging these into \ref{eq:upperbound}, we have
    \begin{align}
        \mathbf{\Phi}(\Omega) = \Omega - \frac{4^n}{8(4^n-1)}(\Omega - \one_{2^n}^{\otimes 4}) - \frac{2^n}{16(4^n-1)} \sum_{\substack{B\subset [4]:\\ |B|=2}} \left(\Omega T_B - 2^n T_{B^c} T_B\right).
    \end{align}

    Now we apply $(\one - \mathbf{\Gamma})\mathbf{\Pi}_\lambda$ on both sides.
    Observe that $\one_{2^n}^{\otimes 4}$ and $T_{B^c} T_B$ are both
    in the image of~$\mathbf{\Gamma}$, so they are eliminated under~$\one - \mathbf{\Gamma}$.
    \begin{align}
        \nonumber(\mathbf{\Phi} \circ (\one - \mathbf{\Gamma}) \circ \mathbf{\Pi}_\lambda)(\Omega) 
        &= 
        (\one - \mathbf{\Gamma}) \big(\Pi_{\lambda} \mathbf{\Phi} (\Omega) \Pi_\lambda \big)\\
        &= 
        \left(1 - \frac{4^n}{8(4^n-1)}\right) E_\lambda 
        - \frac{2^n}{16(4^n-1)} ((\one - \mathbf{\Gamma})\circ\mathbf{\Pi}_{\lambda}) \left(\Omega \sum_{\substack{B\subset [4]:\\ |B|=2}} T_B\right).
    \end{align}
    The sum $\sum_{B\subset [4]: |B|=2} T_B$ is a class sum of~$\symgp_4$
    and hence commute with all represented operators of $\SU(2^n) \times \symgp_4$.
    When restricted to~$\Pi_\lambda \calH^{\otimes 4}$,
    by Schur's lemma,
    this must be a scalar~$\kappa_\lambda$ multiple of~$\Pi_\lambda$
    \begin{align}
        \kappa_\lambda \Pi_\lambda & = \sum_{B\subset [4]: |B|=2} T_B \Pi_\lambda \, ,\\
        \kappa_\lambda (\dim \calQ_\lambda)(\dim \calS_\lambda) &= 6 (\dim \calQ_\lambda) \chi_\lambda( (12) ) \, ,
        \nonumber
    \end{align}
    where $\chi_\lambda$ is the character of~$\symgp_4$ on~$\calS_\lambda$.
    Therefore, we finally arrive at
    \begin{align}
        \mathbf{\Phi}(E_\lambda) &= \left(1 - \frac{2\cdot 4^n + 2^n \kappa_\lambda}{16(4^n-1)}\right) E_\lambda.
    \end{align}
    The character table of~$\symgp_4$ is well known~\cite[e.g.][\S2.3]{fulton2013representation}:
    \begin{equation}
        \label{s4character}
        \begin{array}{r|ccc|c}
        \symgp_4
            & \lambda
            & \chi_\lambda((12))
            & \dim \calS_\lambda 
            & \kappa_\lambda = \frac{6 \chi_\lambda((12))}{\dim \calS_\lambda} \\
        \hline
        \text{trivial}
            & (4)
            & 1
            & 1 
            & 6\\
        \text{sign irrep}
            & (1^4)
            & -1
            & 1 
            & -6\\
        \text{standard}
            & (3,1)
            & 1
            & 3 
            & 2 \\
        \text{sign $\otimes$ standard}
            & (2,1,1)
            & -1
            & 3 
            & -2\\
        \text{two-dimensional irrep}
            & (2,2)
            & 0
            & 2
            & 0
        \end{array}
    \end{equation}
    The claimed bound for~$n \ge 3$ follows by choosing~$\lambda = (1^4)$ and $\kappa_\lambda = -6$,
    provided that~$E_\lambda \neq 0$ for all~$n \ge 3$.

    Hence, it remains to show that $E_\xi = (\one - \mathbf{\Gamma})(\Pi_\xi \Omega) \neq 0$
    where $\xi = (1^4)$.
    We calculate the Frobenius norms $\norm*{\Pi_\xi \Omega}_2$
    and $\norm*{\Pi_\xi \mathbf{\Gamma}(\Omega)}_2$
    and show that they are different.
    This will conclude the proof.

    For $\eta \vdash 4$, 
    let $C_\eta = \sum_{\sigma \in \symgp_4:\mathrm{type}(\sigma) = \eta} T_\sigma$
    be the class sum, 
    the sum of all represented operators (on~$\calH^{\otimes 4}$) 
    of~$\sigma$ that has cycle type~$\eta$;
    for example, $C_{(2,2)} = T_{1,2} T_{3,4} + T_{1,3} T_{2,4} + T_{1,4} T_{2,3}$.
    To avoid clutter, 
    we write $C_1 = C_{(1,1,1,1)} = \one_{2^n}^{\otimes 4}$,
    $C_2 = C_{(2,1,1)}$,
    $C_3 = C_{(3,1)}$,
    $C_4 = C_{(4)}$,
    and
    $C_{2,2} = C_{(2,2)}$.
    By the Young symmetrizer,
    we know
    \begin{equation}
        \Pi_\xi = \frac{C_{1} - C_{2} + C_{3} - C_4 + C_{2,2} }{24} \, .
    \end{equation}
    In addition, the Weingarten calculus tells us
    that for nonidentity Pauli~$P \in \calP_n$ with $d = 2^n \ge 8$
    \begin{align}
        \nonumber \avg_{U \sim \SU(2^n)} (U P U^\dagger)^{\otimes 4} 
        &= 
        \frac{3 C_1 - d C_2 + 3 C_3 - d C_4 + (d^2-6) C_{2,2}}{(d^2 - 1)(d^2-9)} \, ,\\
        \mathbf{\Gamma}(\Omega) &= 
        C_1 + \frac{3 C_1  - d C_2 + 3 C_3 - d C_4 + (d^2-6) C_{2,2}}{d^2-9}.
    \end{align}
    The remainder of the calculation is straightforward
    using the following table of traces.
    \begin{equation}
        \begin{array}{c|ccccc}
                     C_\eta         & C_1 & C_2 & C_3 & C_4 & C_{2,2} \\
                                    \hline
        \Tr(P^{\otimes 4} C_\eta) 
        \text{ with } 
        P \neq \one_{2^n}           & 0   & 0   & 0   & 6 d  & 3 d^2    \\
        \Tr(C_\eta)                 & d^4 & 6 d^3& 8 d^2& 6 d  & 3 d^2  
        \end{array}
    \end{equation}
    All calculation is done with the class sums:
    \begin{equation}
        \left(\sum_{\alpha} x_\alpha C_\alpha \right)
        \left(\sum_{\beta} y_\beta C_\beta \right) = \sum_{\alpha,\beta,\gamma} x_\alpha y_\beta c_{\alpha\beta\gamma} C_\gamma \, .
    \end{equation}
    After some algebra,\footnote{We use Wolfram Mathematica with 
    a multiplication function
    $(x_1,x_2,x_3,x_4,x_{2,2})(y_1,y_2,y_3,y_4,y_{2,2})
    = (3 x_{2,2} y_{2,2}+x_1 y_1+6 x_2 y_2+8 x_3 y_3+6 x_4 y_4,y_2 x_{2,2}+x_2 y_{2,2}+2 \left(y_4 x_{2,2}+x_4 y_{2,2}\right)+x_2 y_1+x_1 y_2+4 \left(x_3 y_2+x_2 y_3\right)+4 \left(x_4 y_3+x_3 y_4\right),3 \left(y_3 x_{2,2}+x_3 y_{2,2}\right)+x_3 y_1+3 x_2 y_2+x_1 y_3+4 x_3 y_3+3 x_4 y_4+3 \left(x_4 y_2+x_2 y_4\right),y_4 x_{2,2}+x_4 y_{2,2}+2 \left(y_2 x_{2,2}+x_2 y_{2,2}\right)+x_4 y_1+4 \left(x_3 y_2+x_2 y_3\right)+x_1 y_4+4 \left(x_4 y_3+x_3 y_4\right),y_1 x_{2,2}+x_1 y_{2,2}+2 x_{2,2} y_{2,2}+2 x_2 y_2+8 x_3 y_3+2 x_4 y_4+4 \left(x_4 y_2+x_2 y_4\right))$
    for the class algebra of~$\symgp_4$.
    }
    we have
    \begin{align}
        \norm*{\Pi_\xi \Omega}_2^2 &= 4^n \Tr(\Pi_\xi \Omega)
        = 
        \frac{d^4 (d-1)(d-2)}{6}
        \\
        \norm*{\Pi_\xi \mathbf{\Gamma}(\Omega)}_2^2 
        &= \Tr\big(\Pi_\xi \mathbf{\Gamma}(\Omega)^2\big)
        = 
        \frac{2 d^3(d-1)(d-2)}{3(d-3)} \, ,
    \end{align}
    which are different for all~$d > 4$.
    At $d=4$ they are the same, which is consistent with the fact that
    $\Pi_\xi$ projects onto the trivial determinant representation of~$\SU(4)$.
\end{proof}

\begin{remark}
    We ran numerical experiments on $n=1,2,3$ with $\tau_{4,4}$ 
    to find the eigenvector with largest eigenvalue, 
    and then asked ChatGPT (GPT-$5.5$) to conjecture a symbolic eigenvector.
    With more prompting, ChatGPT came to the construction of~$E_\lambda$ 
    and suggested a proof.
    We revised it and coined the argument for~$E_\xi \neq 0$.
\end{remark}

\begin{conjecture}[Aldous-type]
When $n \ge 3$, the spectral gap of the Random Pauli Rotation on $\SU(2^n)$ is
    \begin{align*}
        \Delta(\nu_{\mathsf{RPR}}, \SU(2^n)) = \frac{2^n(2^n-3)}{8(4^n-1)}.
    \end{align*}
That is, the upper bound \ref{lem:upper} from $t=4$ is tight.
In particular, the tight $\SU(2^n)$ irrep is the one identified by the highest weight $(1^4,0^{2^n-8},-1^4)$.
\end{conjecture}

\begin{remark}
This conjecture is born from a tradition of Aldous-type conjectures.
Around 1992, Aldous conjectured that the spectral gap of transpositions generating the symmetric group is contained in the standard representation, for any such probability measure on transpositions.
The uniform measure case had been proved several years earlier by \cite{diaconis1981generating}.
In 2009, \cite{caputo2010proof} proved the conjecture 
using the ``octopus inequality.''
Recently, \cite{alon2026aldous} proved a similar spectral gap result 
for an analogous random walk on $\SU(2^n)$.
However, this gap is exponentially small in the number of qubits 
and the walk does not use the tensor product structure.
For Kac's random walk on $\SO(m)$, the spectral gap is realized by a $t=2$
subrepresentation~\cite{maslen2003eigenvalues,carlen2003determination,caputo2008spectral}.
This conjectured $t=4$ phenomenon for random unitary circuits has been noted by \cite{haah2025efficient,chen2025incompressibility}.
\end{remark}

\section{Brickwork Random Unitary Circuits}

\begin{theorem}\label{thm:AdditiveRandomUnitary}
    We identify $\SU(2^n)$ with the unitary group on $n$ qubits 
    labeled by $\{ 1,2,\ldots, n \}$.
    Let $\SU(4)_{i,j} \cong \SU(4)$ be the unitary group on two qubits~$i \neq j$.
    Let $\rho$ be any nontrivial irrep of~$\SU(2^n)$.
    Then,
    \begin{align}
        \norm*{\frac{1}{n -1} \sum_{i=1}^{n -1}\avg_{h \sim \SU(4)_{i,i+1}} \rho(h)}
        \le 1 - \frac{1}{9 \cdot 2^{45}(n-1)} \, . \label{eq:additive1DRandomUnitary}\\
        \norm*{\frac{2}{n(n -1)} \sum_{i < j} \avg_{h \sim \SU(4)_{i,j}} \rho(h)}
        \le 1 - \frac{1}{9 \cdot 2^{45}(n-1)} \, . \label{eq:all-to-all-RandomUnitary}
    \end{align}
\end{theorem}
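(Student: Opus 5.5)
The plan is to treat the $\SU(4)$ walk as a perturbation of the Clifford walk of~\ref{thm:additiveClifford}, and to handle the Clifford-invariant vectors with the Random Pauli Rotation gap of~\ref{thm:RPR}. Inequality~\ref{eq:all-to-all-RandomUnitary} follows from~\ref{eq:additive1DRandomUnitary} by the same qubit-permutation and Jensen argument as in~\ref{eq:JensenForAllToAll}, so only~\ref{eq:additive1DRandomUnitary} needs proof.

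\emph{Setup.} Write $\Pi_i^{U} = \Pi(\rho|\SU(4)_{i,i+1})$ and $\Pi_i^{C} = \Pi(\rho|\Cl_{i,i+1})$. Define
\begin{equation*}
    \calH_U = \sum_{i=1}^{n-1} (\one - \Pi_i^U), \qquad \calH_C = \sum_{i=1}^{n-1}(\one - \Pi_i^C).
\end{equation*}
The claim is equivalent to $\calH_U \succeq c\,(n-1)\one$ for an absolute constant $c>0$. Because $\rho$ is a nontrivial irrep of $\SU(2^n)$, we have $\Pi(\rho|\SU(2^n))=0$.

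\emph{Two partial bounds.} First, $\Cl_{i,i+1}\le \SU(4)_{i,i+1}$ gives $\Pi_i^U \preceq \Pi_i^C$. Hence $\calH_U \succeq \calH_C$, and~\ref{thm:additiveClifford} (with $\rho$ restricted to $\Cl(n)$) gives
\begin{equation*}
    \calH_C \succeq \tfrac{1}{5}(n-1)\,(\one - \Pi_{\Cl}), \qquad \Pi_{\Cl} := \Pi(\rho|\Cl(n)).
\end{equation*}
Second, let $u$ be a unit vector in the image of $\Pi_{\Cl}$. Since $e^{\ii\theta\sigma^z_i}\in \SU(4)_{i,i+1}$, we have $\Pi_i^U \preceq \Pi(\rho|\RR/2\pi\ZZ,\sigma^z_i)$. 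Because $u$ is Clifford-invariant, we may average this projector over Clifford conjugation. This replaces $\sigma^z_i$ by a Pauli $P$ that is uniform on $\calP_n$; the sign of $P$ is irrelevant because $\theta$ is uniform. Therefore
\begin{equation*}
    \langle u|\Pi_i^U|u\rangle \le \langle u|\,M(\nu_{\mathsf{RPR}},\rho)\,|u\rangle \le 1-\tfrac{1}{16},
\end{equation*}
where the last step uses~\ref{thm:RPR} and the absence of $\SU(2^n)$-invariant vectors. Summing over $i$ gives $\langle u|\calH_U|u\rangle \ge (n-1)/16$.

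\emph{Gluing.} Decompose an arbitrary unit vector as $\psi = \alpha u + \beta w$ with $u \in \operatorname{im}\Pi_{\Cl}$, $w\perp \operatorname{im}\Pi_{\Cl}$, both unit vectors, and $|\alpha|^2+|\beta|^2=1$. If $|\beta|^2 \ge \delta$, the first bound gives $\langle\psi|\calH_U|\psi\rangle \ge (n-1)\delta/5$. Otherwise, use the triangle inequality for $\norm{\calH_U^{1/2}\,\cdot\,}$ together with $\norm{\calH_U}\le n-1$:
\begin{equation*}
    \langle \psi|\calH_U|\psi\rangle^{1/2} \ge |\alpha|\,\norm{\calH_U^{1/2}u} - |\beta|\,\norm{\calH_U^{1/2}w} \ge \sqrt{n-1}\,\Bigl(\tfrac{|\alpha|}{4} - |\beta|\Bigr).
\end{equation*}
Taking $\delta$ a small absolute constant, for instance $\delta = 1/50$, makes both cases give $\langle\psi|\calH_U|\psi\rangle \ge c(n-1)$ with an explicit $c$. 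Dividing by $n-1$ gives~\ref{eq:additive1DRandomUnitary}, with a constant presumably better than $1/(9\cdot 2^{45})$.

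\emph{Main obstacle.} The delicate step is the Clifford-averaging identity in the second partial bound. It requires checking that $C\sigma^z_iC^{-1}$ for $C \sim \Cl(n)$ is uniform over $\pm\calP_n$, and that $\Pi(\rho|\RR/2\pi\ZZ,-P)=\Pi(\rho|\RR/2\pi\ZZ,P)$. Both are elementary. The gluing avoids any loss in $n$ precisely because both partial bounds already scale like $n-1$.
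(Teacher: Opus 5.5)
As written, your argument proves a false statement: the first partial bound misreads Theorem~\ref{thm:additiveClifford} by a factor of $n-1$. That theorem gives the \emph{averaged} Clifford walk a gap of $\frac{1}{5(n-1)}$. For your unnormalized operator this means $\calH_C\succeq\frac15(\one-\Pi_{\Cl})$, not $\frac{n-1}{5}(\one-\Pi_{\Cl})$. Likewise, the inequality to be proved is equivalent to $\calH_U\succeq\frac{1}{9\cdot 2^{45}}\one$, not to $\calH_U\succeq c(n-1)\one$.

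The latter is false. An $n$-independent gap for the averaged walk is impossible, since one step touches qubit~$1$ only with probability $1/(n-1)$. Concretely, take the adjoint representation $X\mapsto UXU^\dagger$ on traceless operators, which is a nontrivial irrep. The vector $X=\sigma^z_1$ is killed by $\Pi_1^U$ and $\Pi_1^C$, because both twirls on qubits $1,2$ are $1$-designs. It is fixed by $\Pi_i^U$ and $\Pi_i^C$ for $i\ge 2$. Hence $\calH_U X=\calH_C X=X$. Meanwhile $\Pi_{\Cl}=0$ there, since a Clifford-invariant operator commutes with all Paulis. This contradicts your first partial bound for $n\ge 7$ and your conclusion for large $n$. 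Your closing remark is also off: only the bound on Clifford-invariant vectors scales like $n-1$; off that subspace the bound is $O(1)$.

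The repair is immediate, and the rest of your argument stands. The twirling step is correct; it is the same identity the paper writes as $ABA=A\,M(\nu_{\mathsf{RPR}},\rho)$. The triangle-inequality step is also correct. With the corrected first bound:
\begin{itemize}
\item the case $\abs{\beta}^2\ge\delta$ gives $\langle\psi|\calH_U|\psi\rangle\ge\delta/5$;
\item the case $\abs{\beta}^2<\delta$ still gives $(n-1)\bigl(\tfrac{\sqrt{1-\delta}}{4}-\sqrt{\delta}\bigr)^2\ge\bigl(\tfrac{\sqrt{1-\delta}}{4}-\sqrt{\delta}\bigr)^2$.
\end{itemize}
With $\delta=1/50$ both are at least $1/250$. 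So $\calH_U\succeq\frac{1}{250}\one$, and the norm in the first inequality is at most $1-\frac{1}{250(n-1)}$. This has exactly the theorem's $1/(n-1)$ scaling, with a much better constant. The all-to-all inequality then follows by the Jensen argument, as you say.

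Once repaired, yours is a genuinely different route from the paper's. The paper combines the same two inputs in the convolution picture. It replaces the exact Clifford twirl $A$ by the moment operator $C$ of $m\ge 2^{10}$ steps of the brickwork Clifford walk, so that $\norm{CBC}<1-2^{-5}$. It then passes through Lemma~\ref{lem:detectability} three times, and each pass loses powers of $m$ and of $2$; that is where $9\cdot 2^{45}$ comes from:
\begin{itemize}
\item first to an average over $2m(n-1)+1$ subgroups;
\item then, after enlarging to $\SU(4)$ and $\SU(2)$ factors, back to a depth-$4m$ brickwork convolution;
\item finally to the bond average.
\end{itemize}
You work directly in the averaged (Hamiltonian) picture with the exact projector $\Pi(\rho|\Cl(n))$. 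You need only the monotonicity relations $\Pi_i^U\preceq\Pi_i^C$ and $\Pi_i^U\preceq\Pi(\rho|\RR/2\pi\ZZ,\sigma^z_i)$ plus a two-subspace estimate. No detectability lemma is needed, and no circuit approximation of the Clifford twirl. Fed into the single application of Lemma~\ref{lem:detectability} in Theorem~\ref{thm:BRUC}, your constant would give a brickwork gap of at least $1/4000$ rather than $2^{-53}$. The paper's route reuses the boosting template of its references; yours is shorter and quantitatively far stronger.
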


\begin{proof}
    As in~\ref{eq:JensenForAllToAll},
    the left-hand side of~\ref{eq:all-to-all-RandomUnitary}
    is upper-bounded by that of~\ref{eq:additive1DRandomUnitary}.
    We do not repeat the argument here.

    Recall that the distribution~$\nu_{\mathsf{RPR}}$ of random Pauli rotations $e^{\ii \theta P}$ 
    with $\theta \sim \RR/2\pi\ZZ$ and $P \sim \calP_n$
    is the distribution
    \begin{equation}
        U e^{\ii \theta \sigma^z_1} U^\dagger \quad \text{ where } 
        U \sim \Cl(n)\,\, ,\,\, \theta \sim \RR / 2\pi\ZZ \, ,
    \end{equation}
    where $\sigma^z_1$ is the Pauli~$\sigma^z$ on qubit~$1$,
    because $U \sigma^z_1 U^\dagger$ with $U \sim \Cl(n)$
    is uniformly distributed on $\langle \pm \one, \calP_n \rangle$.
    We therefore have an equation of projectors:
    \begin{equation}
        \left(\underbrace{\avg_{U \sim \Cl(n)} \rho(U)}_A \right) 
        \left(\underbrace{\avg_{\theta \sim \RR/2\pi\ZZ} \rho(e^{\ii \theta \sigma^z_1})}_B \right)
        \left(\underbrace{\avg_{U \sim \Cl(n)} \rho(U)}_A \right) 
        =
        \left(\underbrace{\avg_{U \sim \Cl(n)} \rho(U)}_A \right) 
        \left(\avg_{W \sim \nu_{\mathsf{RPR}}} \rho(W) \right) .       
    \end{equation}
    Taking the norm, we see that
    \begin{align}
        \nonumber\norm*{
            ABA
        }
        &=
        \norm[\Big]{
            A \cdot \avg_{W \sim \nu_{\mathsf{RPR}}} \rho(W)
        }\\
        \nonumber&\le
        \norm{A} \cdot 
        \norm[\Big]{
            \avg_{W \sim \nu_{\mathsf{RPR}}} \rho(W)
        }\\
        &\le
        \norm[\Big]{
            \avg_{W \sim \nu_{\mathsf{RPR}}} \rho(W)
        } & \text{($A$ is a projector)}\\
        \nonumber&< 1 - \frac{1}{2^4} &\text{(by~\ref{thm:RPR})}\, .
    \end{align}
    Recall that $\nu_{\mathsf{BRCC}}$ denotes the probability distribution on~$\Cl(n)$
    induced by the Brickwork Random Clifford Circuit of depth~$2$.
    From~\ref{thm:BRCC}, we set a (non-optimal, comfortable) constant
    \begin{equation}
        m \ge 2^{10}
    \end{equation}
    to use~\ref{remark:RandomWalks} and see
    \begin{equation}
        \norm*{
            \underbrace{\avg_{U \sim \nu_{\mathsf{BRCC}}^{*m}} \rho(U)}_C - \underbrace{\avg_{W \sim \Cl(n)} \rho(W)}_A
        }
        \le \Big(1 - \Delta(\nu_{\mathsf{BRCC}}, \Cl(n)) \Big)^m < \frac 1 {2^6} \, .
    \end{equation}
    We are going to consider a distribution
    \begin{equation}
        \convo = \nu_{\mathsf{BRCC}}^{*m} * \beta * \nu_{\mathsf{BRCC}}^{*m}
    \end{equation}
    on~$\SU(2^n)$
    where $\beta$ is the Haar probability measure on~$H_0 = \{ e^{\ii \theta \sigma^z_1} : \theta \in \RR/2\pi\ZZ\}$.
    Using triangle inequality and the fact that~$\norm{A} = \norm{C} = 1$,
    we have
    \begin{align}
        \nonumber
        \norm[\big]{\avg_{R \sim \convo} \rho(R)}
        &=\norm*{
            \left(\avg_{U \sim \nu_{\mathsf{BRCC}}^{*m}} \rho(U) \right)
            \left(\avg_{\theta \sim \RR/2\pi\ZZ} \rho(e^{\ii \theta \sigma^z_1}) \right)
            \left(\avg_{V \sim \nu_{\mathsf{BRCC}}^{*m}} \rho(V) \right)
        }\\
        \nonumber&= \norm{CBC}\\
        &=\norm{ABA - AB(A-C) - (A-C)BC} \\
        \nonumber &\le
        \norm*{
            ABA
        }
        +
        2 \norm*{
            A-C
        }\\
        \nonumber
        &< 1 - \frac 1 {2^4} + \frac 2 {2^6} = 1 - \frac{1}{2^5}.
    \end{align}
    We apply~\ref{lem:detectability}
    to $L = 2m(n -1) + 1$ compact subgroups of~$\SU(2^n)$.
    Here, all but one subgroup~$H_0$
    are 2-qubit Clifford groups $\cong \Cl(2)$
    and there are $2m$ copies of~$\Cl(2)_{i,i+1}$ for each $i = 1, 2, \ldots $, \mbox{$n-1$}.
    We conclude that for a convex combination
    \begin{equation}
        \Sigma = \frac{
            \mu(H_0) + 2m \sum_{i=1}^{n-1} \mu(\Cl_{i,i+1})
        }{1 + 2m (n - 1)} 
        \, ,
    \end{equation}
    we have
    \begin{equation}
        \norm*{
            \avg_{U \sim \Sigma} \rho(U)
        }
        \le 1 - \frac{1}{2^8 m n} \, .
    \end{equation}
    Since $\Pi(\rho|H') \preceq \Pi(\rho|H)$ for any $H' \ge H$,
    we enlarge the subgroups
    to the unitary groups on their support
    \begin{equation}
        \Sigma' = 
        \frac{
            \mu(\SU(2)_1) + 2m \sum_{i=1}^{n-1} \mu(\SU(4)_{i,i+1})
        }{2m(n - 1) + 1}
    \end{equation}
    without increasing the norm~\cite[Lemma~2.21]{chen2025incompressibility} 
    \begin{align}
        \norm*{
            \avg_{V \sim \Sigma'} \rho(V)
        }
        \le
        \norm*{
            \avg_{U \sim \Sigma} \rho(U)
        }
        \le 1 - \frac{1}{2^8 m n} \, .
    \end{align}
    
    Next, we apply~\ref{lem:detectability} again.
    The subgroup $U(2)_1$ has $2m$ other noncommuting subgroups,
    and each copy of~$\SU(4)_{i,i+1}$ has at most
    $(2m-1)+ 2 \cdot 2m + 1 = 6m$ other noncommuting subgroups.
    So, we set $\ell = 6m$.
    A random unitary circuit of depth~$2m \cdot 2+1$ consisting of $U(2)$ and $U(4)$ gates
    now has a spectral gap lower bound
    $2^{-2} L \ell^{-2} \Delta(\Sigma') > (9 \cdot 2^{12} m^2)^{-1}$ .
    Here, all $U(4)$ gates are in a brickwork layout.
    The $U(2)$ gate projector can be absorbed into one of the $U(4)$ gate projectors,
    so we may ignore it.
    
    We conclude that the Brickwork Random Unitary Circuit of depth~$4m$,
    whose distribution is precisely~$\nu_{\mathsf{BRUC}}^{*2m}$,
    has a spectral gap lower bound
    \begin{equation}
        \Delta(\nu_{\mathsf{BRUC}}^{*2m}, \SU(2^n)) \ge \frac{2m(n-1) + 1}{9\cdot 2^{12} m^3 n} > \frac{1}{9\cdot 2^{12} m^2}
    \end{equation}
    for any~$m \ge 2^{10}$.

    We use~\ref{lem:detectability} one final time
    to conclude that
    \begin{equation}
        \norm*{
            \avg_{V \sim \Sigma''} \rho(V)
        }
        \le 1 - \frac{1}{9 \cdot 2^{45} (n-1)},
    \end{equation}
    where
    \begin{equation}
        \Sigma'' = \frac{
            2m \sum_{i=1}^{n-1} \mu(\Cl_{i,i+1})
        }{2m(n - 1)}
        = 
        \frac{
            \sum_{i=1}^{n-1} \mu(\Cl_{i,i+1})
        }{n - 1}. \qedhere
    \end{equation}
\end{proof}

\begin{theorem}[implying~\ref{thm:main}(3)]\label{thm:BRUC}
The spectral gap of the Brickwork Random Unitary Circuit of depth~$2$
(on the one-dimensional chain of $n$ qubits)
on $\SU(2^n)$ is
    \begin{align*}
        \Delta(\nu_{\mathsf{BRUC}},\SU(2^n)) > 2^{-53}.
    \end{align*}
\end{theorem}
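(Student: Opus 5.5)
Since $\nu_{\mathsf{BRUC}} = \convo$ is the convolution of the Haar measures of the $L = n-1$ subgroups $H_i = \SU(4)_{i,i+1}$ (in brickwork order), we can read the result off \ref{thm:AdditiveRandomUnitary} by one more application of the detectability lemma, \ref{lem:detectability}.

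The plan is as follows. First, fix any representation $\rho$ of $\SU(2^n)$. Decompose it into irreps. On the trivial irreps both moment operators agree with $\Pi(\rho|\SU(2^n))$, so it suffices to treat nontrivial irreps. On such an irrep, \eqref{eq:additive1DRandomUnitary} gives
\[
\Delta(\Sigma, \rho) \ge \frac{1}{9 \cdot 2^{45} (n-1)},
\qquad
\Sigma = \frac{1}{n-1}\sum_{i=1}^{n-1} \mu(\SU(4)_{i,i+1}).
\]
The gap in a direct sum is the minimum over summands, so this bound holds for all $\rho$.

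Second, record the commutation structure for \ref{lem:detectability}. Each $\SU(4)_{i,i+1}$ commutes element-wise with every $\SU(4)_{j,j+1}$ with $|i-j| \ge 2$. Hence it fails to commute with at most $\ell = 2$ others; we need $n \ge 3$ so that $\ell \ge 1$ is meaningful. The convolution in \ref{lem:detectability} is taken in the order $H_1 * H_2 * \cdots * H_{n-1}$, whereas $\nu_{\mathsf{BRUC}}$ is ordered by odd layer then even layer. Within each layer the subgroups commute, so I will either note that the lemma's proof is insensitive to ordering or state the lemma for any ordering; the cited proofs (Anshu--Arad--Vidick style) handle arbitrary orderings. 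This ordering point is the only real subtlety. The fallback is that the odd layer and the even layer are each a product of commuting projectors, so the detectability-lemma argument of~\cite{anshu2016simple} applies directly to a product of two layers.

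Third, combine. \ref{lem:detectability} gives
\[
\Delta(\nu_{\mathsf{BRUC}}, \rho) \ge \frac14 \min\!\left(1, \frac{n-1}{4} \cdot \frac{1}{9 \cdot 2^{45}(n-1)}\right) = \frac{1}{9 \cdot 2^{49}} > 2^{-53},
\]
using $9 < 16$. The small cases $n = 1$ and $n = 2$ are handled separately. For $n = 2$ the walk is a single Haar-random $\SU(4)$ gate, so the gap is $1$; $n = 1$ is degenerate (no gates, trivial group). Taking the infimum over $\rho$ proves the theorem. I expect no real obstacle beyond the ordering remark in the second step; the arithmetic is immediate.
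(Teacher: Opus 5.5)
Your proposal is correct and is essentially the paper's proof: a single application of \ref{lem:detectability} with $L=n-1$ and $\ell=2$ to \ref{eq:additive1DRandomUnitary}, which yields $\frac{1}{9\cdot 2^{49}}>2^{-53}$. The ordering issue you flag is not a real subtlety, because the lemma allows any indexing of the $H_i$ and its hypotheses do not depend on the labels, so you can simply list the odd bonds before the even ones. Separately, for $n=1$ the group $\SU(2)$ is not trivial; the walk is the identity there, so the statement tacitly assumes $n\ge 2$.
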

\begin{proof}
    Apply~\ref{lem:detectability} with $L = n-1$ and $\ell=2$ to~\ref{eq:additive1DRandomUnitary}.
\end{proof}

{\bf Acknowledgments.}
This material is based upon work supported by the U.S. Department of
Energy, Office of Science, Office of Advanced Scientific Computing Research, Department of
Energy Computational Science Graduate Fellowship under Award Number DE-SC0026073 (T.B.).
\\

{\bf Disclaimer.}
This report was prepared as an account of work sponsored by an agency of the
United States Government. Neither the United States Government nor any agency thereof, nor
any of their employees, makes any warranty, express or implied, or assumes any legal liability
or responsibility for the accuracy, completeness, or usefulness of any information, apparatus,
product, or process disclosed, or represents that its use would not infringe privately owned
rights. Reference herein to any specific commercial product, process, or service by trade name,
trademark, manufacturer, or otherwise does not necessarily constitute or imply its
endorsement, recommendation, or favoring by the United States Government or any agency
thereof. The views and opinions of authors expressed herein do not necessarily state or reflect
those of the United States Government or any agency thereof.

\bibliographystyle{alphaurl}
\bibliography{walkrefs}

\newcommand{\etalchar}[1]{$^{#1}$}
\begin{thebibliography}{BCHJ{\etalchar{+}}21}

\bibitem[AALV09]{aharonov2009detectability}
Dorit Aharonov, Itai Arad, Zeph Landau, and Umesh Vazirani.
\newblock The detectability lemma and quantum gap amplification.
\newblock In {\em Proceedings of the forty-first annual ACM symposium on Theory of computing}, pages 417--426, 2009.
\newblock \href {https://arxiv.org/abs/0811.3412} {\path{arXiv:0811.3412}}, \href {https://doi.org/10.1145/1536414.1536472} {\path{doi:10.1145/1536414.1536472}}.

\bibitem[AAV16]{anshu2016simple}
Anurag Anshu, Itai Arad, and Thomas Vidick.
\newblock Simple proof of the detectability lemma and spectral gap amplification.
\newblock {\em Physical Review B}, 93(20):205142, 2016.

\bibitem[AKLT88]{affleck1988valence}
Ian Affleck, Tom Kennedy, Elliott~H Lieb, and Hal Tasaki.
\newblock Valence bond ground states in isotropic quantum antiferromagnets.
\newblock {\em Communications in Mathematical Physics}, 115(3):477--528, 1988.

\bibitem[Ans20]{anshu2020improved}
Anurag Anshu.
\newblock Improved local spectral gap thresholds for lattices of finite size.
\newblock {\em Physical Review B}, 101(16):165104, 2020.

\bibitem[AP26]{alon2026aldous}
Gil Alon and Doron Puder.
\newblock Aldous-type spectral gaps in unitary groups.
\newblock {\em arXiv preprint arXiv:2603.00353}, 2026.

\bibitem[BCHJ{\etalchar{+}}21]{brandao2021models}
Fernando~GSL Brand{\~a}o, Wissam Chemissany, Nicholas Hunter-Jones, Richard Kueng, and John Preskill.
\newblock Models of quantum complexity growth.
\newblock {\em PRX Quantum}, 2(3):030316, 2021.

\bibitem[BEL{\etalchar{+}}25]{bittel2025complete}
Lennart Bittel, Jens Eisert, Lorenzo Leone, Antonio~A Mele, and Salvatore~FE Oliviero.
\newblock A complete theory of the clifford commutant.
\newblock {\em arXiv preprint arXiv:2504.12263}, 2025.

\bibitem[BG12]{bourgain2012spectral}
Jean Bourgain and Alex Gamburd.
\newblock A spectral gap theorem in $su(d)$.
\newblock {\em Journal of the European Mathematical Society}, 14(5):1455--1511, 2012.
\newblock \href {https://arxiv.org/abs/1108.6264} {\path{arXiv:1108.6264}}, \href {https://doi.org/10.4171/JEMS/337} {\path{doi:10.4171/JEMS/337}}.

\bibitem[BHH16]{brandao2016local}
Fernando~GSL Brand{\~a}o, Aram~W Harrow, and Micha{\l} Horodecki.
\newblock Local random quantum circuits are approximate polynomial-designs.
\newblock {\em Communications in Mathematical Physics}, 346:397--434, 2016.
\newblock \href {https://arxiv.org/abs/1208.0692} {\path{arXiv:1208.0692}}, \href {https://doi.org/10.1007/s00220-016-2706-8} {\path{doi:10.1007/s00220-016-2706-8}}.

\bibitem[BS18]{brown2018second}
Adam~R Brown and Leonard Susskind.
\newblock Second law of quantum complexity.
\newblock {\em Physical Review D}, 97(8):086015, 2018.

\bibitem[BtD85]{BrockertomDieck}
Theodore Br\"ocker and Tammo tom Dieck.
\newblock {\em Representations of compact {Lie} groups}.
\newblock Graduate Texts in Mathematics. Springer, 1985.

\bibitem[BV10]{brown2010convergence}
Winton~G Brown and Lorenza Viola.
\newblock Convergence rates for arbitrary statistical moments of random quantum circuits.
\newblock {\em Physical review letters}, 104(25):250501, 2010.
\newblock \href {https://arxiv.org/abs/0910.0913} {\path{arXiv:0910.0913}}, \href {https://doi.org/10.1103/PhysRevLett.104.250501} {\path{doi:10.1103/PhysRevLett.104.250501}}.

\bibitem[Cap08]{caputo2008spectral}
Pietro Caputo.
\newblock On the spectral gap of the kac walk and other binary collision processes.
\newblock {\em arXiv preprint arXiv:0807.3415}, 2008.

\bibitem[CCL03]{carlen2003determination}
EA~Carlen, MC~Carvalho, and M~Loss.
\newblock Determination of the spectral gap for kac's master equation and related stochastic evolution.
\newblock {\em Acta Mathematica}, 191(1):1--54, 2003.

\bibitem[CDX{\etalchar{+}}24]{chen2024efficient}
Chi-Fang Chen, Jordan Docter, Michelle Xu, Adam Bouland, Fernando~GSL Brand{\~a}o, and Patrick Hayden.
\newblock Efficient unitary designs from random sums and permutations.
\newblock In {\em 2024 IEEE 65th Annual Symposium on Foundations of Computer Science (FOCS)}, pages 476--484. IEEE, 2024.

\bibitem[CHH{\etalchar{+}}25]{chen2025incompressibility}
Chi-Fang Chen, Jeongwan Haah, Jonas Haferkamp, Yunchao Liu, Tony Metger, and Xinyu Tan.
\newblock Incompressibility and spectral gaps of random circuits.
\newblock In {\em 2025 IEEE 66th Annual Symposium on Foundations of Computer Science (FOCS)}, pages 1304--1312. IEEE, 2025.

\bibitem[Chr06]{christandl2006structure}
Matthias Christandl.
\newblock The structure of bipartite quantum states-insights from group theory and cryptography.
\newblock {\em Ph. D. Thesis}, 2006.

\bibitem[CLR10]{caputo2010proof}
Pietro Caputo, Thomas Liggett, and Thomas Richthammer.
\newblock Proof of aldous’ spectral gap conjecture.
\newblock {\em Journal of the American Mathematical Society}, 23(3):831--851, 2010.

\bibitem[DS81]{diaconis1981generating}
Persi Diaconis and Mehrdad Shahshahani.
\newblock Generating a random permutation with random transpositions.
\newblock {\em Zeitschrift f{\"u}r Wahrscheinlichkeitstheorie und verwandte Gebiete}, 57(2):159--179, 1981.

\bibitem[FH13]{fulton2013representation}
William Fulton and Joe Harris.
\newblock {\em Representation theory: a first course}.
\newblock Springer Science \& Business Media, 2013.

\bibitem[FKNV23]{Fisher_2023}
Matthew~P.A. Fisher, Vedika Khemani, Adam Nahum, and Sagar Vijay.
\newblock Random quantum circuits.
\newblock {\em Annual Review of Condensed Matter Physics}, 14(1):335–379, March 2023.
\newblock URL: \url{http://dx.doi.org/10.1146/annurev-conmatphys-031720-030658}, \href {https://doi.org/10.1146/annurev-conmatphys-031720-030658} {\path{doi:10.1146/annurev-conmatphys-031720-030658}}.

\bibitem[Gao15]{gao2015quantum}
Jingliang Gao.
\newblock Quantum union bounds for sequential projective measurements.
\newblock {\em Physical Review A}, 92(5):052331, 2015.
\newblock \href {https://arxiv.org/abs/1410.5688} {\path{arXiv:1410.5688}}, \href {https://doi.org/10.1103/PhysRevA.92.052331} {\path{doi:10.1103/PhysRevA.92.052331}}.

\bibitem[GM16]{gosset2016local}
David Gosset and Evgeny Mozgunov.
\newblock Local gap threshold for frustration-free spin systems.
\newblock {\em Journal of Mathematical Physics}, 57(9), 2016.

\bibitem[Gro26]{GAP4}
The~GAP Group.
\newblock {GAP -- Groups, Algorithms, and Programming, Version 4.16.0}, 2026.
\newblock URL: \url{https://www.gap-system.org}.

\bibitem[Haa17]{Haah_2017}
Jeongwan Haah.
\newblock Algebraic methods for quantum codes on lattices.
\newblock {\em Revista Colombiana de Matemáticas}, 50(2):299, January 2017.
\newblock URL: \url{http://dx.doi.org/10.15446/recolma.v50n2.62214}, \href {https://doi.org/10.15446/recolma.v50n2.62214} {\path{doi:10.15446/recolma.v50n2.62214}}.

\bibitem[Haf22]{haferkamp2022random}
Jonas Haferkamp.
\newblock Random quantum circuits are approximate unitary $t$-designs in depth ${O}(nt^{5+o(1)})$.
\newblock {\em Quantum}, 6:795, 2022.

\bibitem[HHJ21]{haferkamp2021improved}
Jonas Haferkamp and Nicholas Hunter-Jones.
\newblock Improved spectral gaps for random quantum circuits: Large local dimensions and all-to-all interactions.
\newblock {\em Physical Review A}, 104(2):022417, 2021.

\bibitem[HJ19]{hunter2019unitary}
Nicholas Hunter-Jones.
\newblock Unitary designs from statistical mechanics in random quantum circuits.
\newblock {\em arXiv preprint arXiv:1905.12053}, 2019.

\bibitem[HJL25]{hunter2025two}
Nicholas Hunter-Jones and Marius Lemm.
\newblock Two classes of quantum spin systems that are gapped on any bounded-degree graph.
\newblock {\em arXiv preprint arXiv:2509.22438}, 2025.

\bibitem[HLT25]{haah2025efficient}
Jeongwan Haah, Yunchao Liu, and Xinyu Tan.
\newblock Efficient approximate unitary designs from random pauli rotations.
\newblock {\em Communications in Mathematical Physics}, 406(12):1--24, 2025.

\bibitem[Isa76]{Isaacs1976}
I.~Martin Isaacs.
\newblock {\em Character Theory of Finite Groups}.
\newblock AMS Chelsea Publishing, 1976.

\bibitem[JBS23]{Jian2023}
Shao-Kai Jian, Gregory Bentsen, and Brian Swingle.
\newblock Linear growth of circuit complexity from brownian dynamics.
\newblock {\em Journal of High Energy Physics}, 2023(8):190, Aug 2023.
\newblock \href {https://arxiv.org/abs/2206.14205} {\path{arXiv:2206.14205}}, \href {https://doi.org/10.1007/JHEP08(2023)190} {\path{doi:10.1007/JHEP08(2023)190}}.

\bibitem[Kac56]{kac1956foundations}
M~Kac.
\newblock Foundations of kinetic theory.
\newblock In {\em Proc. Third Berkely Symp. on Math. Stat. and Prob}, volume~3, pages 171--197, 1956.

\bibitem[Kna88]{knabe1988energy}
Stefan Knabe.
\newblock Energy gaps and elementary excitations for certain vbs-quantum antiferromagnets.
\newblock {\em Journal of statistical physics}, 52(3):627--638, 1988.

\bibitem[Koi89]{koike1989decomposition}
Kazuhiko Koike.
\newblock On the decomposition of tensor products of the representations of the classical groups: by means of the universal characters.
\newblock {\em Advances in Mathematics}, 74(1):57--86, 1989.

\bibitem[Lem19]{lemm2019finite}
Marius Lemm.
\newblock Finite-size criteria for spectral gaps in $ d $-dimensional quantum spin systems.
\newblock {\em arXiv preprint arXiv:1902.07141}, 2019.

\bibitem[LL26]{laracuente2026approximate}
Nicholas LaRacuente and Felix Leditzky.
\newblock Approximate unitary k-designs from shallow, low-communication circuits.
\newblock {\em Communications in Mathematical Physics}, 407(3):51, 2026.

\bibitem[LM19]{lemm2019spectral}
Marius Lemm and Evgeny Mozgunov.
\newblock Spectral gaps of frustration-free spin systems with boundary.
\newblock {\em Journal of Mathematical Physics}, 60(5), 2019.

\bibitem[LX22]{lemm2022quantitatively}
Marius Lemm and David Xiang.
\newblock Quantitatively improved finite-size criteria for spectral gaps.
\newblock {\em Journal of Physics A: Mathematical and Theoretical}, 55(29):295203, 2022.

\bibitem[Mas03]{maslen2003eigenvalues}
David~K Maslen.
\newblock The eigenvalues of kac's master equation.
\newblock {\em Mathematische Zeitschrift}, 243(2):291--331, 2003.

\bibitem[MHJ23]{mittal2023local}
Shivan Mittal and Nicholas Hunter-Jones.
\newblock Local random quantum circuits form approximate designs on arbitrary architectures.
\newblock {\em arXiv preprint arXiv:2310.19355}, 2023.

\bibitem[MPSY24]{metger2024simple}
Tony Metger, Alexander Poremba, Makrand Sinha, and Henry Yuen.
\newblock Simple constructions of linear-depth t-designs and pseudorandom unitaries.
\newblock In {\em 2024 IEEE 65th Annual Symposium on Foundations of Computer Science (FOCS)}, pages 485--492. IEEE, 2024.

\bibitem[OV22]{o2022quantum}
Ryan O'Donnell and Ramgopal Venkateswaran.
\newblock The quantum union bound made easy.
\newblock In {\em Symposium on Simplicity in Algorithms (SOSA)}, pages 314--320. SIAM, 2022.

\bibitem[Sch90]{SCHNEIDER1990601}
Gerhard~J.A. Schneider.
\newblock Dixon's character table algorithm revisited.
\newblock {\em Journal of Symbolic Computation}, 9(5):601--606, 1990.
\newblock URL: \url{https://www.sciencedirect.com/science/article/pii/S0747717108800776}, \href {https://doi.org/10.1016/S0747-7171(08)80077-6} {\path{doi:10.1016/S0747-7171(08)80077-6}}.

\bibitem[SHH25]{schuster2025random}
Thomas Schuster, Jonas Haferkamp, and Hsin-Yuan Huang.
\newblock Random unitaries in extremely low depth.
\newblock {\em Science}, 389(6755):92--96, 2025.

\bibitem[WWT{\etalchar{+}}]{atlas}
Robert Wilson, Peter Walsh, Jonathan Tripp, Ibrahim Suleiman, Richard Parker, Simon Norton, Simon Nickerson, Steve Linton, John Bray, , and Rachel Abbott.
\newblock Atlas of finite group representations.
\newblock URL: \url{https://brauer.maths.qmul.ac.uk/Atlas/v3/group/S62/}.

\end{thebibliography}

\end{document}